\newtheorem{theo}{Theorem}[section]
\newtheorem{prop}[theo]{Proposition}
\newtheorem{lemm}[theo]{Lemma}
\newtheorem{defi}[theo]{Definition}
\theoremstyle{definition}
\newtheorem{rema}[theo]{Remark}
\newcommand{\bel}{\begin{equation} \label}
\newcommand{\ee}{\end{equation}}
\def\beq{\begin{equation}}
\def\eeq{\end{equation}}
\newcommand{\bea}{\begin{eqnarray}}
\newcommand{\eea}{\end{eqnarray}}
\newcommand{\beas}{\begin{eqnarray*}}
\newcommand{\eeas}{\end{eqnarray*}}
\begin{document}

\begin{center}
{\Large \bf Spectral monodromy of non selfadjoint operators}

\medskip

\today
\end{center}

\medskip

\begin{center}
{\sc Quang Sang Phan}\\
\end{center}


%



\begin{abstract}  ~\\
     We propose to build in this paper a combinatorial invariant, called the "spectral monodromy" from the spectrum of a single (non-selfadjoint) $ h $-pseudodifferential operator with two degrees of freedom in the semi-classical limit. \\
     Our inspiration comes from the quantum monodromy defined for the joint spectrum of an integrable system of $ n $ commuting selfadjoint $ h $-pseudodifferential operators, given by S. Vu Ngoc. \\
     The first simple case that we treat in this work is a normal operator. In this case, the discrete spectrum can be identified with the joint spectrum of an integrable quantum system. \\
     The second more complex case we propose is a small perturbation of a selfadjoint operator with a classical integrability property.
     We show that the discrete spectrum (in a small band around the real axis) also has a combinatorial monodromy.
     The difficulty here is that we do not know the description of the spectrum everywhere, but only in a Cantor type set. In addition, we also show that
     the monodromy can be identified with the classical monodromy (which is defined by J. Duistermaat). These are the main results of this article.

 ~\\
     {\bf  Keywords:} Non-selfadjoint, integrable system, spectral analysis, pseudo-differential operators, Birkhoff
normal form, asymptotic spectral
\end{abstract}


\section{Introduction}

\subsection{General framework}
            This paper aims at understanding the structure of the spectrum of some classes of non-selfadjoint operators in the semi classical limit.
             It is a quantum problem that we treat with the help of semi-classical techniques
             combined with the general spectral theory of pseudo-differential operators.
            We will also make the link with classical results that illuminate the initial quantum problem.

\subsection{Monodromy of $ h $-pseudo-differential non-selfadjoint operators}

More explicitly, in this paper, we propose to build a new  characteristic objet of the structure of the spectrum of non-selfadjoint $ h $-pseudo-differential operators in the semi-classical limit.

Our inspiration comes from quantum monodromy, which is defined for the joint spectrum (see the definition
\ref{jspec}) of a system of $ n $ $h-$ pseudo-differential operators that commute (i.e a completely integrable
quantum system). This is a quantum invariant given by San Vu Ngoc  \cite{Vu-Ngoc99}(or
\cite{Vu-Ngoc.S01}).  \\
Under certain conditions, the joint spectrum on a domain $ U $ of regular values of the map of principal
symbols is an asymptotic affine lattice in the sense that there is an invertible symbol, denoted $ f_ {\alpha} $,
from any small ball $ B_ {\alpha} \subset U $ in $ \mathbb R ^ n $ that sends the joint spectrum to $ \mathbb
Z ^ n $ modulo $\mathcal O(h^\infty)$ (a result of Charbonnel \cite{Charbonnel88}). These $ (f_ {\alpha}, B_
{\alpha}) $ are considered as local charts of $ U $ whose transition functions, denoted by $ A_ {\alpha  \beta}
$, are in the integer affine group $ GA (n, \mathbb Z) $. The quantum monodromy is defined as the $
1$-cocycle $ \{A_ {\alpha \beta} \} $ modulo-coboundary in the \v{C}ech cohomology
               $\check{H}^1(U,GA(n, \mathbb Z) )$. \\
For details of this monodromy, we can see the article \cite{Vu-Ngoc99} or section \ref{qu mono} of this paper.

            Since this work, a mysterious question remains open:
              Can we define (and detect) such an invariant for a single $h$-pseudo-differential operator?
              If this happens, we will call it the "spectral monodromy".

        From a geometrical point of view, since the joint spectrum of a quantum integrable system
               is an asymptotic affine lattice,
        if one realizes the parallel transport on the lattice of a basic rectangle with a vertex $ c $ along a some closed path $\gamma_c$ (of base $c$) and returns to the starting point, then the initial rectangle can become a different rectangle (see figure below).
    It is the existence of quantum monodromy. Contrariwise, by the spectrum of a selfadjoint operator being contained in a straight line (real axis), it seems impossible to define such a parallel transport for a single operator. It is not known how to define a monodromy in this case.  \ \

\begin{figure}
\begin{center}
\includegraphics [height=90mm,angle=0]{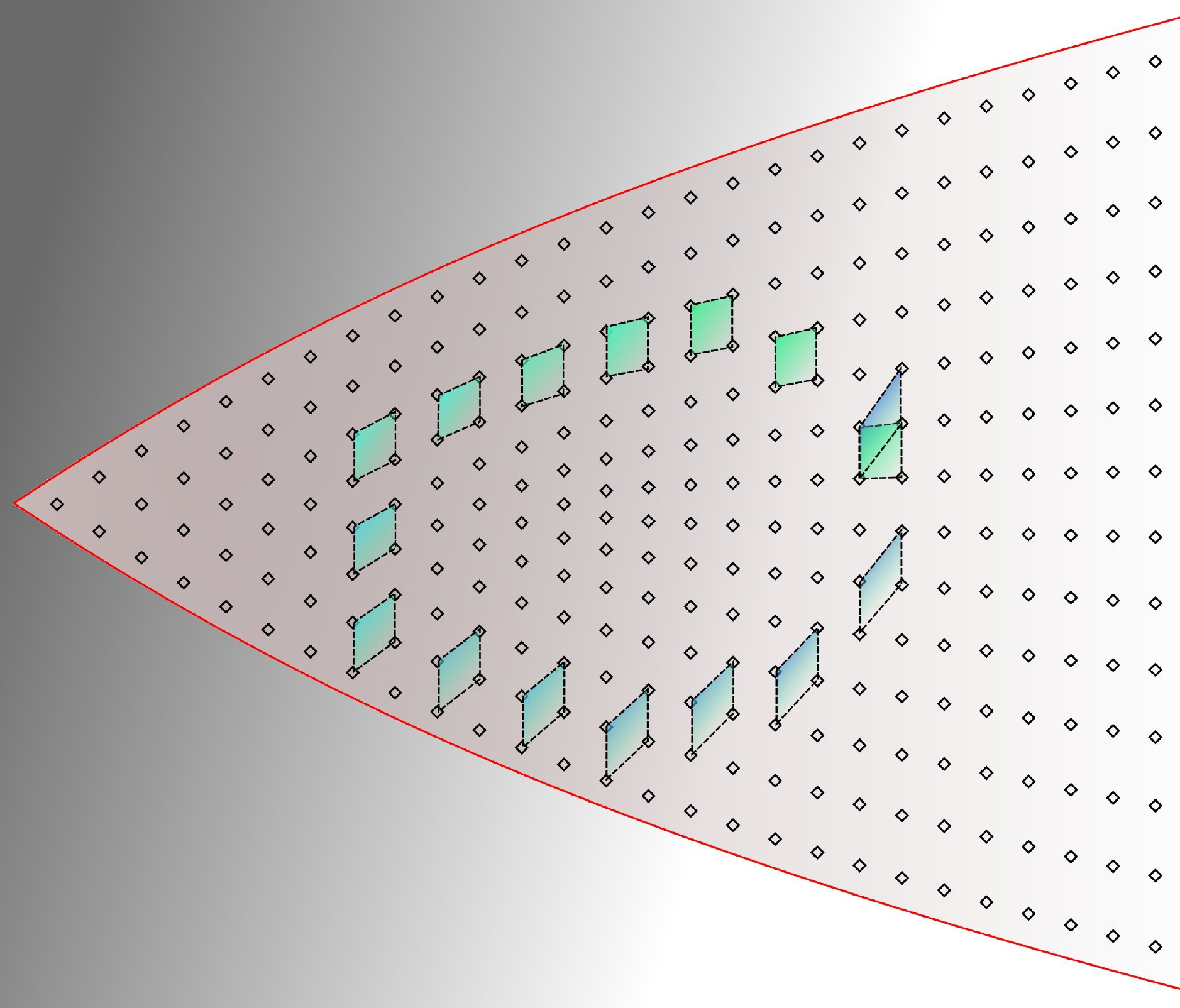}
\caption{Joint spectrum of the spherical pendulum with monodromy (Image by S. Vu Ngoc) }
\end{center}
\end{figure}

 ~~\\
However, if we perturb a self-adjoint operator by a non-symmetric term, the spectrum becomes complex,
and we may hope to find a geometric structure (lattice, monodromy ...). We propose in this paper to apply
this idea to certain classes of  $ h $-pseudo-differential operators of two
              degrees of freedom.

\subsubsection{The first case}

  The first simple case that we propose is a $ h$-pseudo-differential operator of form $ P_1 + i P_2 $  with two self-adjoint operators $ P_1, P_2 $ that commute.
 This is the form of a normal operator.
 We show in section 3 that the discrete spectrum of $ P_1 + i P_2 $ is identified with the joint spectrum of the integrable quantum system $ (P_1, P_2) $
 (see Theorems \ref{bac cau} and \ref{bac cau 1}).
One can simply define the "affine spectral monodromy" of operator $ P_1 + i P_2$  as the quantum monodromy of the joint spectrum. For details, see section \ref{mono op normal}.
\subsubsection{The second case}
The second case that we study is more complex.
For the quantum monodromy and thus the affine spectral monodromy (in the first case), the quantum integrability condition of $ P_1, P_2 $ is necessary but seems a bit heavy
to have a monodromy
because the quantum monodromy has a relationship with the classical monodromy (which is given by J.Duistermaat \cite{Duis80}) of the integrable classical system $ (p_1, p_2) $, the corresponding principal symbols of $ P_1, P_2 $. It is a result in article \cite{Vu-Ngoc99}.

For this reason, keeping the property of classical integrability, we will propose to consider a small perturbation of a self-adjoint operator of the form $P_\varepsilon:= P_1+ i \varepsilon P_2$ assuming that the principal symbols
 $p_1, p_2 $ commute, $\varepsilon \rightarrow 0$ and in the regime
 $h \ll \varepsilon = \mathcal{O}(h^\delta)$ for $0< \delta <1 $.

Here there is no joint spectrum, so we can not apply the construction of the quantum monodromy. However
with the help of the results of asymptotic spectral theory of M.Hitrik, J.Sjöstrand, S. Vu Ngoc (\cite{Hitrik04},
\cite{Hitrik05}, \cite{Hitrik08} and especially \cite{Hitrik07}) (under conditions detailed in the section 4) by
revisiting the procedure of Birkhoff normal form (section (\ref{sec FNB})), the spectrum of $P_\varepsilon$ is
located in a horizontal band of height $\mathcal{O}(\varepsilon)$ and in this band, we can give the
asymptotic expansion of eigenvalues of $P_\varepsilon$ in some "good rectangles"
$R(\chi_a,\varepsilon,h)$ (see definition \ref{dn bonnes valeurs}) of size $\mathcal{O} ^\delta \times
\mathcal{O }(\varepsilon h^\delta)$ which are associated with Diophantine torus $\Lambda_a$.

There is a correspondence between
  $\lambda \in \sigma(P_\varepsilon) \cap R(\chi_a,\varepsilon,h) $ and $hk$ in a part of
 $h \mathbb Z^2 $ by a diffeomorphism (a micro-chart) of form (see the formula (\ref{new hk})):
\begin{eqnarray}
     f :  R(\chi_a,\varepsilon,h) & \rightarrow & E(a,\varepsilon,h)
  \nonumber \\
    \sigma(P_\varepsilon) \cap R(\chi_a,\varepsilon,h) \ni  \lambda & \mapsto &
  f(\lambda,\varepsilon; h) \in h \mathbb Z^2 +\mathcal O(h^\infty).
\end{eqnarray}
For more details of this idea, see the section \ref{sec op comm}.

Nevertheless all Diophantine tori do not quite fill the phase space (see \cite{Broer10}, \cite{Poschel82}) and despite its
density, is not yet known whether such a expansion holds globally on any small domain of spectral band.
However, we will prove the global existence (for such rectangles) of
the first term of this expansion on any small area in the set of
regular values of the application $(p,\varepsilon q)$.

The spectrum of $P_\varepsilon$ is the model of a particular set
  $\Sigma(\varepsilon, h)$ on a domain $ U(\varepsilon)$ that we define in the section 4 and call "asymptotic pseudo-lattice" ( see definition \ref{pseu-cart})
whose differential transition functions between the adjacent "pseudo-locals charts" are  in the group
$GL(2, \mathbb Z)$ modulo $\mathcal O(\varepsilon, \frac{h}{\varepsilon})$.

This allows us to treat the inverse problem: define a combinatorial invariant (the spectral monodromy) from
the spectrum of $P_\varepsilon$. This is the main result of this paper, presented in section 3.

It would be very interesting to extend these results to the case where $ p $ is a perturbation an integrable system, using the work of Broer, Cusham, Fass\`{o} et Takens \cite{Broer07}.

\section{Affine Spectral Monodromy}
            ~~\\
                The quantum monodromy that is defined for the discrete joint spectrum of a integrable quantum system
                of $ n $ commuting $h-$pseudo-differential operators is completely given by S. Vu Ngoc \cite{Vu-Ngoc99}.

                We propose to define the monodromy for a single $h-$pseudo-differential operator and in this section, we will treat the simple case of a normal operator. To do this, we will give an identification between
                the discrete spectrum of a normal operator and the joint spectrum of an integrable quantum system (theorems \ref{bac cau} and \ref{bac cau 1}).

                First we briefly recall the standard class of  $h$-(Weyl-)pseudo-differential operators which is used through in this article.
                Then we give the results of spectral theory that allow us to define the "affine spectral monodromy" of a pseudo-differential normal operator.

\subsection{Pseudo-differential operators}

                We will work throughout this article with pseudo-differentials operators obtained by the $h-$Weyl-quantization with standard classes of symbols on $M= T^* \mathbb R^n=\mathbb R^{2n}_{(x,\xi)}$.
                These operators admit the standard properties of pseudo differential operators.
                For more details, see the references \cite{Dimas99}, \cite{Robert87}, \cite{Shubin01}.

        \begin{defi} \label{fonc ord}
                        A function $m: \mathbb R^{2n} \rightarrow (0, + \infty)$ is called an order function
                        (or tempered weight in the book of D. Robert \cite{Robert87}) if there are constants  $C,N >0$ such that
                                $$m(X)  \leq C \langle X-Y\rangle^{ N} m(Y), \forall X,Y \in \mathbb R^{2n},$$
        with notation $\langle Z\rangle= (1+ |Z|^2)^{1/2}$ for $Z \in \mathbb R^{2n}$.
        \end{defi}

        One use often the order function  $m(Z) \equiv 1$ or $$m(Z)= \langle Z \rangle ^{l/2}= (1 + |Z|^2 )^{l/2},$$ with a given constant $l \in \mathbb R $.

         \begin{defi}
                        Let $m$ be an order function and $k \in \mathbb R$, we define classes of symbols of $h$-order $k$, $S^k(m)$ (families of functions) of $(a(\cdot;h))_{h \in (0,1]}$ on $\mathbb R^{2n}_{(x,\xi)}$ by
                        \begin{equation}
                                S^k(m)= \{ a \in C^\infty (\mathbb R^{2n})
                                 \mid  \forall \alpha \in \mathbb N ^{2n}, \quad |\partial^\alpha a | \leq  C_\alpha h^k m \} ,
                        \end{equation}
         for some constant $C _\alpha >0$, uniformly in $h \in (0,1]$. \\
         A symbol $a$ is called $\mathcal O(h^\infty)$ if it's in $\cap _{k \in \mathbb R } S^k(m):= S^{\infty}(m) $.
        \end{defi}

        Then $ \Psi^k(m)(M)$ denotes the set of all (in general unbounded) linear operators $A_h$ on $L^2(\mathbb R^n)$, obtained from the $h-$Weyl-quantization of symbols $a(\cdot;h) \in S^k(m) $ by the integral:
        \begin{equation} \label{symbole de W}
                            (A_h u)(x)=(Op^w_h (a) u)(x)= \frac{1}{(2 \pi h)^n}
                                 \int_{ \mathbb R^{2n}} e^{\frac{i}{h}(x-y)\xi}
                                 a(\frac{x+y}{2},\xi;h) u(y) dy d\xi.
        \end{equation}

        In this paper, we always assume that the symbols admit a classical asymptotic expansion in integer powers of $h$. The leading term in this expansion is called the principal symbol of the operator.

\subsection{Quantum monodromy of Integrable quantum systems}  \label{qu mono}

        If an integrable quantum system $(P_1(h), \ldots, P_n(h))$ with joint principal symbol $p$ is proper, then near a
        regular value of $ p $, the joint spectrum of the system locally has the structure of an affine integral lattice \cite{Charbonnel88}, \cite{Colin80}. By S. Vu Ngoc, this leads to the construction of quantum monodromy- a natural invariant defined from the spectrum, see \cite{Vu-Ngoc99}.
        The non-triviality of this invariant obstructs the global existence of lattice structure of the joint spectrum.
        More explicit:

        Let $X$ a compact manifold of dimension $ n $ or $X= \mathbb R^n$ and let $M:=T^*X$ the tangent bundle of $X$. Let an integrable quantum system $(P_1(h), \ldots, P_n(h))$ of $n$ commuting selfadjoint $h-$Weyl pseudo-differential operators on $L^2(X)$:  $[P_i(h),P_j(h)]=0$.
        We will assume that these $P_j(h)$ are in $\Psi^{0}(M)$, classical and of order zero. In any coordinate chart their Weyl symbols $p_j(h) $ have an asymptotic expansion of the form:
             $$p_j(x,\xi;h)= p_0^j(x,\xi)+ h p_1^j(x,\xi) + h^2 p_2^j(x,\xi)+ \cdots. $$
        Assume that the differentials of the principal symbols $p_0^j$ are linearly independent almost everywhere on $M$.
        The map of joint principal symbols $p=(p_0^1,\ldots, p_0^n)$ is a momentum map with respect to the symplectic Poisson bracket on $ T ^ * X $ ($\{p_0^i,p_0^j \}= 0$).
        We will assume that $p$ is proper.

        Let $U_r$ be an open subset of regular values of $ p $ and let $ U $ be a
                  certain open subset with compact closure $K: = \overline{U} $ in $U_r$.
        We define the joint spectrum of the system in $K$, noted $\sigma_{conj}(P_1, \ldots, P_n)$ by:
                \begin{equation}  \label{jspec}
                    \sigma_{conj}(P_1(h), \ldots, P_n(h)) = \{ (E_1(h), \ldots, E_n(h)) \in
                    K | \cap_{j=1}^n Ker(P_j(h) - E_j(h)) \neq \emptyset \}.
                \end{equation}
        Let $ \Sigma(h)= \sigma_{conj}(P_1(h), \ldots,P_n(h)) \cap U $.
        Is is known from the work of Colin de Verdi\`{e}re \cite{Colin80} and Charbonnel \cite{Charbonnel88}, $ \Sigma(h)$ is discrete and for small $h$ is composed of simple eigenvalues.
        Moreover, $ \Sigma(h)$ is "an asymptotic affine lattice" on $U$ in the sense:  there are locally invertible symbols of order zero, denoted $ f_ {\alpha}(\cdot; h) $
        from any small ball $ B_{\alpha} \subset U $ in $ \mathbb R^ n$, sending $ \Sigma (h) $ in $h \mathbb Z^n$ modulo $\mathcal O(h^\infty)$.
        These $ (f_ {\alpha} ,B_ {\alpha}) $ are considered as locals charts of $ \Sigma (h) $ on $ U $ whose transition functions, denoted by $ A_ {\alpha  \beta} $
        are in the integer affine group $ GA (n, \mathbb Z) $.
        The quantum monodromy is defined as the $ 1$-cocycle $ \{A_ {\alpha \beta} \} $ modulo-coboundary in the \v{C}ech cohomology
        $\check{H}^1(U,GA(n, \mathbb Z) )$ (see following definition ). We denote $$[\mathcal M _{qu}] \in \check{H}^1(U,GA(n, \mathbb Z) ). $$


\begin{figure}[!h]
\begin{center}
 \includegraphics[width=0.8\textwidth]{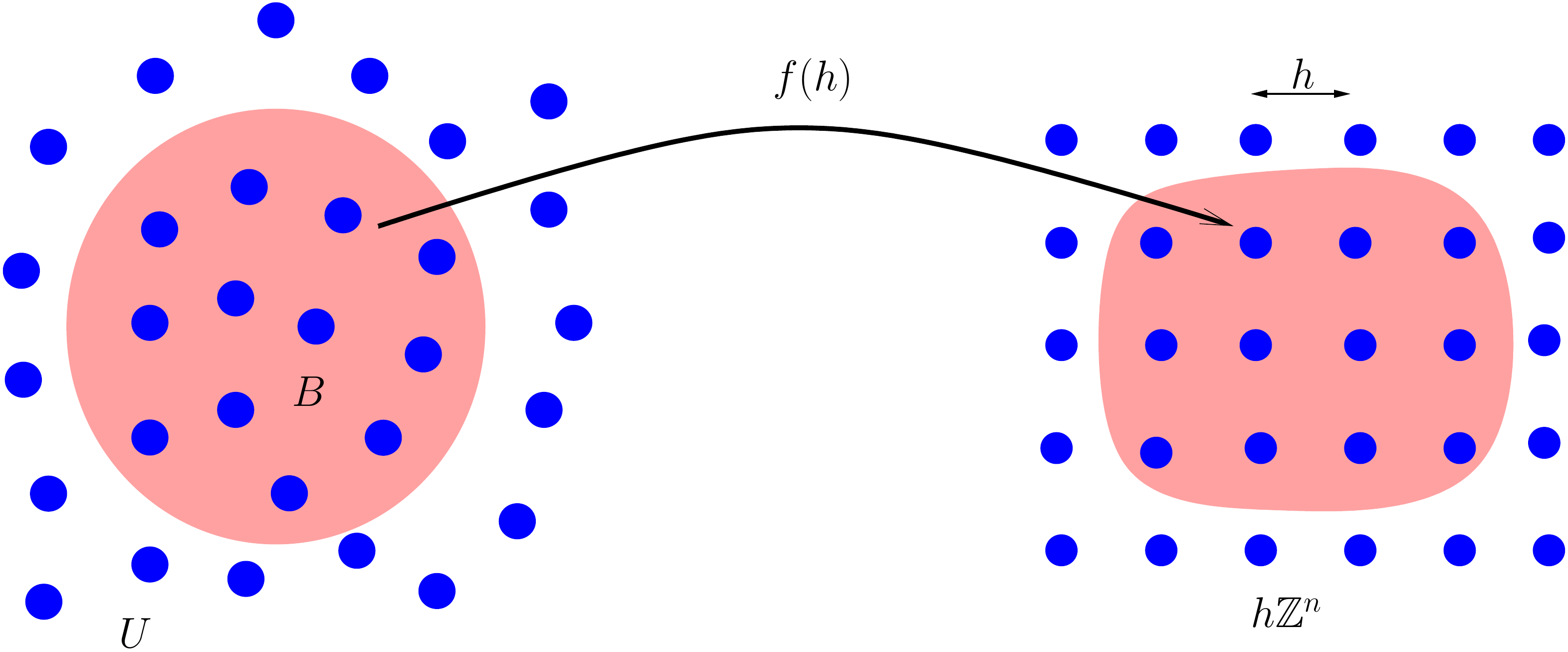}\\
\caption{Asymptotic affine lattice }
\end{center}
\end{figure}

        \begin{rema}
        We recall here the definition of \v{C}ech cohomology that is use often in this paper. Let  $M$ be a manifold and $(G,.)$ be a group. Assume that $ \{ U_\alpha \}_ { \alpha \in I}  $ is a locally finite cover of open sets of $M$ such that the intersection of a finite number of $ U_ \alpha $ is either contractible or empty.
        We denote $\mathcal{C}^0 (M,G)$ the set of $0$-cochains and $\mathcal{C}^1 (M,G)$ the set of $1$-cochains with values in $G$ by:
                 $$\mathcal{C}^0 (M,G)= \{ (c_\alpha)_ { \alpha \in I} \mid c_\alpha \in G \},$$
                 $$\mathcal{C}^1 (M,G)= \{ (c_{\alpha, \beta }) \mid U_\alpha \cap U_\beta \neq \emptyset , c_{\alpha, \beta } \in G \}.$$
        We denote $\check{Z}^1(M, G)$ the set of $1$-cochains satisfying the cocycle condition
           $$c_{\alpha, \beta }. c_{\beta, \gamma }= c_{\alpha, \gamma } $$ if
            $ U_\alpha  \cap  U_\beta \cap U_\gamma \neq \emptyset $.
        ~~\\
        We define an equivalence relation, denoted $"\sim"$ on $\check{Z}^1(M, G)$: two cocycles
        $(c_{\alpha, \beta })$ and $(c'_{\alpha, \beta })$ are equivalent $(c_{\alpha, \beta }) \sim
        (c'_{\alpha, \beta })$ if there exists a $0$-cochain $(d_\alpha) \in \mathcal{C}^0 (M,G)$ such that
        $c'_{\alpha, \beta }= d_\alpha . c_{\alpha, \beta }. d_\beta ^{-1} $ for any $U_\alpha \cap
        U_\beta \neq \emptyset $.
        ~~\\
        Then the \v{C}ech cohomology of $M$ with values in $G$ is the quotient set
        $$\check{H}^1(M, G)= \check{Z}^1(M, G) / \sim .$$
        ~~\\
        Note that it does not depends on choice of cover $ \{ U_\alpha \}_ { \alpha \in I}$.
        \end{rema}

         ~~\\
        The quantum monodromy can be considered as a group morphism (holonomy):
               \begin{equation} \mu: \pi_1(U) \rightarrow GA(n, \mathbb Z) / \{ \sim \}
               \end{equation}
        which is the product of transition functions along a closed loop modulo by conjugation $"\sim "$.

        For more details and discussion of this monodromy, we can see \cite{Vu-Ngoc99}, \cite{Vu-Ngoc.S01}.

\subsection{Normal operators}

             In this section,  we will show the natural statement that: the discrete spectrum of a unbounded normal operator $A$ can be identified with the joint spectrum of an integrable system which consists of the real part and the imaginary part of $A$. This allows us to define the monodromy of normal operator as an application of quantum monodromy.

             Consider a normal operator (usually unbounded) $ A $ with dense domain $D (A) = D $ on a Hilbert space $H$. It is known that the adjoint operator $ A ^ * $ has the same domain $ D (A ^ *) = D (A) = D $.
             We denote the real part and imaginary part of $ A $ by $ A = A_1 +  iA_2$ with
                     \begin{equation} \label{ao-thuc}
                        A_1= \frac{A+A^*}{2}, A_2= \frac{A-A^*}{2i}, D(A_1)=D(A_2)= D.
                    \end{equation}
                    ~~\\
             It is true that $ A_1 $ and $ A_2 $ defined by the formula (\ref{ao-thuc}) are self-adjoint. Moreover, the commutativity of $ A $ and $ A ^ * $ is equivalent to the commutativity of $ A_1, A_2 $ and therefore $A_1A_2=A_2A_1$.

             In this article, one say that two selfadjoint operators (usually unbounded) $(A_1,D(A_1))$ and
$(A_2,D(A_2))$ commute if $A_1 A_2=A_2A_1$ and this definition includes the requirement of domain:
                    $$Ran(A_2) \subseteq D(A_1), Ran(A_1) \subseteq D(A_2), D(A_1 A_2)= D(A_2A_1).$$
             Conversely, let two commuting self-adjoint operators $ A_1 $ and $A_2 $, $ D(A_1) = D (A_2) = D $ dense in $ H $. Then the operator defined by $A:=A_1+iA_2$, $D(A)=D$ is a well-defined normal operator (and hence closed) on $H$.

            In the literature, there are alternative definitions of discrete spectrum (see\cite{Davies95}, \cite{Dimas99}, \cite{Oliveira09}, \cite{Havl08}) which coincide in the self-adjoint case.
             In this article, we use the following general definition of discrete spectrum, see \cite{RS78}, \cite{Kato95}.

             \begin{defi}
                 For a closed operator $(A, D(A))$, let $ \lambda $ be an isolated point of $ \sigma (A) $: there is
                 $ \varepsilon > 0$ such that
                 $ \{\mu \in \mathbb C,\mid | z- \lambda| < \varepsilon \} \cap \sigma(A)= \{\lambda\}$.
                 For all $ 0 <r <\varepsilon $,  we can define the projection $P_ \lambda $ (not necessarily orthogonal) by
                     \begin{equation}
                        P_\lambda= \frac{1}{2 \pi i } \int_{| z- \lambda|= r} (z-A)^{-1}.
                     \end{equation}
                 We say that $\lambda \in \sigma(A)$ is in the discrete spectrum of $A$, denoted $\sigma_{disc}(A)$ if $\lambda$ is isolated in $\sigma(A)$ and $ P_\lambda$ has finite
                 rank. \\
                 We define the essential spectrum of $A$, denoted $\sigma_{ess}(A)$ as the complement of the discrete spectrum $$\sigma_{ess}(A)= \sigma(A) \setminus \sigma_{disc}(A). $$
            \end{defi}

            \begin{prop} \label{sp ess}
                Let $A_1,A_2$ two commuting self-adjoint operators on a Hilbert space $H$ with dense domain $D(A)=D(B)=D$, $A_1A_2= A_2A_1$. Then we have:
                     \begin{enumerate}
                        \item  If $ \lambda \in \sigma(A_1+iA_2)$, then $Re(\lambda) \in \sigma(A_1)$ and
                             $ Im(\lambda) \in \sigma(A_2)$.
                        \item If $ \lambda \in \sigma_p(A_1+iA_2)$, then $Re(\lambda) \in
                                            \sigma_p(A_1)$ and $ Im(\lambda) \in \sigma_p(A_2)$.   \\
                        Moreover if $\lambda$ is an eigenvalue of infinite multiplicity of $A_1+iA_2$, then $Re(\lambda),Im(\lambda) $ are the eigenvalues of infinite multiplicity corresponding of $A_1,A_2$.
                        \item If $ \lambda \in \sigma_{ess}(A_1+iA_2)$, then $Re(\lambda) \in \sigma_{ess}(A_1)$ and $ Im(\lambda) \in \sigma_{ess}(A_2)$.
                     \end{enumerate}
            \end{prop}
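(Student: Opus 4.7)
The plan is to reduce all three assertions to properties of the joint spectral measure of the commuting pair $(A_1,A_2)$. The commutation hypothesis $A_1A_2=A_2A_1$ together with the standing domain constraints promotes to strong (spectral) commutation of the two self-adjoint operators, so the spectral theorem for commuting self-adjoint operators supplies a projection-valued measure $E$ on $\R^2$ such that $A_j=\int x_j\,dE(x_1,x_2)$ for $j=1,2$. Identifying $\R^2\simeq\C$ via $(x_1,x_2)\mapsto x_1+ix_2$, the functional calculus then gives $N:=A_1+iA_2=\int z\,dE(z)$, so $E$ is simultaneously the resolution of identity of the normal operator $N$. In particular $\sigma(N)=\mathrm{supp}\,E$ while $\sigma(A_j)=\pi_j(\mathrm{supp}\,E)$, where $\pi_j$ denotes the $j$-th coordinate projection.

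For item (1), if $\lambda\in\sigma(N)$ then $E(B(\lambda,\varepsilon))\neq 0$ for every $\varepsilon>0$. Since $B(\lambda,\varepsilon)$ lies in the vertical strip $(Re(\lambda)-\varepsilon,Re(\lambda)+\varepsilon)\times\R$, the spectral projector $E_{A_1}\bigl((Re(\lambda)-\varepsilon,Re(\lambda)+\varepsilon)\bigr)$ dominates $E(B(\lambda,\varepsilon))$ and is in particular nonzero, so $Re(\lambda)\in\sigma(A_1)$; the argument for $Im(\lambda)\in\sigma(A_2)$ is symmetric. For item (2), an eigenvector $Nu=\lambda u$ satisfies $\|(N-\lambda)u\|^2=\int|z-\lambda|^2\,d\langle E(z)u,u\rangle=0$, which forces $E(\{\lambda\})u=u$. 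The range of the joint spectral projector $E(\{\lambda\})$ is contained in $\ker(A_1-Re(\lambda))\cap\ker(A_2-Im(\lambda))$, which provides the common eigenvector; and since $\ker(N-\lambda)=\mathrm{Range}\,E(\{\lambda\})$, infinite multiplicity of $\lambda$ for $N$ transfers simultaneously to $Re(\lambda)$ for $A_1$ and $Im(\lambda)$ for $A_2$.

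For item (3) I use the characterization, valid for any normal operator, that $\lambda\in\sigma_{ess}(N)$ if and only if $E(B(\lambda,\varepsilon))$ has infinite rank for every $\varepsilon>0$: when $\lambda$ is isolated in $\sigma(N)$ the Riesz projection in the definition of $\sigma_{disc}$ coincides with $E(\{\lambda\})=E(B(\lambda,\varepsilon))$ for small $\varepsilon$, while if $\lambda$ is not isolated then any neighbourhood $B(\lambda,\varepsilon)$ meets $\sigma(N)$ in infinitely many distinct points, and the pairwise orthogonal nonzero spectral projectors of disjoint small discs around these points force the rank to be infinite. Dominating $E(B(\lambda,\varepsilon))$ by $E_{A_1}\bigl((Re(\lambda)-\varepsilon,Re(\lambda)+\varepsilon)\bigr)$ exactly as in (1) then shows the latter has infinite rank, i.e.\ $Re(\lambda)\in\sigma_{ess}(A_1)$, and the argument for $Im(\lambda)$ is identical. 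The one delicate point throughout is the upgrade, under the bare algebraic commutation and domain relations imposed on $A_1,A_2$, to the existence of the joint projection-valued measure $E$; once $E$ is produced every claim reduces to a Borel set inclusion in $\R^2$.
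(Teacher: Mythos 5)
Your proof is correct, but it takes a genuinely different route from the paper's. The paper never invokes the joint projection-valued measure explicitly. Instead it leans on the pointwise normality identity
\[
\|(A_1+iA_2-\lambda)u\|^2=\|(A_1-\mathrm{Re}\,\lambda)u\|^2+\|(A_2-\mathrm{Im}\,\lambda)u\|^2,
\]
which follows from $\|(A-\lambda)u\|=\|(A^*-\bar\lambda)u\|$ for the normal operator $A=A_1+iA_2$. Item (1) is then a one-line Weyl-sequence transfer: a singular sequence for $A$ at $\lambda$ is automatically a singular sequence for $A_1$ at $\mathrm{Re}\,\lambda$ and for $A_2$ at $\mathrm{Im}\,\lambda$. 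Item (2) comes from the induced kernel identity $\ker(A-\lambda)=\ker(A_1-\mathrm{Re}\,\lambda)\cap\ker(A_2-\mathrm{Im}\,\lambda)$, and item (3) uses the same identity applied to an \emph{orthonormal} Weyl sequence, invoking the Weyl criterion for the essential spectrum of self-adjoint operators. Your argument instead routes everything through the resolution of identity: $\lambda\in\sigma(N)$ iff $E(B(\lambda,\varepsilon))\neq 0$, eigenvectors live in $\mathrm{Range}\,E(\{\lambda\})$, and $\lambda\in\sigma_{ess}(N)$ iff $E(B(\lambda,\varepsilon))$ has infinite rank; the proposition then reduces to the domination $E(B(\lambda,\varepsilon))\le E_{A_1}((\mathrm{Re}\,\lambda-\varepsilon,\mathrm{Re}\,\lambda+\varepsilon))$ and its analogue for $A_2$. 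Both approaches are valid. The paper's is lighter on machinery (no joint spectral measure is ever produced, only the norm identity and Weyl's criterion); yours is more structural and makes the geometric picture --- the spectrum of $N$ as a subset of $\R^2$ with $\sigma(A_j)$ as its coordinate shadows --- explicit from the start. The delicate point you correctly flag, that bare algebraic commutation with the stated domain relations does not in general imply strong commutation (Nelson's example), is shared by the paper: the paper's proof equally relies on $A_1+iA_2$ being normal, which requires the same strong-commutation hypothesis. So neither proof patches this; the honest reading is that ``commute'' in the proposition is meant in the strong (spectral) sense, as the surrounding discussion in the paper suggests.
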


            \begin{proof}[Proof]
                        Let $A=A_1+i A_2$, $D(A)= D$.
                        As we explained earlier at the beginning of this section, $A$ is a normal operator with adjoint operator $A^*= A_1-iA_2$ .
                        For any complex number $\lambda \in \mathbb C$, $A-\lambda$ is still a normal operator. Then for any $u \in D$, the equality $\|(A-\lambda)u\|= \|(A^*- \overline{\lambda})u\|$ gives us
                            \begin{equation} \label{pt1}
                                \|(A_1+iA_2-\lambda)u\|^2=\|(A_1-Re(\lambda))u\|^2 +  \|(A_2-Im(\lambda))u\|^2
                            \end{equation}
                        This equation implies that:
                                \begin{equation} \label{pt2}
                                        Ker(A-\lambda) = Ker( A_1-Re(\lambda))\cap Ker(A_2-Im(\lambda)).
                                \end{equation}
                        If $\lambda \in \sigma (A_1+iA_2) $, by the Weyl theorem (see \cite{Havl08}, page $102$)), there exists a Weyl sequence for $A$ and $\lambda$: a sequence $u_n \in D$, $\|u_n\|=1$ such that $ \lim_{n \rightarrow \infty} \| (A-\lambda)u_n\|=0$.
                         By the equation (\ref{pt1}), it is still a Weyl sequence for $ A_1$ and $Re(\lambda)$, for  $A_2$ and $Im(\lambda)$. Again by the Weyl theorem, we have $Re(\lambda) \in \sigma(A_1)$ and $ Im(\lambda) \in \sigma(A_2)$.

                   If $ \lambda $ is an eigenvalue of $A_1+iA_2$, by the equation (\ref{pt2}) we have $Re(\lambda) \in \sigma_p(A_1)$ and $ Im(\lambda) \in \sigma_p(A_2)$. Moreover, it is obvious that if $\lambda$ is an eigenvalue of infinite multiplicity of $A_1+iA_2$, then $Re(\lambda),Im(\lambda) $ are also eigenvalues of infinite multiplicity corresponding to $ A_1, A_2 $.

                   We also note that if $ u $ is an eigenvector corresponding to $ \lambda $ of
                         $ A_1 + iA_2$, then $ u $ is also simultaneous eigenvector of $ A_1, A_2 $.

                   If $ \lambda \in \sigma_{ess}(A_1+iA_2)$, then there exists a orthogonal Weyl sequence for $A_1+iA_2$ in $\lambda$ such that:
                   $u_n \in D$, $\|u_n\|=1$ and $ \lim_{n \rightarrow \infty} \| (A-\lambda)u_n\|=0$.
                   By the equation (\ref{pt1}) and by the Weyl theorem for a self-adjoint operator (see \cite{Dimas99}, \cite{Oliveira09} page $287$), \cite{Havl08} page $173$, \cite{Davies95}... ), we obtain
                   $Re(\lambda) \in \sigma_{ess}(A_1)$ and $ Im(\lambda) \in \sigma_{ess}(A_2)$.
                   The proposition is shown.
            \end{proof}

            One can easily show that the reverse implications are false in general.
            ~~\\
             From this proposition, by identifying  $\mathbb C \cong \mathbb R^2$ we have the following result:

             \begin{theo} \label{bac cau 1}
                    Let $A_1,A_2$ two commuting self-adjoint operators on a Hilbert space $H$ with dense domain $D(A)=D(B)=D$. Let $I_1, I_2$ be two intervals of $\mathbb R$ such that the corresponding spectra
                    of $A_1$,$A_2$ in $I_1, I_2$ are discrete.

                    Then the spectrum of $A_1+iA_2$ in $  I_1 +i I_2 \cong  I_1 \times I_2 $ is discrete and
                         $$ \sigma(A_1+iA_2) \cap (I_1 +i I_2) \cong \sigma_{conj}(A_1,A_2) \cap (I_1 \times I_2). $$
             \end{theo}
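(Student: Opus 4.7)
The plan is to deduce both assertions of Theorem \ref{bac cau 1} essentially for free from Proposition \ref{sp ess}, together with the standard decomposition $\sigma(A_j) = \sigma_{disc}(A_j) \sqcup \sigma_{ess}(A_j)$ for closed operators. First I would interpret the hypothesis ``the spectra of $A_j$ in $I_j$ are discrete'' as the statement $\sigma_{ess}(A_j) \cap I_j = \emptyset$ for $j=1,2$; that is the form in which it will actually be used.

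For the discreteness claim, I would argue by contraposition on part (3) of Proposition \ref{sp ess}. Suppose that some $\lambda \in \sigma_{ess}(A_1+iA_2)$ lies in $I_1 + iI_2$. Then Proposition \ref{sp ess} forces $Re(\lambda) \in \sigma_{ess}(A_1) \cap I_1$, which is empty by hypothesis, a contradiction. Consequently
\begin{equation}
\sigma(A_1+iA_2) \cap (I_1 + iI_2) \subseteq \sigma_{disc}(A_1+iA_2),
\end{equation}
so every spectral point of $A_1+iA_2$ in the rectangle is isolated in $\sigma(A_1+iA_2)$ and has a Riesz projector of finite rank. This already yields the first half of the theorem.

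For the identification with the joint spectrum, I would reuse the pointwise kernel identity (\ref{pt2}) established inside the proof of Proposition \ref{sp ess}, namely $Ker(A_1+iA_2-\lambda) = Ker(A_1 - Re(\lambda)) \cap Ker(A_2 - Im(\lambda))$. For $\lambda \in \sigma(A_1+iA_2) \cap (I_1+iI_2)$ the previous step shows that $\lambda$ is actually an eigenvalue, so that kernel is nontrivial; any vector in it exhibits $(Re(\lambda), Im(\lambda))$ as a point of $\sigma_{conj}(A_1,A_2) \cap (I_1 \times I_2)$. Conversely, a joint eigenvector associated with $(\mu_1,\mu_2) \in \sigma_{conj}(A_1,A_2) \cap (I_1 \times I_2)$ is manifestly an eigenvector of $A_1+iA_2$ for the eigenvalue $\mu_1 + i\mu_2$. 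The map $\lambda \leftrightarrow (Re(\lambda), Im(\lambda))$ then provides the bijective identification claimed by the theorem.

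No step is genuinely hard, since all substantive analytic content — the orthogonal decomposition $\|(A_1+iA_2-\lambda)u\|^2 = \|(A_1-Re(\lambda))u\|^2 + \|(A_2-Im(\lambda))u\|^2$, the consequent kernel identity, and the Weyl-sequence argument for the essential spectrum — has already been handled in Proposition \ref{sp ess}. The only delicate point, and what I regard as the main thing to keep straight, is to rely exclusively on the direct implications of Proposition \ref{sp ess}; its converses are false in general, and it is precisely the discreteness assumption on $I_1$ and $I_2$ that makes the converses unnecessary here.
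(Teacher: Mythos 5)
Your proof is correct, and it runs in the same circle of ideas as the paper's (both hinge on Proposition \ref{sp ess} and the kernel identity (\ref{pt2})), but the route is a little different and in fact a little cleaner. The paper takes the easy inclusion $\sigma_{conj}(A_1,A_2) \cap (I_1\times I_2) \subseteq \sigma(A_1+iA_2)\cap(I_1+iI_2)$ as given, then for the reverse inclusion applies part (1) of Proposition \ref{sp ess} to place $Re(\lambda)$, $Im(\lambda)$ in $\sigma(A_1)\cap I_1$, $\sigma(A_2)\cap I_2$, invokes the discreteness hypothesis to make them isolated eigenvalues of finite multiplicity, and then asserts that ``from equation (\ref{pt2}), $\lambda$ must be an eigenvalue of finite multiplicity of $A_1+iA_2$.'' That last step, taken literally, is a jump: $Ker(A_1-Re(\lambda))\neq 0$ and $Ker(A_2-Im(\lambda))\neq 0$ do not by themselves force their intersection (which is $Ker(A_1+iA_2-\lambda)$ by (\ref{pt2})) to be nonzero; one needs an extra argument (e.g.\ the joint spectral measure, or your route) to upgrade $\lambda\in\sigma(A_1+iA_2)$ to $\lambda\in\sigma_p(A_1+iA_2)$. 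You avoid this by first invoking the contrapositive of part (3) of Proposition \ref{sp ess} to conclude $\sigma_{ess}(A_1+iA_2)\cap(I_1+iI_2)=\emptyset$, hence $\sigma(A_1+iA_2)\cap(I_1+iI_2)\subseteq\sigma_{disc}(A_1+iA_2)$; discreteness then automatically makes every such $\lambda$ an eigenvalue, and only at that point do you use (\ref{pt2}). This fills the small gap and also delivers the discreteness assertion of the theorem directly rather than as a by-product of the bijection. The rest of your argument — the pointwise identification via (\ref{pt2}) and the trivial converse inclusion — matches the paper's.
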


             \begin{proof}
                        We always have inclusion:
                        $$\sigma_{conj}(A_1,A_2) \cap I_1 \times I_2 \subseteq \sigma_p(A_1+iA_2) \cap I_1 +i I_2
                         \subseteq \sigma(A_1+iA_2) \cap I_1 +i I_2.$$
                        Let us consider the inverse inclusion. For any $\lambda \in \sigma(A_1+iA_2) \cap I_1 +i
                        I_2$, the previous proposition says that: $Re(\lambda)\in \sigma(A_1+) \cap
                        I_1$ and $Im(\lambda) \in \sigma(A_2) \cap I_2$. \\
                        Because the corresponding spectra of $A_1,A_2$ in $I_1,I_2$ are
                        discrete, $Re(\lambda)$ and $Im(\lambda)$ are respectively isolated eigenvalues of
                         finite multiplicity of $A_1,A_2$. \\
                        From the equation (\ref{pt2}), $\lambda$ must be an eigenvalue of
                         finite multiplicity of $A_1+iA_2$ and there exists a common eigenvalue for
                        $A_1,A_2$:  $u \in D, \|u\|=1, A_1u=Re(\lambda)u, A_2u=Im(\lambda)u$.
                        Therefore
                        $$\lambda= (Re(\lambda),Im(\lambda) ) \in \sigma_{conj}(A_1,A_2) \cap I_1 \times I_2.$$
             \end{proof}

             We will give a version of this theorem for a normal operator with spectrum discrete in a rectangle area of $\mathbb C$.

                    \begin{theo} \label{bac cau}
                        Let  $A$ a normal operator and $I_1,I_2$ two intervals of
                        $\mathbb R$ such that the spectrum of $A$ in $I_1 + i I_2$ is discrete.
                        We denote the real part and the imaginary part of $A$ by $A_1$ and $A_2$. Then we have:
                         $$ \sigma(A) \cap (I_1 +i I_2) \cong \sigma_{conj}(A_1,A_2) \cap (I_1 \times I_2) . $$
                    \end{theo}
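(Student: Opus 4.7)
The plan is to essentially replay the proof of Theorem \ref{bac cau 1}, but starting from the normal operator $A$ rather than from the pair $(A_1,A_2)$. Write $A = A_1 + i A_2$ with $A_1,A_2$ selfadjoint and commuting as explained at the beginning of the section; they share the common dense domain $D$ and satisfy equations (\ref{pt1}) and (\ref{pt2}) from the proof of Proposition \ref{sp ess}. The key identities
\bel{plan:pt}
\|(A-\lambda)u\|^2 = \|(A_1-\mathrm{Re}(\lambda))u\|^2 + \|(A_2-\mathrm{Im}(\lambda))u\|^2,
\qquad
\mathrm{Ker}(A-\lambda) = \mathrm{Ker}(A_1-\mathrm{Re}(\lambda)) \cap \mathrm{Ker}(A_2-\mathrm{Im}(\lambda))
\ee
will do all the work.

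First I would prove the inclusion $\sigma(A) \cap (I_1 + i I_2) \hookrightarrow \sigma_{conj}(A_1,A_2) \cap (I_1 \times I_2)$. Fix $\lambda$ in the left-hand side. By hypothesis $\lambda$ lies in a region where $\sigma(A)$ is discrete, so $\lambda$ is isolated in $\sigma(A)$ and the spectral projector $P_\lambda$ has finite rank. Because $A$ is normal, $\lambda$ is then an honest eigenvalue of $A$ of finite multiplicity (this follows from the spectral theorem for normal operators, which identifies $\sigma_{disc}(A)$ with isolated eigenvalues of finite multiplicity; alternatively one can argue with a Weyl sequence concentrating on the spectral projector). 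Thus $\mathrm{Ker}(A - \lambda) \neq 0$, and by the second identity in (\ref{plan:pt}) any nonzero vector $u$ in this kernel is a joint eigenvector for $(A_1,A_2)$ with eigenvalues $(\mathrm{Re}(\lambda),\mathrm{Im}(\lambda)) \in I_1 \times I_2$. Hence $\lambda \in \sigma_{conj}(A_1,A_2) \cap (I_1 \times I_2)$.

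The reverse inclusion is straightforward: given $(\mu_1,\mu_2) \in \sigma_{conj}(A_1,A_2) \cap (I_1 \times I_2)$, pick a joint eigenvector $u \in D$ with $A_1 u = \mu_1 u$ and $A_2 u = \mu_2 u$; then $A u = (\mu_1 + i \mu_2) u$, so $\mu_1 + i \mu_2 \in \sigma_p(A) \subseteq \sigma(A)$ and it sits in $I_1 + i I_2$. The bijection $\lambda \leftrightarrow (\mathrm{Re}(\lambda),\mathrm{Im}(\lambda))$ is tautological, and the second identity of (\ref{plan:pt}) matches the multiplicities on both sides, which upgrades the set-theoretic bijection to the identification $\cong$.

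The only non-cosmetic obstacle is the passage, in Step 1, from $\lambda \in \sigma_{disc}(A)$ to $\lambda \in \sigma_p(A)$: the general definition of discrete spectrum recalled before Proposition \ref{sp ess} only guarantees that the Riesz projector $P_\lambda$ has finite rank, and for a general closed operator this does not automatically yield an eigenvector in $D$. Here, however, the normality of $A$ (equivalently, of $A - \lambda$) and the first identity in (\ref{plan:pt}) force the range of $P_\lambda$ to consist of genuine eigenvectors, so no Jordan blocks can appear. This is the only point in the argument that genuinely uses that $A$ is normal and not merely closed, and it is what makes the statement stronger than Theorem \ref{bac cau 1} rather than a direct corollary of it.
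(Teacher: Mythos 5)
Your proof is correct and follows essentially the same route as the paper: trivial inclusion $\sigma_{conj}(A_1,A_2) \cap (I_1 \times I_2) \subseteq \sigma(A) \cap (I_1 + iI_2)$, and for the converse use discreteness to pass to $\sigma_p(A)$ and then invoke $\mathrm{Ker}(A-\lambda) = \mathrm{Ker}(A_1-\mathrm{Re}\,\lambda) \cap \mathrm{Ker}(A_2-\mathrm{Im}\,\lambda)$ to produce a joint eigenvector. One small inaccuracy in your closing remark: the passage from $\lambda \in \sigma_{disc}(A)$ to $\lambda \in \sigma_p(A)$ does \emph{not} actually require normality --- for any closed operator, if $\lambda$ is isolated and the Riesz projector $P_\lambda$ has finite rank, then $\mathrm{Ran}(P_\lambda)$ is a finite-dimensional $A$-invariant subspace of $D(A)$ on which $A$ has sole eigenvalue $\lambda$, so $\mathrm{Ker}(A-\lambda) \neq 0$ automatically (possibly with Jordan blocks); normality is what guarantees that $P_\lambda$ is orthogonal, that its range is exactly $\mathrm{Ker}(A-\lambda)$, and above all that the decomposition $A = A_1 + iA_2$ with identities (\ref{pt1})--(\ref{pt2}) is available in the first place.
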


              \begin{proof}
                        It is obvious that
                            $$\sigma_{conj}(A_1,A_2) \cap (I_1 \times I_2) \subseteq \sigma(A) \cap (I_1 +i I_2).$$
                        For the inverse inclusion:
                        if $\lambda \in \sigma(A) \cap (I_1 +i I_2) $, then $\lambda $ is an eigenvalue of $A$
                        because $\sigma(A) \cap (I_1 +i I_2)$ is discrete.
                        The proposition \ref{sp ess} implies that
                        $Re(\lambda) \in  \sigma_{p}(A_1) \cap I_1$ and $Im(\lambda) \in  \sigma_{p}(A_2) \cap I_2$
                        with a nonzero common eigenvector (by equation (\ref{pt2})) and we get the inverse inclusion.
              \end{proof}

              This theorem allows us to define the monodromy of a normal pseudo-differential operator returning the quantum monodromy of the joint spectrum as below.

    \subsection{Monodromy of Normal pseudo-differential operators}  \label{mono op normal}

                In this section, we work with a space of dimension $n=2$.
                Let $P(h)$ a $h$-pseudo-differential operator on $L^2(X)$.

                We assume that $P(h)$ is normal and classical of order zero, $P(h) \in \Psi^{0}(M)$.
                As in the previous section, we can write
               $P(h)= P_1(h)+iP_2(h)$ where $P_1(h), P_2(h)$ are the real part and imaginary part of $P(h)$. \\
               The commutativity of
               $P_1(h), P_2(h)$ gives us the integrable quantum system $(P_1(h), P_2(h))$ and thanks to its joint spectrum, we can define its quantum monodromy $[\mathcal M _{qu}] \in \check{H}^1(U,GA(n, \mathbb Z) ) $ as in the previous section. Here $U$ is some open subset with compact closure in the set of regular values of the momentum map $p_0$ of principal symbols of $P_1(h), P_2(h)$, $p_0= ( Re(p), Im(p)) $ where $p$ is the principal symbol of $P(h)$.

               We assume moreover that the spectrum of $P(h)$ in $U$ is discrete. The previous theorem \label{bac cau} gives us an identification on $U$ between this spectrum and the joint spectrum.
               We have therefore the following definition of a combinatorial invariant from the discrete spectrum.

                \begin{defi}
                    With the above hypothesis, the monodromy of a normal $h$-Weyl-pseudo-differential $P(h)$ on $U$ is defined as the quantum monodromy of the integrable quantum system $(P_1(h),P_2(h))$ on $U$.

                    We call it the affine spectral monodromy.
                \end{defi}

\section{Linear Spectral Monodromy}

\subsection{Introduction}
        In this section, we propose to define the monodromy of a particular class of non-self-adjoint $ h $-pseudo-differentials operator two degree of freedom which are small perturbations of selfadjoint operators, of the form $P_\varepsilon:= P + i \varepsilon Q$ ($P$ is selfadjoint) with principal symbols $ p, q $ that commute for the Poisson bracket and in the regime $h \ll \varepsilon = \mathcal{O}(h^\delta)$ for some $0< \delta <1 $.

       The  asymptotic spectral theory by M.Hitrik-J.Sj\"{o}strand- S. Vu Ngoc (\cite{Hitrik04}, \cite{Hitrik05}, \cite{Hitrik08}...) allows us to concretely give the asymptotic expansion of eigenvalues of $P_ \varepsilon $ in a adapted complex window.

       The spectrum of $ P_ \varepsilon $ is the model of a particular discrete set which we will define in this section and call "pseudo-asymptotic lattice" (see definition \ref{pseu-cart}).
       By calculating the transition functions between the "pseudo-local cards" that are in the group $ GL(2, \mathbb Z) $,
       we can define a combinatorial invariant (the monodromy) of this lattice.
      This allows us to treat the inverse problem: define the monodromy from the spectrum of $ P_ \varepsilon $.

      We first recall some important results and analyze the general asymptotic spectral theory
      (\cite{Hitrik07}, \cite{Hitrik04}, \cite{Hitrik05}...).
      Then we will detail these results in our particular case by restating the  Birkhoff procedure of normal form.
      Next, we give some necessary steps for the construction of the monodromy of a pseudo-asymptotic lattice and then apply it to the spectrum of $ P_ \varepsilon $.

      Finally, noting that with the property of integrability, the classical monodromy (given by J.Duitermaat, \cite{Duis80}) is well defined, we also give the relationship between two monodromy types.

\subsection{Spectral Asymptotic}
    \subsubsection{Assumptions}  \label{hypothese}

    We will first give the general assumptions of our operator as in the articles \cite{Hitrik07}, \cite{Hitrik06}, \cite{Hitrik04}, \cite{Hitrik05}, \cite{Hitrik08} and the assumptions on the classical flow of the principal symbol of the non-perturbed operator and some associated spectral results: the discrete spectrum, the localization of the spectrum, the expansion of asymptotic eigenvalues ...

    $ M $ denotes $\mathbb R^2$ or a connected compact analytic real (riemannian) manifold of dimension $ 2 $ and we denote by $\widetilde{M}$ the canonical complexification of $ M $, which is either $ \mathbb C ^ 2 $ in the Euclidean case or a Grauert tube in the case of manifold (see \cite{Burns01}, \cite{Kan06}).

    We consider a non-selfadjoint $ h $-pseudo-differential operator $P_{\varepsilon}$ on $ M $ and suppose that
    \begin{equation}
         P_{\varepsilon=0}:= P \quad \textrm{is formally self-adjoint}.
    \end{equation}

    Note that if $ M = \mathbb R ^ 2 $, the volume form $ \mu (dx) $ is naturally induced by the Lebesgue measure on $ \mathbb R ^ 2 $, contrariwise in the case $ M $
    is compact riemannian manifold, the volume form $ \mu(dx) $ is induced by the given riemannian structure of $ M $.
    So, in all cases, the operator $P_{\varepsilon}$ is seen as an (unbounded) operator  on $L^2(M, \mu(dx)) $.

    We always denote the principal symbol of $ P_ {\varepsilon} $ by $ p_ \varepsilon $ which is defined on $ T ^ * M $ as we discussed in previous section.
    ~~\\
    We'll assume the ellipticity condition at infinity for $ P_ {\varepsilon} $ as follows:

    When $M=\mathbb R^2$, let
                        \begin{equation}
                            P_{\varepsilon}= P(x,hD_x,\varepsilon; h )
                        \end{equation}
   be the Weyl quantification of a total symbol $P(x, \xi,\varepsilon; h )$ depending smoothly on $\varepsilon$ in a neighborhood of $(0, \mathbb R) $ and taking values in the space of holomorphic functions of $(x,\xi)$ in a tubular neighborhood of $\mathbb R^4$ in $\mathbb C^4$ on which we assume that:
                \begin{equation}
                           | P(x, \xi,\varepsilon; h ) | \leq  \mathcal O(1) m(Re(x,\xi)).
                \end{equation}

    Here $ m $ is an order function in the sense of definition \ref{fonc ord}.
    We assume moreover that $ m> 1 $ and $ P_ {\varepsilon} $ is classical
     \begin{equation}
        P(x, \xi,\varepsilon; h ) \sim \sum_{j=0}^\infty
                             p_{j,\varepsilon}(x,\xi) h^j, h \rightarrow 0,
     \end{equation}
     in the selected space of symbols. \\
     In this case, the main symbol is the first term of the above expansion, $p_\varepsilon = p_{0,\varepsilon}$ and the ellipticity condition at infinity is
                        \begin{equation}
                            |p_{0,\varepsilon}(x,\xi)| \geq \frac{1}{C} m(Re(x,\xi)), \mid (x,\xi)\mid \geq C,
                        \end{equation}
     for some $ C> 0 $ large enough.

     When $ M $ is a manifold, we consider $ P_ \varepsilon $ a differential operator on $ M $ such that in local coordinates $ x $ of $ M $, it is of the form:
        \begin{equation}
                            P_\varepsilon = \sum_{|\alpha |\leq m} a_{\alpha,\varepsilon}(x;h)(hD_x)^\alpha,
        \end{equation}
   Where $D_x= \frac{1}{i} \frac{\partial}{\partial x}$ and $a_{\alpha,\varepsilon}$ are smooth functions of $\varepsilon$ in a neighborhood of $0$
   with values in the space of holomorphic functions on a complex neighborhood of $x=0$. \\
  We assume that these $a_{\alpha,\varepsilon}$  are classic
                         \begin{equation}
                           a_{\alpha,\varepsilon}(x;h) \sim \sum_{j=0}^\infty
                             a_{\alpha,\varepsilon,j}(x) h^j, h \rightarrow 0,
                        \end{equation}
  in the selected space of symbols. \\
  In this case, the principal symbol $p_\varepsilon$ in the local canonical coordinates associated $(x,\xi) $ on $T^*M $ is \begin{equation}
                           p_\varepsilon(x,\xi)= \sum_{|\alpha | \leq m} a_{\alpha,\varepsilon,0}(x) \xi^{\alpha}
  \end{equation}
  and the elipticity condition at infinity is
            \begin{equation}
                            |p_{\varepsilon}(x,\xi)| \geq \frac{1}{C} \langle \xi \rangle ^m, (x,\xi) \in T^*M, \mid \xi \mid \geq C,
            \end{equation}
  for some $C>0$ large enough.\\
  Note here that $ M $ has a riemannian metric, then $\mid \xi \mid$ and $\langle \xi \rangle= (1+  \mid \xi \mid ^2)^{1/2}$ is well defined.

  It is known from articles \cite{Hitrik07}, \cite{Hitrik04} that with the above conditions, the spectrum of $ P_ \varepsilon $ in a small but fixed neighborhood of $ 0 \in \mathbb C $ is discrete,
  when $h>0, \varepsilon \geq 0$ are small enough. Moreover, this spectrum is contained in a band of size $\varepsilon$:
                            \begin{equation}|Im(z)| \leq \mathcal O(\varepsilon ).\end{equation}
  This gives the first location of the spectrum of $ P_ \varepsilon $.

Let $p= p_{\varepsilon=0}$, it is principal symbol of the selfadjoint unperturbed operator $ P $ and therefore
real. \\
 We assume that
                        \begin{equation}p^{-1}(0) \cap T^*M  \qquad \textrm{ is connected} \end{equation}
and the energy level $ E = 0 $ is regular for $ p $, i.e  $dp \neq 0$ on $p^{-1}(0) \cap T^*M$.

Let $q=\frac{1}{i}(\frac{\partial}{\partial \varepsilon})_{\varepsilon
                    =0}p_\varepsilon$, so
                            \begin{equation}  \label{symb prin}
                                  p_\varepsilon=p+i \varepsilon q+ \mathcal O (\varepsilon ^2)
                            \end{equation}
in the neighborhood of $p^{-1}(0) \cap T^*M $.

For $T>0$, we introduce the symmetric average time $T$ of $q$ along the flow of $p$, defined near $p^{-1}(0) \cap T^*M $:
                            \begin{equation}  \label{t-average}
                                 \langle q \rangle _T= \frac{1}{T} \int_{-T/2}^{T/2} q \circ exp(t H_p) dt,
                            \end{equation}
where $H_p= \frac{\partial p}{\partial \xi} \cdot \frac{\partial }{\partial x}- \frac{\partial p}{\partial x} \cdot
\frac{\partial }{\partial \xi}$ is the hamiltonian vector field of $p$. \\
Note that $H_p(\langle q \rangle _T)= \{p,\langle q \rangle_T\}= \mathcal O (1/T)$.

As explained in \cite{Hitrik07}, by introducing a Fourier integral operator
 (which is defined microlocally close to $p^{-1}(0) \cap T^*M$ ),
 we can reduce our operator to a new operator, denoted again by  $P_\varepsilon$, with principal symbol
 $ p+i\varepsilon \langle q \rangle _T+ \mathcal O (\varepsilon ^2) $ and $P_{\varepsilon=0}$ is still the original unperturbed operator.
 So we can assume that our operator $P_\varepsilon$ is microlocally defined in the neighborhood of $p^{-1}(0) \cap T^*M $ with $ h $- principal symbol
                     \begin{equation}
                                  p+i \varepsilon \langle q \rangle _T+ \mathcal O (\varepsilon ^2).
                     \end{equation}

  Consequently, with the help of the sharp Garding inequality the spectrum of $P_\varepsilon$ in the domain $ \{z \in \mathbb C : |z| < \delta \}$, when $\varepsilon, h, \delta \rightarrow 0$ is confined in the band (voir \cite{Sj00.1}, \cite{Sj00.2}):
                         \begin{equation} \label{loca spectre}
                             ]-\delta, \delta[ + i \varepsilon \big [ \lim_{T\rightarrow \infty} \inf_{p^{-1}(0)}
                             Re \langle q \rangle _T -o(1),
                             \lim_{T\rightarrow \infty} \sup_{p^{-1}(0)} Re \langle q \rangle _T + o(1)  \big ].
                          \end{equation}

With more assumptions about the dynamics of classical flow of the first term of the unperturbed symbol (in
a certain energy level), one can obtain more detailed results on the asymptotic distribution of eigenvalues in
such a band. \\
 Let a given value $F_0 \in \big [\lim_{T\rightarrow \infty}
                      \inf_{p^{-1}(0)} Re \langle q \rangle _T, \lim_{T\rightarrow \infty}
                       \sup_{p^{-1}(0)} Re \langle q \rangle _T
                     \big ] $,
we want to determine all the eigenvalues of $P_\varepsilon$ in a rectangle of center $(0, \varepsilon F_0)$ and of size $\mathcal{O}(h^\delta) \times \mathcal{O }(\varepsilon h^\delta)$
(which is included in the previous band) for
                             $$h\ll \varepsilon \leq \mathcal O(h^\delta),$$
where $\delta>0$ is some number small enough but fixed.

 \begin{rema}
    The problem of determining asymptotically the eigenvalues of $ P_ \varepsilon $ in such a rectangle of spectral domain has been proposed in the literature with different assumptions on the Hamiltonian flow of $ p $: this flow can be periodic on an energy $ E $ near $ 0 $, completely integrable or almost integrable.

    The force of the perturbation $\varepsilon$ is treated with several regimes:
     $h^M \leq \varepsilon \leq \mathcal O(h^\delta)$, for $M $ fixed large enough,
     $h\ll \varepsilon \leq \mathcal O(h^\delta)$, $h^{1/3- \delta} < \varepsilon \leq \varepsilon_0$,...
    and the size of the rectangle: depends on $ h $ or does not depend on $ h $.

   One can read the articles \cite{Hitrik04}, \cite{Hitrik05}, \cite{Hitrik07},
                              \cite{Hitrik08}, \cite{Sj09}. \\
   Here, we present the completely integrable case in the regime $h\ll \varepsilon \leq \mathcal O(h^\delta)$.
   \end{rema}

Now, assume that $ p $ is completely integrable in a neighborhood of $p^{-1}(0) \cap T^*M$,
i.e there exists a smooth real function $f$, independent of $p$ such that $ \{p, f \}= 0$.
As explained in (\cite{Hitrik07}, page 21-22 and 55), the energy space $p^{-1}(0)$ is decomposed into a singular foliation:
 \begin{equation} p^{-1}(0) \cap T^*M   = \bigcup_{a \in J} \Lambda_a ,\end{equation}
 where $ J $ is assumed to be a connected graph with a finite number of vertices and of edges. We denote by $S$ the set of vertices.

For each $a \in J$, $\Lambda_a$ is a connected compact subset invariant with respect to $H_p$. Moreover, if $a \in J\backslash S$,
                      $\Lambda_a$ is a invariant Lagrangian torus depending analytically on $a$.
Each edge of $ J $ can be identified with a bounded interval of $ \mathbb R $.

Next, we assume the continuity of tori:
let $\Lambda_{a_0}, a_0 \in  J\backslash S$, for all $\mu >0, \exists \gamma >0$, such that
if $dist(a,a_0) < \gamma $, then $\Lambda_a \subset \{\rho \in p^{-1}(0) \cap T^*M: dist(\rho, \Lambda_0) < \mu \}$.
Note that this hypothesis holds for integrable systems with non-degenerate singularities.

For each torus $\Lambda_a, a  \in J\backslash S$, by the action-angle theorem \ref{A-A} there are
action-angle local coordinates
                        $(x, \xi) $  near $\Lambda_a$ such that $\Lambda_a \simeq \{ \xi=0\}$ and that $p$
becomes a function of $\xi$, $p=p(\xi)= p(\xi_1, \xi_2)$.
The frequency of $\Lambda_a$ can be defined as an element of the real projective line by
                      \begin{equation} \label{frequence} \omega(a)= [p_{\xi_1}'(0):p_{\xi_2}'(0)] .\end{equation}
Sometimes $\omega(a)$ is seen as an element of $\mathbb R$. \\
Moreover, by the action-angle theorem,  we know that $\omega(a)$ depends analytically of $a  \in J\backslash S$.
We will assume that the function $a \mapsto \omega(a) $ is not identically constant on any connected component of $J \backslash S$.

For each $a \in J$, we define a compact interval in $\mathbb R$:
                            \begin{equation} \label{Q vo cung}
                                    Q_\infty(a)=
                                     \big [ \lim_{T\rightarrow \infty} \inf_{\Lambda_a} Re \langle q \rangle _T,
                                      \lim_{T\rightarrow \infty} \sup_{\Lambda_a} Re \langle q \rangle _T\big].
                            \end{equation}
Then the spectral localization (\ref{loca spectre}) becomes
                      \begin{equation}  \label{loca. spectre 2}
                            Im(\sigma(P_\varepsilon) \cap \{z \in \mathbb C: |Re z| \leq \delta \}) \subset
                             \varepsilon \big [ \inf \bigcup_{a \in J}Q_\infty(a)-o(1),
                              \sup \bigcup_{a \in J}Q_\infty(a) +  o(1) \big ],
                      \end{equation}
when $\varepsilon, h, \delta \rightarrow 0$.

From now, for simplicity, we will assume that $q$ is real. \\
For each torus $\Lambda_a, a  \in J\backslash
S$, one defines $\langle q  \rangle_{\Lambda_a}$ the average of $ q $ with respect to the natural Liouville
measure on $\Lambda_a$
                      \begin{equation} \label{moyenne de q}
                       \langle q \rangle_{\Lambda_a}= \int_{\Lambda_a}q \end{equation}

\begin{rema}
 In action-angle coordinates $(x,\xi)$ near $\Lambda_a$ such that $\Lambda_a \simeq \{ \xi=0\}$, we have
                                    \begin{equation} \label{moyenne2}
                                     \langle q \rangle (\xi)=
                                     \frac{1}{(2\pi)^2}\int_{\mathbb{T}^2}q(x,\xi)dx.
                                    \end{equation}
In particular, $\langle q \rangle_{\Lambda_a}=\langle q \rangle(0)$.
\end{rema}

\begin{rema}[(\cite{Hitrik07}, page 56-57)]  \label{rem1}
                            For $a \in J\backslash S$:

                            - if $\omega(a) \notin \mathbb{Q}$, then $Q_\infty(a)= \{\langle q
                            \rangle_{\Lambda_a}\}$.

                            - if $\omega(a)= \frac{m}{n} \in \mathbb{Q}$ ($m \in \mathbb Z,  n \in \mathbb N$), then
                            $$Q_\infty(a)= \langle q \rangle_{\Lambda_a} + \mathcal O \big(\frac{1}{k(\omega(a))^\infty}  \big)[-1,1], \quad k(\omega(a)):= |m|+|n|.$$

In particular $$\sum_{a:\omega(a) \in \mathbb{Q} } |Q_\infty(a) | < \infty.$$
\end{rema}

$\langle q \rangle_{\Lambda_a} $ depends analytically of $a \in J\backslash S$ and we assume it can be extended continuously on $J$.
Furthermore, we assume that the function $a \mapsto  \langle q \rangle (a) =\langle q \rangle_{\Lambda_a} $ is not identically constant on any connected component of $J\backslash S$.

Note that $p$ and $\langle q \rangle $ commute in neighborhood of $p^{-1}(0) \cap T^*M$.

\subsubsection{Asymptotic eigenvalues}

 \begin{defi}  \label{diop}

 For a torus $\Lambda_a,$ $ a  \in J\backslash S$ and $\omega(a)$ defined as (\ref{frequence}) and
 let $\alpha >0 $, $d>0$, we say that $\Lambda_a$ is $(\alpha,d)-$Diophantine if:
                            \begin{equation}  \label{dn alpha-d dioph}
                                    \big | \omega(a)- \frac{m}{n}\big | \geq \frac{\alpha}{ n^{1+d}}, \quad
                                    \forall  m \in \mathbb Z, n  \in \mathbb N^*,
                            \end{equation}
    here $\omega(a)$ is seen as an element of $\mathbb R$.
 \end{defi}

  Note also that when $d>0$ is fixed, the Diophantine property (for some $\alpha$) of $\Lambda_a$ is independent of the choice of action-angle coordinates.

\begin{defi}  \label{dn bonnes valeurs}
        For $\alpha>0 $ and $d>0$, we define the set of "good values" $\mathcal{G}(\alpha,d)$  obtained from $\cup_{a \in J}Q_\infty(a)$ by removing the following set of "bad values" $\mathcal{B}(\alpha,d)$:
        \begin{displaymath}
                                        \mathcal{B}(\alpha,d)= \Bigg ( \bigcup_{dist(a,S) < \alpha} Q_\infty(a) \Bigg )
                                          \bigcup \Bigg ( \bigcup_{a \in J\backslash S: |\omega'(a)| < \alpha } Q_\infty(a) \Bigg )
                                           \bigcup \Bigg( \bigcup_{a \in J\backslash S: |d \langle q \rangle_{\Lambda_a}| < \alpha }
                                            Q_\infty(a) \Bigg )
        \end{displaymath}
                                \begin{displaymath}
                                    \bigcup \Bigg ( \bigcup_{a \in J\backslash S: \omega(a) \textrm{ is not }
                                    (\alpha,d)-
                                      \textrm{Diophantine}}
                                     Q_\infty(a)\Bigg ).
                                \end{displaymath}
\end{defi}

\begin{rema} \label{rem2}

         \begin{itemize} \label{peti mesure}
                                \item The measure of the set of bad values $\mathcal{B}(\alpha,d)$ in
                                    $\cup_{a \in J}Q_\infty(a)$ is small ($\mathcal O (\alpha)$) when $\alpha
                                    >0$ is small and $d>0$ is fixed, provided that the measure of
                                    \begin{equation}  \label{a condition}
                                            \Bigg ( \bigcup_{a \in
                                             J\backslash S: \omega(a)| \in \mathbb Q } Q_\infty(a)
                                             \Bigg ) \bigcup \Bigg ( \bigcup_{a \in S} Q_\infty(a)
                                            \Bigg )
                                    \end{equation}
                                    is sufficiently small, depending on $\alpha$ (see \cite{Hitrik07}).
                                \item If $F_0 \in \mathcal{G}(\alpha,d) $ is a good value , then by definition of $\mathcal{B}(\alpha,d) $ and remark (\ref{rem1}), the pre-image $\langle q \rangle^{-1}(F_0) $ is a finite set
                                         $$\langle q \rangle ^{-1}(F_0)= \{a_1,\ldots,a_L\} \subset J\setminus S.$$
                                    The corresponding tori $\Lambda_{a_1},\ldots,\Lambda_{a_L}$ are tori
                                    $(\alpha, d)$-Diophantine of $p^{-1}(0) \cap T^*M$.
                                    By this way, when
                                    $F_0 $ varies in $\mathcal{G}(\alpha,d) $, we obtain a Cantor family of invariant tori $(\alpha, d)$-Diophantine in the energy space
                                    $p^{-1}(0) \cap T^*M$.
         \end{itemize}
\end{rema}



            \begin{defi}[\cite{lectureColin}, \cite{Arnold67}, \cite{Cappel94}]
                        Let $ E $ is a symplectic space and his Lagrangian Grassmannian $ \Lambda (E) $ (which is set of all Lagrangian subspaces of $ E $). We consider a bundle $ B $ in $ E $ over the circle or a compact interval provided with a Lagrangian subbundle called vertical.
                        Let $ \lambda (t) $ a section of $ \Lambda (B) $ which is transverse to the vertical edges of the interval in the case where the base is an interval.

                        The Maslov index of $ \lambda (t) $ is the intersection number of this curve with the singular cycle of Lagrangians which do not cut transversely the vertical subbundle.
            \end{defi}

\begin{theo}[\cite{Hitrik07}]  \label{theorem quasi-spectre}
    Suppose that $P_\varepsilon$ is an operator with principal symbol (\ref{symb prin}) and satisfying the above conditions.
    Let $F_0 \in  \mathcal{G}(\alpha,d)$ a good value. As in the remark (\ref{rem2}), we write
    $ \langle q \rangle^{-1}(F_0)= \{a_1,\ldots,a_L\} \subset J\setminus S $ and the corresponding tori $\Lambda_{a_1},\ldots,\Lambda_{a_L}$ in $p^{-1}(0) \cap T^*M$. \\
     For each $j=1, \ldots, L $, note $S_j \in \mathbb R^2$ the action and $k_j \in \mathbb Z^2$ the Maslov index of the fundamental cycles $(\gamma_{1,j},\gamma_{2,j})$ of $\Lambda_{a_j}$  which are defined by
            $$\kappa_j(\gamma_{l,j}) =  \{ x \in \mathbb{T}^2: x_l=0 \}, l=1,2,$$
    where $\kappa_j$ is a action-angle coordinates in neighborhood of torus $\Lambda_{a_j}$,
    \begin{equation}  \label{coor}
    \kappa_j: (\Lambda_{a_j}, T^*M) \rightarrow (\xi=0, T^* \mathbb T ^2)
    \end{equation}
    We assume that $h \ll \varepsilon = \mathcal{O}(h^\delta)$ for $0< \delta <1 $.

    Then the eigenvalues of $ P_ \varepsilon $ with multiplicity in a rectangle of form
     \begin{equation}  \label{cua so}
             R(\varepsilon,h)= \Big[-\frac{h^\delta}{\mathcal{O}(1)},\frac{h^\delta}{\mathcal{O}(1)} \Big]
                                        +i \varepsilon \Big [ F_0-\frac{h^\delta}{\mathcal{O}(1)}, F_0+ \frac{h^\delta}{\mathcal{O}(1)} \Big ]
      \end{equation}
    are given by
                                    \begin{equation} \label{eigenvalues}
                                        P_j^{(\infty)} \Big( h(k-\frac{k_j}{4})-\frac{S_j}{2 \pi},\varepsilon; h\Big) + \mathcal O(h^\infty),
                                        k \in \mathbb Z^2, 1 \leq j \leq L.
                                    \end{equation}
    Here $P_j^{(\infty)}(\xi,\varepsilon; h )$ is a smooth function of $\xi$ in a neighborhood of $(0, \mathbb R^2)$ and $\varepsilon$ in a neighborhood of $(0, \mathbb R)$, real valued for $\varepsilon =0$ and admits an asymptotic expansion in the space of symbols.
    \begin{equation} \label{symbole normal}
                                        1 \leq j \leq L, \quad P_j^{(\infty)} ( \xi,\varepsilon; h) \sim \sum_{k=0}^{\infty}
                                                        h^k p_{j,k}^{(\infty)} (\xi, \epsilon)
                                    \end{equation}
                                    whose principal symbol is
                                     \begin{equation} \label{prin normal}
                                             p_{j,0}^{(\infty)} (\xi, \varepsilon)= p_j(\xi)+ i \varepsilon \langle q_j \rangle (\xi) +
                                             \mathcal O(\varepsilon ^2).
                                    \end{equation}
    Here $p_j, q_j$ are the expressions of $p,q$ in action-angle variables near of $\Lambda_j$, given by (\ref{coor}) and  $\langle q_j \rangle $is the average of $q_j$ on tori, defined in (\ref{moyenne2}).
\end{theo}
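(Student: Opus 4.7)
The plan is to construct, near each Diophantine torus $\Lambda_{a_j}$, a microlocal Birkhoff normal form for $P_\varepsilon$ whose spectrum can be read off from a semiclassical quantization condition on $T^*\mathbb{T}^2$, and to use the ``good value'' hypothesis on $F_0$ to rule out any other contributions to $R(\varepsilon,h)$.

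First I would work microlocally near each $\Lambda_{a_j}$ using the action-angle chart $\kappa_j$ of (\ref{coor}), which sends $\Lambda_{a_j}$ to the zero section of $T^*\mathbb{T}^2$ and in which $p$ becomes a function $p_j(\xi)$ of the actions alone while $q$ becomes $q_j(x,\xi)$. Quantizing $\kappa_j$ by a microlocally unitary $h$-Fourier integral operator $U_j$ conjugates $P_\varepsilon$ into an operator on $T^*\mathbb{T}^2$ with principal symbol $p_j(\xi) + i\varepsilon q_j(x,\xi) + \mathcal O(\varepsilon^2)$. The action $S_j$ of the fundamental cycles and the Maslov index $k_j$ enter as the twisting of the quantization: the admissible $\xi$-frequencies are shifted from $h\mathbb Z^2$ to $h(k - k_j/4) - S_j/(2\pi)$, which is exactly the argument appearing in (\ref{eigenvalues}).

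Next I would carry out the iterative Birkhoff normal form procedure in $\varepsilon$, as in \cite{Hitrik07}. At step $N$, one must solve a cohomological equation of the form $\{p_j,\psi_N\} = r_N - \langle r_N\rangle$ on $\mathbb{T}^2$, which amounts to inverting $H_{p_j} = \omega(a_j)\cdot \partial_x$ on the nonzero Fourier modes. The $(\alpha,d)$-Diophantine assumption on $\omega(a_j)$ produces only polynomial loss in the frequency, keeping $\psi_N$ smooth with controlled seminorms. Quantizing $\psi_N$ and conjugating by $e^{i\varepsilon^N \mathrm{Op}^w_h(\psi_N)/h}$ pushes the $x$-dependence to the next order in $\varepsilon$. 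A Borel-type resummation in $\varepsilon$ and $h$ yields a total normal-form symbol $P_j^{(\infty)}(\xi,\varepsilon;h)$ depending on $\xi$ only, modulo $\mathcal O(h^\infty)$ and modulo an error microlocalized away from $\Lambda_{a_j}$. The principal symbol identity (\ref{prin normal}) then follows from the classical averaging principle: the first step of Birkhoff replaces $q_j$ by its torus average $\langle q_j\rangle$, defined in (\ref{moyenne2}). Because the normal form depends only on $\xi$, the resulting operator acts diagonally on the Fourier basis $\{e^{ik\cdot x}\}$ of $L^2(\mathbb{T}^2)$, yielding the eigenvalue formula (\ref{eigenvalues}).

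It remains to show that, inside the rectangle $R(\varepsilon,h)$ of (\ref{cua so}), these $L$ families of eigenvalues exhaust the spectrum with multiplicity. Here one exploits that $F_0 \in \mathcal G(\alpha,d)$: by definition of $\mathcal B(\alpha,d)$, the preimage $\langle q\rangle^{-1}(F_0)$ consists only of the tori $\Lambda_{a_1},\ldots,\Lambda_{a_L}$, all Diophantine and away from singularities. Outside small neighborhoods of these tori, the refined spectral localization (\ref{loca spectre}) for operators with symbol averaged along $H_p$ places $\mathrm{Im}\,\sigma(P_\varepsilon)$ at distance $\gg \varepsilon h^\delta$ from $\varepsilon F_0$, so a standard microlocal resolvent argument via the sharp G\aa{}rding inequality shows $P_\varepsilon - z$ is invertible there for $z \in R(\varepsilon,h)$. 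The main obstacle, and the technical heart of the proof, is to make the Birkhoff scheme uniform in the joint regime $h \ll \varepsilon = \mathcal O(h^\delta)$: each step loses powers of $h^{-\delta}$ from the Diophantine small divisors and from the symbolic calculus, so one must choose $\delta$ small enough and truncate the asymptotic series at an $(\varepsilon,h)$-dependent order to beat these losses and recover an $\mathcal O(h^\infty)$ remainder, while simultaneously tracking $L$ competing normal forms whose validity overlaps only in the spectral plane and not on $T^*M$.
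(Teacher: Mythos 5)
Your proposal follows essentially the same route as the paper's own sketch (and as \cite{Hitrik07} to which the theorem is attributed): pass to action-angle coordinates near each Diophantine torus via a microlocally unitary $h$-FIO, iterate a Birkhoff normal form built on cohomological equations whose small divisors are controlled by the $(\alpha,d)$-Diophantine condition, and read off the quasi-eigenvalues from the induced operator acting on the Floquet-twisted space $L^2_\theta(\mathbb T^2)$, where $\theta=\frac{S_j}{2\pi h}+\frac{k_j}{4}$ accounts for the action and Maslov shift in (\ref{eigenvalues}). Two small caveats: the paper's iteration is graded jointly in $(\xi,\varepsilon,h)$ (with $\varepsilon,h$ counting double and $\xi$ counting once, cf.\ section~\ref{sec FNB}), not purely in powers of $\varepsilon$ as your conjugations $e^{i\varepsilon^N\mathrm{Op}^w_h(\psi_N)/h}$ suggest, so the bookkeeping you describe as ``losing powers of $h^{-\delta}$ per step'' is organized differently in practice; and the exhaustiveness/completeness step, which you correctly identify as needing a resolvent estimate away from the $L$ tori and which you attribute to sharp G{\aa}rding plus spectral localization, is handled in \cite{Hitrik07} by a global Grushin problem rather than a direct resolvent bound, though the good-value hypothesis plays exactly the role you assign it. These are differences of implementation, not of idea.
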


 \begin{rema}
      In the case of the above theorem that for every $j =1, \ldots,L$, the eigenvalues form a deformed spectrum lattice in the rectangle (\ref{cua so}) of size $(h^\delta  \times \varepsilon h^\delta) $.
      Therefore the spectrum of $ P_ \varepsilon $ in the rectangle therefore is the union of $ L $ such lattices.

      Note that this is not valid for every rectangle. However, it is valid for a "good rectangle" whose center $(0, \varepsilon F_0)$ with $ F_0 $ is a good value.
      However, as we said in the remark \ref{rem2}, with the condition (\ref{a condition}), the complement of the set of good values is a small measure (see \cite{Poschel82}),
      then there are many such good rectangles in the band (\ref{loca. spectre 2}). This signifies that one can give asymptotically "almost all" eigenvalues of $ P_ \varepsilon $ in this band.
\end{rema}

    \begin{rema}
            In the case where $p$ is nearly integrable, the result of the theorem is
                                 still true thanks to the existence invariant KAM tori
           which allows us to realize microlocally the construction of the quantum normal form of $P_\varepsilon$ (see section 7.3 in \cite{Hitrik07}).

         For the KAM theory (Kolmogorov-Arnold-Moser), one may consult the references
         \cite{Poschel82}, \cite{Broer07}, \cite{Bost86}, \cite{Broer91}, \cite{Delshams96}.

     \end{rema}

 \begin{proof}[Main idea of the proof of theorem(\ref{theorem quasi-spectre})]

     For a detailed proof of the theorem, one can consult \cite{Hitrik07}, \cite{Hitrik06}.
     We will give here some important ideas of the proof of theorem.

     The principle is the formal construction of the Birkhoff quantum normal form for $P_\varepsilon$, microlocally near a fixed Diophantine torus in $p^{-1}(0) \cap T^*M$, say $\Lambda_1 \in \{ \Lambda_{a_1},\ldots,\Lambda_{a_L} \}$.
     The Diophantine condition is necessary for this construction. For this method, see also \cite{Ali85}, \cite{Popov00},\cite{Bambusi99}, \cite{Eckhardt86}.

       In this procedure we first use (formally) a canonical (symplectic) transformation for the total symbol of $P_\varepsilon $ in order to reduce it to the normal form (\ref{symbole normal}),(\ref{prin normal})
      modulo $\mathcal O(h^\infty)$ which is independent of $ x $ and homogeneous in $(h, \xi, \varepsilon)$ in all orders.
      Then, the operator $P_\varepsilon$ is conjugated by a Fourier integral operator with complex phase to a new operator with such a total symbol. \\
       Indeed, by introducing action-angle coordinates near $ \Lambda_1 $, $ P_ \varepsilon $ is  microlocally defined around the section $\xi=0$ in $T^* \mathbb T^2$  and its principal symbol (\ref{symb prin}) has the form:
                             \begin{equation}  \label{symbol micro}
                                  p_\varepsilon(x,\xi)=p(\xi) +i \varepsilon q(x,\xi)+ \mathcal O (\varepsilon ^2)
                            \end{equation}
       with $p(\xi)= \widetilde{\omega} \cdot \xi + \mathcal O (\xi^2)$,
       where $\widetilde{\omega}= (p_{\xi_1}'(0),p_{\xi_2}'(0))$ and the frequency $ \omega(a)= [ p_{\xi_1}'(0):p_{\xi_2}'(0)]$, defined in (\ref{frequence}) satisfies the condition (\ref{dn alpha-d
                        dioph}).

       Then, by the Birkhoff normal form procedure, for any arbitrary fixed $ N $ large enough, we can construct a holomorphic canonical transformation $\kappa_\varepsilon^{(N)}$ defined in a complex neighborhood of $ \xi = 0 $ in $ T ^* \mathbb T ^ 2 $ such that the total symbol $ P $ of $ P_ \varepsilon $ is reduced to a new symbol:
                        \begin{equation}
                             P \circ \kappa_\varepsilon^{(N)} (x, \xi,\varepsilon; h )= p_0+ h p_1+ h^2 p_2+\cdots,
                        \end{equation}
       where every $p_j= p_j(x,\xi,\varepsilon), j\geq1$ holomorphic near $\xi=0$ in $T^* \mathbb T^2$, depending smoothly in $\varepsilon \in (0, \mathbb R)$, independent of $x$ to order $N$ and it is important that the principal symbol $ p_0 $ satisfies
       \begin{equation}
                         p_0= p_\varepsilon \circ \kappa_\varepsilon^{(N)}  (x,\xi)= p^{(N)}(\xi, \varepsilon) + r_{N+1}(x,\xi,\varepsilon),
       \end{equation}
       where $p^{(N)}(\xi, \varepsilon)= p(\xi)+ i \varepsilon \langle q \rangle (\xi) +
                                             \mathcal O(\varepsilon ^2)$, $\langle q \rangle (\xi)$
       given by (\ref{moyenne2}), $r_{N+1}(x,\xi,\varepsilon) = \mathcal O((\xi, \varepsilon )^{N+1})$. \\
    Thus, $ p_0 $ has the same form as (\ref{prin normal}).

     On operator level, $P_\varepsilon$ is conjugated to a new operator of the form
                        \begin{equation}  \label{op reduit}
                            P^{(N)}(h D_x, \varepsilon; h)+ R_{N+1}(x,h D_x, \varepsilon; h ),
                        \end{equation}
        where $P^{(N)}(h D_x, \varepsilon; h)$ has a total symbol  independent
        of $x$ whose principal symbol is $p^{(N)}$ and $R_{N+1}(x,\xi, \varepsilon; h )=\mathcal O((h,\xi, \varepsilon )^{N+1}) $. \\
        The operator (\ref{op reduit}) acts on the space $L_\theta^2(\mathbb T^2)$
        of Floquet periodic functions microlocally defined over $\mathbb T^2$ whose an element $u$ satisfies
                        $$u(x-\nu)= e^{i \theta\cdot \nu } u(x), \quad \theta= \frac{S_1}{2\pi h}+ \frac{k_1}{4}, \quad \nu \in 2\pi \mathbb Z^2.$$
         An orthonormal basis of this space is
         $$ \Big \{x \in \mathbb T^2,  e_k(x)= e^{ix(k-\theta)}=
                           e^{\frac{i}{h}x.\big (h(k-\frac{k_1}{4})-\frac{S_1}{2\pi } \big )}, k \in \mathbb Z^2 \Big \}.$$
         Consequently, the eigenvalues of $ P_ \varepsilon $ modulo $ \mathcal O (h ^ \infty) $ are given by (\ref{eigenvalues}).
    \end{proof}

   \begin{rema} \label{bien doi sym prin}
        For all $j=1, \ldots, L$, from (\ref{prin normal}), at $\xi=0$ we have
        $ p_{j,0}^{(\infty)} (0,\varepsilon)=  i \varepsilon F_0 + \mathcal O(\varepsilon ^2)$ and therefore
                    $ P_j^{(\infty)}(0,\varepsilon;h)= i \varepsilon F_0 +
                                 \mathcal O(\varepsilon ^2)+ \mathcal O(h) $.
        Consequently, $p_{j,0}^{(\infty)} (0,\varepsilon) \sim i \varepsilon F_0$
        when $\varepsilon \rightarrow 0 $ and $ P_j^{(\infty)}(0,\varepsilon;h) \sim i \varepsilon F_0  $ when $\varepsilon,h \rightarrow 0 , h \ll \varepsilon $.

        Moreover, we have also $ d (p_j)_{| \xi= 0 }=: a_j = (a_{1,j},a_{2,j}) \in \mathbb R^2 $ and
        $ d ( \langle q_j \rangle)  _{| \xi= 0}=:b_j= (b_{1,j},b_{2,j}) \in \mathbb R^2 $ are $\mathbb R$-linearly independent.
        We can rewrite the principal symbol (\ref{prin normal}) in the form
                         \begin{equation}  \label{prin normal da bien doi}
                            p_{j,0}^{(\infty)} (\xi, \epsilon)= i \varepsilon F_0+ (a_j+i \varepsilon b_j)\cdot \xi+  \mathcal O(\xi ^2)
                                             + \mathcal O(\varepsilon ^2).
                         \end{equation}
   \end{rema}

        \begin{prop}  \label{diffeo}
            Let $\lambda= P(\xi; \varepsilon, h)$ a complex-valued smooth function of $\xi$ near $0 \in \mathbb R^2$ and of small parameters $ h, \varepsilon $ near $0 \in \mathbb R$.
             Suppose that we can write $P$ in the form
             $$P(\xi; \varepsilon, h)= P_0+ \mathcal O(h)$$
             with
             $$P_0= P_0(\xi; \varepsilon)= g_1(\xi)+ i \varepsilon g_2(\xi) + \mathcal O(\varepsilon ^2)$$
             such that $dg_1(0) \wedge dg_2(0) \neq 0$.

             If we assume that $h \ll \varepsilon$, then for $h$ et $\varepsilon$ small enough,
             there are $\rho, r>0$ small enough such that $P$ is a local diffeomorphism near $\xi=0$ from $B(0, \rho)$
             to its image, denoted $B(\varepsilon)$.
         \end{prop}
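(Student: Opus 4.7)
The plan is to apply a quantitative version of the inverse function theorem, but with one crucial preparation: because the imaginary part of $P$ carries a prefactor $\varepsilon$, the Jacobian of $P$ at $\xi = 0$ degenerates as $\varepsilon \to 0$, so a naive application of the theorem would yield a radius $\rho$ shrinking with $\varepsilon$. To avoid this, I would rescale the imaginary component first.

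More precisely, identify $\mathbb{C} \cong \mathbb{R}^2$ and introduce the rescaled map
\begin{equation*}
\widetilde{P}(\xi;\varepsilon,h) := \Bigl(\mathrm{Re}\,P(\xi;\varepsilon,h),\ \tfrac{1}{\varepsilon}\,\mathrm{Im}\,P(\xi;\varepsilon,h)\Bigr).
\end{equation*}
Using the assumed expansion $P = g_1(\xi) + i\varepsilon g_2(\xi) + \mathcal{O}(\varepsilon^2) + \mathcal{O}(h)$ and the hypothesis $h \ll \varepsilon$, one checks that
\begin{equation*}
\widetilde{P}(\xi;\varepsilon,h) = \bigl(g_1(\xi),\,g_2(\xi)\bigr) + \mathcal{O}(\varepsilon) + \mathcal{O}(h/\varepsilon),
\end{equation*}
together with analogous estimates on derivatives in $\xi$, uniformly on a fixed neighborhood of $\xi=0$. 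Thus $\widetilde{P}$ is a $C^1$-small perturbation of the fixed map $G(\xi) := (g_1(\xi),g_2(\xi))$.

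Now $G$ is a local diffeomorphism at $\xi=0$ because its Jacobian determinant there equals $dg_1(0)\wedge dg_2(0)\neq 0$. Fix $\rho_0 > 0$ so that $G$ is a diffeomorphism from $B(0,\rho_0)$ onto its image and $|\det dG(\xi)| \geq c_0 > 0$ on $B(0,\rho_0)$. By the quantitative inverse function theorem, there exist $\rho \in (0,\rho_0)$ and $\eta > 0$ such that any $C^1$-perturbation of $G$ of size $\leq \eta$ on $B(0,\rho_0)$ is again a diffeomorphism from $B(0,\rho)$ onto its image. Choosing $\varepsilon$ and $h/\varepsilon$ small enough makes the perturbation estimate above smaller than $\eta$, so $\widetilde{P}(\cdot;\varepsilon,h)$ is a diffeomorphism from $B(0,\rho)$ onto its image.

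Finally, since $P = L_\varepsilon \circ \widetilde{P}$ where $L_\varepsilon(u,v) = u + i\varepsilon v$ is a linear isomorphism of $\mathbb{R}^2$ for any $\varepsilon > 0$, the map $P(\cdot;\varepsilon,h)$ is itself a diffeomorphism from $B(0,\rho)$ onto $B(\varepsilon) := P(B(0,\rho))$, which is the desired conclusion. The only subtle step is the rescaling; without it, the Jacobian of $P$ at $\xi=0$ is $\varepsilon[\,dg_1(0)\wedge dg_2(0) + \mathcal{O}(\varepsilon)+\mathcal{O}(h/\varepsilon)\,]$, and one would lose the uniformity of $\rho$ in $\varepsilon$. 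Once the rescaling is in place, the remaining estimates are routine applications of Taylor's formula combined with the hypothesis $h \ll \varepsilon$.
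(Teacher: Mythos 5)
Your proposal is correct and takes essentially the same route as the paper: the paper sets $\widehat{P} := \chi^{-1}\circ P$ (where $\chi(u_1,u_2) = (u_1,\varepsilon u_2)$), which is precisely your rescaling $\widetilde{P} = (\mathrm{Re}\,P, \tfrac{1}{\varepsilon}\mathrm{Im}\,P)$, then checks that the Jacobian at $\xi=0$ equals $dg_1(0)\wedge dg_2(0) + \mathcal{O}(\varepsilon) + \mathcal{O}(h/\varepsilon)$ and invokes the inverse function theorem. Your invocation of a quantitative inverse function theorem to keep $\rho$ uniform in $(\varepsilon,h)$ is a modest sharpening of the paper's somewhat terser argument, but the key idea — rescale the imaginary part to remove the degeneracy before applying IFT — is identical.
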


         \begin{proof}
            First, seeing $P$ as a function of $\mathbb R^2$, we set $\widehat{P}:= \chi^{-1} \circ P$.
            Then we can write
                $$\widehat{P}= g_1(\xi)+ i g_2(\xi) + \mathcal O(\varepsilon) + \mathcal O(\frac {h}{\varepsilon}). $$
            Let $a=(a_1,a_2)= dg_1(0), b=(b_1,b_2)=  dg_2(0) $ and $M=\mid a_1b_2-a_2 b_1 \mid > 0$. \\
            The differential of $\widehat{P}$ in $\xi=0$ is
             \begin{equation*}
                            \frac{\partial \widehat{P} } {\partial \xi}(0)
                            = a+i \varepsilon b + \mathcal{O}(\varepsilon ) + \mathcal O(\frac{h}{\varepsilon} )=
                                \left( \begin{array}{ccc}
                                a_1+\mathcal{O}(\varepsilon ) + \mathcal O(\frac{h}{\varepsilon} )&
                                  a_2 + \mathcal{O}(\varepsilon ) + \mathcal O(\frac{h}{\varepsilon} ) \\
                                 b_1+ \mathcal{O}(\varepsilon ) + \mathcal O(\frac{h}{\varepsilon} )
                                 &  b_2+ \mathcal{O}(\varepsilon ) + \mathcal O(\frac{h}{\varepsilon} )
                                 \end{array} \right)
              \end{equation*}
              and thus
              $$\mid det( \frac{\partial \widehat{P}}{\partial \xi} (0)) \mid
                          = M+ \mathcal{O}(\varepsilon) + \mathcal O(\frac{h}{\varepsilon} ) .$$
              Then for $h, \varepsilon$ small enough and $h \ll \varepsilon$, it's clear that
              $\mid det( \frac{\partial \widehat{P}}{\partial \xi} (0)) \mid  \simeq M  $ is nonzero.
              Therefore, the local inverse function theorem ensures that
              $\widehat{P}$ is local diffeomorphism in $\xi=0$.
              Hence we get the desired result for $P$.
         \end{proof}


        Let us return to the spectral problem of $ P_\varepsilon$ discussed in the theorem (\ref{theorem quasi-spectre}).
        For each $j=1, \ldots,L$, as an application of the previous lemma with $P= P_j^{(\infty)}$, then
        $ P_j^{(\infty)}$ is a smooth local diffeomorphism in $\xi=0 \in \mathbb R^2$ from a neighborhood of $0$ to its image, noted by $B_j(\varepsilon)$ .
        Note that for $h$ small enough, the good rectangle $R(\varepsilon,h)$ is always included in $B_j(\varepsilon)$.

        We denote $\Sigma_j(\varepsilon,h) \subset R(\varepsilon,h) $ the quasi-eigenvalues of $ P_\varepsilon$ in $R(\varepsilon,h)$, given by the image by $P_j^{(\infty)}$ of
        $ \xi= h(k-\frac{k_j}{4})-\frac{S_j}{2 \pi}, k \in \mathbb Z^2$.

        Writing $hk= \xi + h \frac{k_j}{4} +  \frac{S_j}{2 \pi}$ and letting
                         \begin{equation} \label{hk}
                                f_j:= (P_j^{(\infty)})^{-1}+ h \frac{k_j}{4} +  \frac{S_j}{2 \pi},
                         \end{equation}
        then $f_j=f_j(\lambda, \varepsilon;h)$ is a local diffeomorphism from
                         $B_j(\varepsilon)$ to its image.
        Denote $E_j(\varepsilon,h)=f_j(R(\varepsilon,h))
                         $ which is close to $ \frac{S_j}{2 \pi}$ and
                         $\Gamma_j(\varepsilon,h):=f_j(\Sigma_j(\varepsilon,h))$, then we have
                         $\Gamma_j(\varepsilon,h)=h \mathbb Z^2 \cap  E_j(\varepsilon,h).$ \\
        In summary, we have:
                         \begin{eqnarray} \label{new window}
                                         f_j: & R(\varepsilon,h) \rightarrow E_j(\varepsilon,h) \\
                                  f_j\mid_{\Sigma_j(\varepsilon,h)}:& \Sigma_j(\varepsilon,h) \rightarrow
                                      \Gamma_j(\varepsilon,h) \subset h \mathbb Z^2
                         \end{eqnarray}

                \begin{rema}
                       On the other hand, if we assume that $L=1$, the theorem (\ref{theorem quasi-spectre})
                       asserts that in $R(\varepsilon,h)$, the quasi-eigenvalues are equal to the real eigenvalues of $P_\varepsilon$ modulo $\mathcal O(h^\infty)$:
                                  \begin{equation}
                                      \sigma(P_\varepsilon) \cap R(\varepsilon,h) = \Sigma_1(\varepsilon,h)+ \mathcal O(h^\infty),
                                  \end{equation}
                       in the sense that there is a bijection
                                 \begin{equation}
                                     \chi:\Sigma_1(\varepsilon,h)\rightarrow \sigma(P_\varepsilon) \cap R(\varepsilon,h)
                                 \end{equation}
                      such that $\chi= Id + \mathcal O(h^\infty)$.
                      The diffeomorphism $f:= f_1$ in (\ref{new window}) thus satisfies
                                 \begin{eqnarray} \label{axa spectre}
                                             f :  R(\varepsilon,h) & \rightarrow & E_1(\varepsilon,h)
                                                        \nonumber \\
                                       \sigma(P_\varepsilon) \cap R(\varepsilon,h) \ni  \lambda & \mapsto &
                                               f(\lambda,\varepsilon; h) \in h \mathbb Z^2 +\mathcal O(h^\infty).
                                \end{eqnarray}

 				\begin{figure}[!h]
 \begin{center}
                            \includegraphics[width=0.8\textwidth]{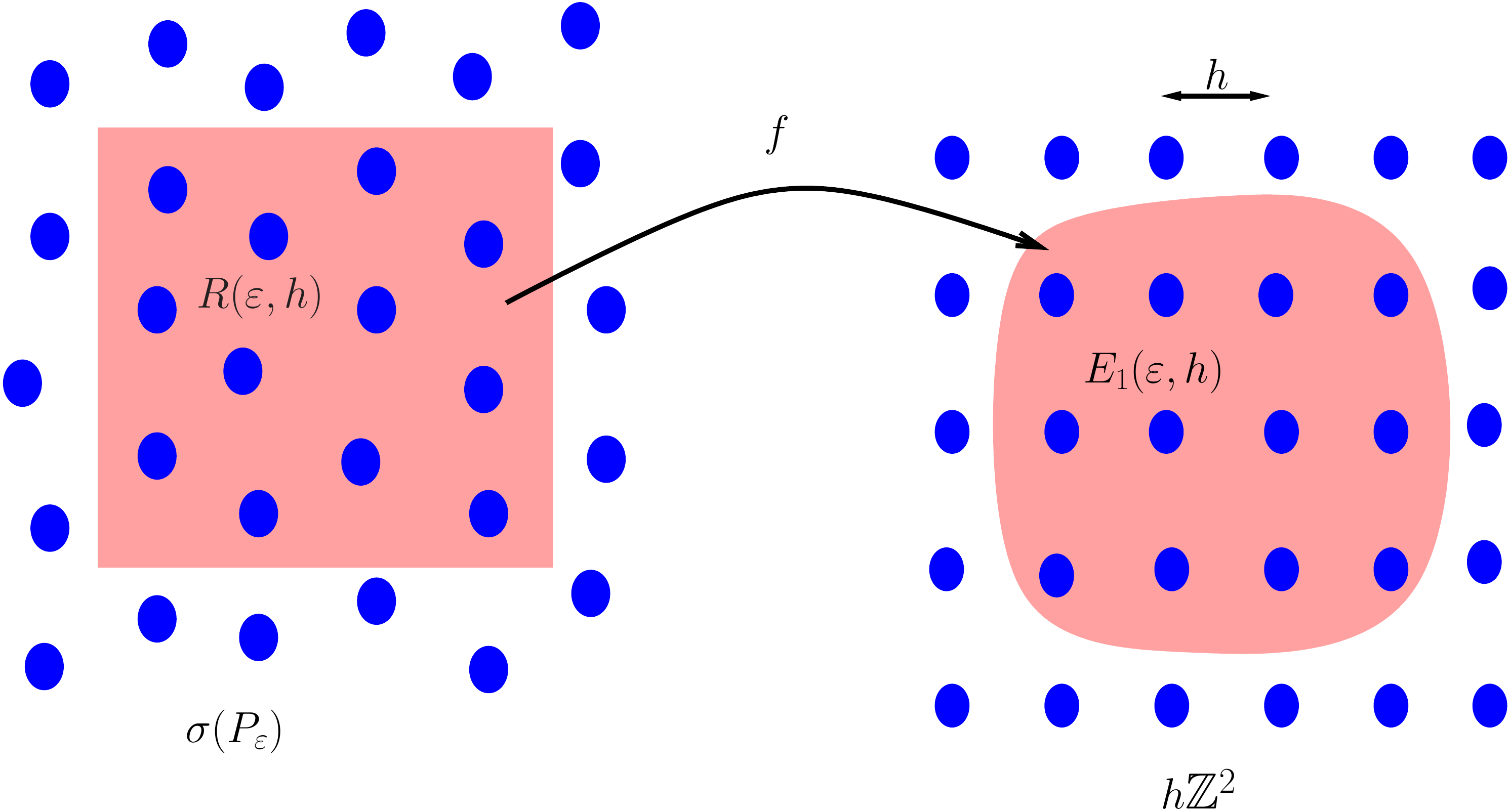} \\
                             \caption{A micro-chart of spectrum of $P_\varepsilon$ }
 \end{center}
                \end{figure}
 			
~~\\
  In particular, we have a bijection between the sets
                               \begin{equation} \sigma(P_\varepsilon) \cap R(\varepsilon,h) \simeq
                                         \Sigma_1(\varepsilon,h) \simeq \Gamma_1(\varepsilon,h) \subset h \mathbb Z^2
                               \end{equation}
        \end{rema}

            \subsubsection{What is the size of $E(\varepsilon,h)$?}  \label{la taille de E}

                    As we know, the surface $R(\varepsilon,h)$ is of size
                         $\mathcal{O}(h^\delta) \cdot \mathcal{O}(\varepsilon h^\delta)
                         $, Now we are interested the size of $E:= E_j(\varepsilon,h)$
                         which is the image of $R(\varepsilon,h)$ by the diffeomorphism $f_j$ (\ref{new window}).
                    Let $P$ be still one of $P_j^{(\infty)}$. By the proposition
                         (\ref{diffeo}), in the regime $ h \ll \varepsilon $, the differential of
                         $\lambda= P(\xi,\varepsilon;h) $ in
                         $\xi=0$ is a determinant of size $\mathcal{O}(\varepsilon)$:
                          $$\mid det( \frac{\partial P}{\partial \xi} (0,\varepsilon;h)) \mid
                          = M \varepsilon, $$ with $M>0$.
                    By writing the Taylor expansion of $\xi =P^{-1}= g(\lambda, \varepsilon;
                         h)$ in $\lambda_0= P(0,\varepsilon;h)$, we have:
                         $$|\xi| \leq \frac{1}{M \varepsilon} |\lambda - \lambda_0| + \mathcal{O}(|\lambda - \lambda_0|^2) .$$
                    Hence, if $\lambda \in R(\varepsilon,h)$,
                    then $ |\lambda - \lambda_0| \leq \mathcal{O}(h^\delta) $ and so
                         $$|\xi| \leq \frac{1}{M \varepsilon}  \mathcal{O}(h^\delta)  +
                          ( \mathcal{O}(h^\delta))^2
                          \leq \mathcal{O}( \frac{h^\delta}{\varepsilon})  .$$
                    Consequently, from the formula (\ref{hk}), we have that $E:=  E_j(\varepsilon,h)$ is contained in a ball of radius $\mathcal{O}( \frac{h^\delta}{\varepsilon}) $.

 \subsubsection{How is the lattice of quasi-eigenvalues and the lattice of spectrum?}  \label{vi phan lambda}

    For each $j=1, \ldots, L$, from the remark \ref{bien doi sym prin} we can express
                        $P_j^{(\infty)}$ in the form
                        $$\lambda= P_j^{(\infty)}(\xi,\varepsilon;h)=
                          i \varepsilon F_0+ (a_j+i \varepsilon b_j)\cdot \xi+  \mathcal O(\xi ^2)
                                             + \mathcal O(\varepsilon ^2)+ \mathcal O(h) $$
    and thus
                          $$\lambda_1= Re (\lambda) = a_j \cdot \xi +\mathcal O(\xi ^2)
                                             + \mathcal O(\varepsilon ^2)+ \mathcal O(h), $$
                          $$\lambda_2=Im (\lambda)= i \varepsilon F_0 + \varepsilon b_j \cdot \xi + \mathcal O(\xi ^2)
                                             + \mathcal O(\varepsilon ^2)+ \mathcal O(h).$$
    Note that we work in the regime $ h \ll \varepsilon $ and that
                          $$ \frac{ \partial \lambda_1}{ \partial \xi}|_{\xi=0} = a_j + \mathcal O(\varepsilon ^2)+ \mathcal O(h) \sim a_j $$
                          and $$ \frac{ \partial \lambda_2}{ \partial \xi}|_{\xi=0} = \varepsilon  b_j + \mathcal O(\varepsilon ^2)+ \mathcal O(h) \sim b_j \varepsilon .$$
    Thus the variations of the image with respect to the reference value are
                             $$|\Delta \lambda_1| = |a_j | |\Delta \xi| + \mathcal O(|\Delta \xi| ^2) $$
    and
                              $$|\Delta \lambda_2| = \varepsilon |b_j | | \Delta \xi| + \mathcal O(|\Delta \xi|^2).$$
    Hence, if $|\Delta \xi| \sim C h$ ($ C >0 $), then
                          $$|\Delta \lambda_1| \sim  |a_j| \cdot Ch + \mathcal O(h ^2) \sim  C_1h, $$ with $C_1>0$
    and $$|\Delta \lambda_2|  \sim \varepsilon |b_j | \cdot  Ch+ \mathcal O(h ^2)
                          \sim  C_2 \varepsilon h (1+ \mathcal O(\frac{h}{\varepsilon})) \sim  C_2 \varepsilon h,$$ with $C_2>0$.
    Note that $\xi=h(k-\frac{k_j}{4})-\frac{S_j}{2 \pi}, k \in \mathbb Z^2$,
                         then $|\Delta \xi|= h |\Delta k|$ and we can assert that the spectrum of $P_\varepsilon$ in
    a good rectangle $R(\varepsilon, h)$ of the form (\ref{cua so}) is the union of $L$ deformed lattices, with a horizontal spacing $h$ and vertical spacing $\varepsilon h$.

    Of course, the lattices $\Sigma_j(\varepsilon,h)$ are all described the same way. \\
    As a corollary, we have:
                          \begin{rema}
                                The cardinal of such a spectral network
                                in $R(\varepsilon, h)$ is
                                 $\mathcal{O} (\frac{h^\delta . \varepsilon h^\delta}{h . \varepsilon h})=
                                  \mathcal{O}(h^{2(\delta -1)}) $ which converges to
                                  $\infty$ when $h\rightarrow 0$.
                                  This means that the asymptotic expansion is applied to
                                   many eigenvalues of $P_\varepsilon$.

                                   Moreover, a recent work of M.Hitrik-J.Sj\"{o}strand allows us to calculate the cardinal of eigenvalues of $P_\varepsilon$ in the rectangle $R(\varepsilon, h)$.
                          \end{rema}

  \subsection{Birkhoff normal form}  \label{sec FNB}

    \subsubsection{Motivation}

            In  this section, we will discuss the procedure of Birkhoff normal form of a perturbed pseudo-differential  operator $P_\varepsilon$ which depends on small positive parameters
            $h, \varepsilon $ around a Diophantine torus
            $\Lambda$ and treat it explicitly in a particular case (but important for our work).
            For the Birkhoff normal form, we can consult
            \cite{Vu-Ngoc09}, \cite{Charles08}, \cite{Ali85}, \cite{Popov00}.

            We assume that $\Lambda$ is equal to the section $\{\xi=0\}$ in $ T^* \mathbb T^n$ and
            that $ P_\varepsilon$ is microlocally defined near
            $\{\xi=0 \} \in T^* \mathbb T^n$, with $ h $-Weyl (total) symbol $P= P(x,\xi,
            \varepsilon,h)$ which is holomorphic in $(x,\xi)$ near a complex neighborhood of $\xi=0 \in T^* \mathbb T^n$ and $C^\infty$ in $(h, \varepsilon)$ near $0$.

            In the article \cite{Hitrik07} (section 3) one realized the Birkhoff normal form of $P_\varepsilon$ whose $h$-principal symbol is of the form (\ref{symbol micro})
                $$ p_\varepsilon(x,\xi)=p(\xi) +i \varepsilon q(x,\xi)+ \mathcal O (\varepsilon ^2)$$
            and the principal symbol of the obtained normal form is of the form
            (like \ref{prin normal}))
              $$ P_0^{(\infty)}=P_0^{(\infty)} (\xi, \varepsilon)=
               p(\xi)+ i \varepsilon \langle q \rangle (\xi) +  \mathcal O(\varepsilon ^2).$$
            Our work requires treat a more specific case when the principal symbol of $ P_\varepsilon$ already does not depend in $x$ and with no term $\mathcal O (\varepsilon ^2)$:
                $$p_\varepsilon = P_0(\xi, \varepsilon) = p(\xi)+ i \varepsilon  q (\xi).$$

           In the above formula for $ P_0^{(\infty)}$, can we remove the term $\mathcal O (\varepsilon
            ^2)$ and is $P_0^{(\infty)}$ equal to $P_0$ ?
           This is an important result that we want. However, this is not obvious: the proof in \cite{Hitrik07} is not enough to explain it because one used transformations depending also on $\varepsilon$.   \\
           We will prove this result here, by providing a normalization of the total symbol $P(x,\xi, \varepsilon,h)$ in all three variables $(\xi,\varepsilon,h)$.
           The approach that we propose is different from  \cite{Hitrik07}.

    \subsubsection{Formal series and operators}  \label{forme normale de Bir}

         Let $\mathcal{E}= Hol(\mathbb T_x^n) [[\xi, \varepsilon,h]]$ denote the space of formal series in $(\xi,\varepsilon,h)$ with holomorphic coefficients in $x \in \mathbb T^n$,
                    \begin{displaymath}
                        \mathcal{E}=
                         \big \{ a^{(\mathcal{E})}= \sum_{k,m,l=0}^\infty a_{k,m,l}(x) \xi^ k \varepsilon ^m h^l
                        \quad  \textrm{such that }
                         a_{k,m,l}(x)  \textrm{are analytic in $x$ }\big \}.
                     \end{displaymath}
          There is a correspondence between an $h$-Weyl pseudo-differential operator and an element of $\mathcal{E}$:
          if we denote $A$ $h$-Weyl pseudo-differential operator, $a:=\sigma_w(A)$ its Weyl total symbol and $\sigma^{(\mathcal{E})}(A) \in  \mathcal{E} $ the formal Taylor series of $a$ in $(\xi, \varepsilon,h)$, then this correspondence is given by a map, denoted by $\sigma^{(\mathcal{E})}$ from the algebra of (Weyl) pseudo-differential operators $\Psi$ to the space of formal series $\mathcal{E}$:
                        \begin{eqnarray}
                                    \sigma^{(\mathcal{E})}:  \Psi  & \rightarrow &  \mathcal{E}
                                                                                          \nonumber \\
                                                           A    & \mapsto  &   \sigma^{(\mathcal{E})}(A),
                         \end{eqnarray}
                 \begin{displaymath}
                        \sigma^{(\mathcal{E})}(A)= \sum_{ |\alpha |,j,l =0}^\infty
                                \frac{1}{( |\alpha |+ j+l)!} \big ( \partial_\xi^\alpha \partial_\varepsilon^j \partial_h^l
                                 a(x, \xi, \varepsilon, h) \big |_{\xi= \varepsilon=h= 0}   \big )
                                  \xi^\alpha  \varepsilon ^j h^l .
                 \end{displaymath}
          The Moyal formula (see \cite{Moyal49}, \cite{Fedosov96}, \cite{Weyl50})
          for the composition of two operators from the Weyl semi-classical calculation say that if
          $a:=\sigma_w(A), b:=\sigma_w(B)$, then $A\circ B$ is still a $h$-pseudo-differential whose Weyl symbol satisfies
                     \begin{eqnarray} \label{formule de Moyal}
                             \sigma_w(A \circ B) & = &  (a  \sharp^w b)(x, \xi, \varepsilon,h )
                     \nonumber \\
                             & = &  e^{ih[D_\eta D_x- D_y D_\xi]} a(x, \xi, \varepsilon,h )b(x, \xi, \varepsilon,h )
                              \big |_{y=x, \eta= \xi}
                       \nonumber \\
                             & \sim &  \sum_{\alpha,\beta} \frac{h^{|\alpha|+ |\beta|} (-1)^{|\alpha|} }
                             { (2i)^{|\alpha|+ |\beta|} \alpha ! \beta !  }
                             ( \partial_x^\alpha \partial_\xi^ \beta a(x, \xi, \varepsilon,h ) )
                             ( \partial_\xi^\alpha \partial_x^ \beta b(x, \xi, \varepsilon,h ) )
                       \\
                             &= & a(x, \xi, \varepsilon,h ) b(x, \xi, \varepsilon,h ) +
                             \frac{h}{2i} \{ a(x, \xi, \varepsilon,h ) , b(x, \xi, \varepsilon,h ) \} + \cdots
                        \nonumber
                     \end{eqnarray}
       On the other hand, by the Borel theorem, any formal series $a_\mathcal{E} \in \mathcal{E} $ can be seen as Taylor series of a smooth function $a= a(x, \xi, \varepsilon,h )$
       (which is not unique) and we can associate to it a $ h $-pseudo-differential operator by asking
       $A= \textrm{Op}_h^w(a)$. We have then $a^{(\mathcal{E})} = \sigma^{(\mathcal{E})}(A)$.
       In this way and from Moyal formula (\ref{formule de Moyal}), we can define a product on $\mathcal{E}$, denoted by $\star$: let $a^{(\mathcal{E})}, b^{(\mathcal{E})} \in  \mathcal{E} $, then
                        $$a^{(\mathcal{E})}\star b^{(\mathcal{E})}= \sigma^{(\mathcal{E})}(A \circ B) $$
       is the Taylor series of $a  \sharp^w b $. Thus $\mathcal{E}$ becomes an algebra with this product.
       The associative bracket thus is well defined and is called by the Moyal bracket.
       $$ [a^{(\mathcal{E})} , b^{(\mathcal{E})} ]
                        = a^{(\mathcal{E})} \star b^{(\mathcal{E})} - b^{(\mathcal{E})}\star a^{(\mathcal{E})} .$$
       Consequently, let $a^{(\mathcal{E})}\in  \mathcal{E} $, we can define on $\mathcal{E}$ the adjoint operator:
                     $$\textrm{ad}_{a^{(\mathcal{E})}}:=  [a^{(\mathcal{E})}, \cdot ] .$$
       For any $p^{(\mathcal{E})} \in \mathcal{E}$, the formal series
                      \begin{displaymath}
                            e^{ \textrm{ad}_{a^{(\mathcal{E})}} } p^{(\mathcal{E})}=
                            \sum_{k=0}^\infty \frac{1}{k!} (\textrm{ad}_{^{(\mathcal{E})} })^k p^{(\mathcal{E})}
                      \end{displaymath}
       is well defined in $\mathcal{E}$  because it contain only a finite number of terms of fixed degree. Thus the exponential operator $\textrm{exp}( \textrm{ad}_{a^{(\mathcal{E})} }) $ is well defined on $\mathcal{E}$.

       Moreover, for two $h$-pseudo-differential operators $A$ et $P$, we have
                        \begin{displaymath}
                            \sigma^{(\mathcal{E})}([A,P]) = [ \sigma^{(\mathcal{E})}(A),\sigma^{(\mathcal{E})}(P)]
                         \end{displaymath}
       and $\sigma^{(\mathcal{E})}$ becomes an (associative) algebra morphism.

       Similarly, we have also
                     \begin{displaymath}
                            \textrm{exp} (\textrm{ad}_{\sigma^{(\mathcal{E})}(A)} ) (\sigma^{(\mathcal{E})}(P))
                            = \sigma^{(\mathcal{E})} (   \textrm{exp} (\textrm{ad}_ A )  P ),
                      \end{displaymath}
       where the exponential operator $ \textrm{exp} (\textrm{ad}_ A ) $ is defined below.

     \subsubsection{Action by conjugation}

            Let $A$ be a bounded operator on a Hilbert space $H$.
            We define in $\mathcal{B}(H)$ the exponential operator $e^A \in
            \mathcal{B}(H)$ by the absolutely convergent series
            \begin{equation}   \label{dn exp}
                    e^A= \sum_{k \geq 0} \frac{1}{k!} A^k.
            \end{equation}
            Next, we associate with $A$ a bounded operator
            $\textrm{ad}_A:= [A, \cdot] \in \mathcal{B}( \mathcal{B}(H))$ and by (\ref{dn exp}) the operator $\textrm{exp} (\textrm{ad}_A)$ (or $e^{\textrm{ad}_A}$)
            is well defined in $\mathcal{B}(\mathcal{B}(H)) $. In addition, we have the following result:
            \begin{lemm}
                Let $A$ et $P$ be two bounded operators of  $\mathcal{B}(H)$, we have
                \begin{equation}
                        e^A P e^{-A}= \textrm{exp} (\textrm{ad}_A) P  .
                \end{equation}
            \end{lemm}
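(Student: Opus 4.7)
\medskip

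The plan is to reduce the identity to a first-order linear Cauchy problem in the Banach space $\mathcal{B}(H)$. Define $F:\mathbb{R}\to\mathcal{B}(H)$ by $F(t)=e^{tA}Pe^{-tA}$. Since $A$ is bounded, the exponential series (\ref{dn exp}) defining $e^{tA}$ converges absolutely in operator norm and can be differentiated termwise in $t$, giving $\frac{d}{dt}e^{tA}=Ae^{tA}=e^{tA}A$. Combining the operator-valued Leibniz rule with the fact that $A$ commutes with $e^{\pm tA}$ yields
$$F'(t)=Ae^{tA}Pe^{-tA}-e^{tA}PAe^{-tA}=[A,F(t)]=\textrm{ad}_A(F(t)),\qquad F(0)=P.$$

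Next I would exhibit $G(t):=\textrm{exp}(t\,\textrm{ad}_A)(P)=\sum_{k\geq 0}\frac{t^k}{k!}\textrm{ad}_A^k(P)$ as a second solution of the same problem. Because $\textrm{ad}_A\in\mathcal{B}(\mathcal{B}(H))$ is itself a bounded operator (with $\|\textrm{ad}_A\|\leq 2\|A\|$), the series for $G(t)$ converges absolutely in $\mathcal{B}(H)$ and may be differentiated termwise, so $G'(t)=\textrm{ad}_A(G(t))$ and $G(0)=P$. A standard uniqueness argument for linear ODE's in a Banach space (e.g.\ Gr\"onwall applied to $F-G$) then gives $F(t)=G(t)$ for all $t\in\mathbb{R}$, and specialising to $t=1$ yields the desired identity $e^A P e^{-A}=\textrm{exp}(\textrm{ad}_A)\,P$.

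There is no genuine obstacle: the only care needed is to make the operator-valued differentiations and the uniqueness statement rigorous in $\mathcal{B}(H)$, and boundedness of $A$ makes all manipulations with the exponential series reduce to standard facts about absolutely convergent series of bounded operators. As an alternative route, one could expand $e^A P e^{-A}=\sum_{j,k\geq 0}\frac{(-1)^k}{j!k!}A^jPA^k$ and compare it, degree by degree in $A$, with $\sum_{n\geq 0}\frac{1}{n!}\textrm{ad}_A^n(P)$ using the binomial identity $\textrm{ad}_A^n(P)=\sum_{j+k=n}\binom{n}{j}(-1)^k A^j P A^k$ proven by induction on $n$; this avoids ODE's but is more combinatorial, so the ODE approach seems cleaner for insertion in the paper.
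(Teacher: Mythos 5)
Your proof is correct and follows essentially the same route as the paper: both reduce the identity to showing that $t\mapsto e^{tA}Pe^{-tA}$ and $t\mapsto\exp(t\,\textrm{ad}_A)P$ solve the same linear Cauchy problem $u'=[A,u]$, $u(0)=P$, in $\mathcal{B}(H)$, then invoke uniqueness and set $t=1$. The extra care you take about termwise differentiation, the bound $\|\textrm{ad}_A\|\leq 2\|A\|$, and the explicit uniqueness argument (Gr\"onwall) is sound but beyond what the paper spells out.
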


            \begin{proof}
                For $t \in \mathbb R$, let $f(t)= e^{t A} P e^{-tA}$ and $g(t)=\textrm{exp} (\textrm{ad}_{tA}) P $
                which are analytic functions from $ \mathbb R$ to $\mathcal{B}(H)$.
                 We will calculate its derivatives.
                 First note that $\frac{d}{dt} (e^{tA})= A e^{tA}= e^{tA} A $ and so
                    $$\frac{d}{dt} (e^{tA}) P= A e^{tA}P=   e^{tA} A P ,$$
                 we have
                    $$f'(t)= A e^{tA} P e^{-tA}-  e^{tA} P e^{-tA} A = A f(t)- f(t)A= [A, f(t)]$$
                 and $$g'(t)= \frac{d}{dt} (e^{ t \cdot \textrm{ad}_{A}} P ) =
                     \textrm{ad}_A \circ e^{ t \cdot \textrm{ad}_{A}} P= \textrm{ad}_A g(t)= [A,g(t)]  .$$
                 Then $f(t), g(t)$ satisfy the same linear differential equation of first order
                    $$u'= [A,u].$$
                 But the initial value in $t=0$, $f(0)= g(0)= P$ and consequently we have $f(t)= g(t)$ for all $t \in \mathbb R$. Especially for $t=1$, we well have $ e^A P e^{-A}= \textrm{exp} (\textrm{ad}_A) P$.
            \end{proof}

        As an application, we have:

    \subsubsection{Idea of the construction of Birkhoff normal form}
            The main idea of this construction is to find a pseudo-differential operator $A$ such that
            the associated Fourier integral operator $U(h):=e^{\frac{i}{h} A } $ reduce the initial operator $ P_\varepsilon$ to its conjugate operator
             \begin{equation}
              e^{\frac{i}{h}  A} P_\varepsilon  e^{- \frac{i}{h} A }
             = e^ {\frac{i}{h} \textrm{ad}_{A} } ( P_\varepsilon)
             : = \widetilde {P_\varepsilon }
             \end{equation}
             the formal series of whose total symbol $\sigma^{(\mathcal{E})}( \widetilde {P_\varepsilon }) $
             in $ \mathcal{E}$ does not depend on $x$.
             Noticing that $\sigma^{(\mathcal{E})}( \widetilde {P_\varepsilon })
              = \textrm{exp} (\textrm{ad}_{\sigma^{(\mathcal{E})}(A)} ) (\sigma^{(\mathcal{E})}(P_\varepsilon ))$
             as in the previous section, the work is concentrated to seek $\sigma_{\mathcal{E}}$ as a series of homogenous terms in $(\xi, \varepsilon,h)$.
             In search of this series, the Diophantine condition is essential, see (\ref{serie convergente}).

        \begin{rema}
            \begin{enumerate}
                            \item The above operator $\widetilde {P_\varepsilon }$ is still a pseudo-differential by the Egorov theorem
                             (see for example \cite{Egorov69},\cite{Vu-Ngoc06}, \cite{Hor}).
                             In the special case when $\widehat{A}= h \widehat{B}$, then the operator $U(h):=e^{\frac{i}{h} \widehat{A} } =e^{ i \widehat{B} } $ is really a pseudo-differential operator and $\widetilde {P_\varepsilon }$  is simply the composition of pseudo-differential operators
                            \item The conjugation at the operator level is replaced by the adjoint action on the space of formal series.
            \end{enumerate}
        \end{rema}

     \subsubsection{Construction of the Birkhoff normal form}

        In this part, by convention, for a $ h $-pseudo-differential operator $P$,
        we identify it with its total (Weyl) symbol $P= P(x,\xi,\varepsilon,h)$ and its formal series $\sigma_{\mathcal{E}}(P)$.

        We use a particular order for $(\xi, \varepsilon,h)$ by counting the power in $\xi$ plus twice the power in $\varepsilon $ and $h$.
        The associated filtration is denoted by the symbol $\mathcal O(j)$. \\
        Let us denote also $\mathcal{ D}(j)$ the subspace of $\mathcal{E}$  of homogenous polynomials of degree $j$
        with respect to $(\xi, \varepsilon,h)$ in this order.
        Thus, we have:
                    $$\mathcal{ D}(j) = Vect\{ \xi^ k \varepsilon ^m h^l  | k+ 2(m+l) = j \}
                     \otimes Hol( \mathbb T_x^n) $$
        and
                    $$\mathcal O(j)= \bigoplus_{n \geq j} \mathcal{ D}(j).$$
        We have of course $\mathcal{ D}(j) \subset \mathcal{ O}(j)$ and $\mathcal{ O}(j+1) \subset \mathcal{O}(j)$. As usual, we allow the notation $A= B + \mathcal O(j) $ to say that $A-B \in \mathcal O(j) $.

        If $K_j= K_j(x, \xi, \varepsilon, h) \in \mathcal{ O}(j)$, then it is obvious that the Poisson bracket
             satisfies
            $\{K_j, K_l\} = \mathcal{ O}(j+l-1)$. \\
         For the Moyal bracket $i [K_j, K_l]$, from (\ref{formule de Moyal}), it can be computed as a series in
            $(\frac{h}{i} \frac{\partial}{ \partial \xi} ,\frac{\partial}{ \partial x} )$ and is well of order $j+l+1$
         because every time we lose a degree in  $\xi$ we win also a degree in $h$.
         Moreover, we have
                \begin{eqnarray}  \label{orde}
                         i [K_j, K_l]
                         &  =  &  h \{K_j, K_l\} + h  \mathcal{ O}(j+l)
                                            \\
                         &  =  &   h \mathcal{ O}(j+l-1)  = \mathcal{ O}(j+l+1).
                \end{eqnarray}
         Consequently, we have also
            \begin{eqnarray} \label{orde general}
                [\mathcal{ O}(j_n), [ \ldots   ,[\mathcal{ O}(j_2), \mathcal{ O}(j_1) ] \ldots ]
                &  =  &  h^{n-1} \mathcal{ O} \big(j_1 + \cdots+ j_n-( n-1) \big )
                                       \nonumber      \\
                &  =  &  \mathcal{ O}(j_1 + \cdots+ j_n+ n-1)
                                                    \\
                 \textrm{and}  \quad  [ \ldots [\mathcal{ O}(j_1), \mathcal{ O}(j_2)], \ldots ], \mathcal{ O}(j_n) ]
                 &  =  & h^{n-1} \mathcal{ O} \big(j_1 + \cdots+ j_n-( n-1) \big )
                                              \nonumber       \\
                 &  =  &   \mathcal{ O}(j_1 + \cdots+ j_n+ n-1).
            \end{eqnarray}

     \begin{theo} \label{Theorem FNB}
          Suppose that $P= P(x,\xi, \varepsilon,h)$ is an  analytic Weyl $h$-pseudo-differential operator on $\mathbb T^n$
          (microlocally defined close to $\xi=0$)
          with principal $h$-symbol $P_0= p(\xi)+ i \varepsilon q(\xi)$
          such that $p(\xi)= \langle a, \xi \rangle + \mathcal{ O}(\xi^ 2)$ and $a$ is Diophantine as in the definition \ref{diop}.
         Then for any integer $N \geq 1$, there exists a function $G^{(N)}= \sum_{j=2}^N G_j$ ($G^{(1)}=0$) where
          $G_j=G_j(x,\xi, \varepsilon,h) \in \mathcal{ D}(j-2) $ (for $j \geq 2$)
          is analytic in $x$, homogenous in $(\xi, \varepsilon,h)$ such that
          \begin{equation} \label{normal Bikh}
                \textrm{exp} \big(i \textrm{ad}_{G^{(N)}} \big) P= P_{0} + h P_1^{(N)}+ h R_{N-1},
          \end{equation}
          where $P_1^{(N)}= P_1^{(N)}(\xi, \varepsilon,h) \in \mathcal{E} $ is independent of $x$
          and $ R_{N-1}= \mathcal{ O}(N-1)$.
    \end{theo}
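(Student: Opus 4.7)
The plan is to construct $G^{(N)}=\sum_{j=2}^N G_j$ iteratively by induction on $N$, at each step solving a cohomological equation on $\mathbb T^n$ whose solvability is guaranteed by the Diophantine assumption on $a$. The base case $N=1$ is trivial: $G^{(1)}=0$ and $P_1^{(1)}=0$, so that $P=P_0+hR_0$ with $R_0=\mathcal{O}(0)$. For the inductive step, assume $G^{(N-1)}$ has been constructed satisfying
$$\textrm{exp}\bigl(i\,\textrm{ad}_{G^{(N-1)}}\bigr) P = P_0 + hP_1^{(N-1)} + hR_{N-2},$$
with $P_1^{(N-1)}$ independent of $x$ and $R_{N-2}=\mathcal{O}(N-2)$. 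Split $R_{N-2}=r+R'$ where $r\in\mathcal{D}(N-2)$ is the homogeneous leading part and $R'=\mathcal{O}(N-1)$. I look for $G_N\in\mathcal{D}(N-2)$ chosen so that the updated conjugation eliminates the $x$-dependent part of $hr$.

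The key observation is that the only contribution of weighted order $N$ produced by adding $G_N$ to the generator comes from the leading Moyal term in $i[G_N,P_0]$, namely $h\{G_N,\langle a,\xi\rangle\}=h\,(a\cdot\partial_x)G_N$; all further Moyal corrections to $i[G_N,P_0]$, the brackets $[G_N, hP_1^{(N-1)}+hR_{N-2}]$, and the nested commutators mixing $G_N$ with the previous $G_j$'s lie in $\mathcal{O}(N+1)$ by the order estimates (\ref{orde})--(\ref{orde general}). At order $N$ one is therefore reduced to the homological equation
$$r + (a\cdot\partial_x)G_N = \langle r\rangle_{\mathbb T^n},$$
where $\langle \cdot\rangle_{\mathbb T^n}$ denotes the $x$-average. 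Expanding in Fourier series $r=\sum_{k\in\mathbb Z^n}\hat r_k(\xi,\varepsilon,h)e^{ik\cdot x}$, the solution is $\hat G_{N,k}=i\hat r_k/(a\cdot k)$ for $k\neq 0$ (with $\hat G_{N,0}$ taken to be zero). Using the Diophantine estimate $|a\cdot k|\geq \alpha/|k|^{1+d}$ together with the exponential decay of $\hat r_k$ (from analyticity of $r$ in $x$), the $\hat G_{N,k}$ still decay exponentially, so $G_N$ is analytic in $x$ on a slightly thinner strip and lies in $\mathcal{D}(N-2)$. The $x$-average $\langle r\rangle_{\mathbb T^n}$ gets absorbed into the normal form by setting $P_1^{(N)}:=P_1^{(N-1)}+\langle r\rangle_{\mathbb T^n}$, and all remaining pieces combine into a new remainder $hR_{N-1}$ with $R_{N-1}=\mathcal{O}(N-1)$, closing the induction step.

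A technical point is that $\textrm{exp}(i\,\textrm{ad}_{G^{(N-1)}+G_N})$ is not equal to $\textrm{exp}(i\,\textrm{ad}_{G_N})\,\textrm{exp}(i\,\textrm{ad}_{G^{(N-1)}})$ in general, but by Baker--Campbell--Hausdorff together with (\ref{orde general}) the discrepancy, applied to $P$, lies in $\mathcal{O}(N+1)$ and is harmlessly absorbed into the new remainder; this justifies working with the single exponential of the full generator as required by the theorem. I expect the main obstacle to be the careful bookkeeping of orders in the weighted filtration (with $\xi$ of weight $1$ and $h,\varepsilon$ of weight $2$): one must verify that at each step the cohomological equation really isolates a single homogeneous piece in $\mathcal{D}(N-2)$, while the Moyal tails, the mixed brackets with earlier $G_j$'s, and the BCH corrections are all safely pushed into $hR_{N-1}$. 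The Diophantine hypothesis then closes the induction by providing an analytic solution $G_N$ at each step.
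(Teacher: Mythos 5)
Your proposal is correct and follows essentially the same approach as the paper: induction on $N$, solving the cohomological equation (in your sign convention) $r - (a\cdot\partial_x)G_N = \langle r\rangle_{\mathbb T^n}$ via Fourier series, using the Diophantine lower bound on $|a\cdot k|$ to ensure the Fourier coefficients of $G_N$ decay exponentially so $G_N$ is analytic in $x$, and absorbing the $x$-average into the normal-form term $P_1^{(N)}$. The only technical variant is your use of Baker--Campbell--Hausdorff to control the change of generator, whereas the paper directly expands $\textrm{exp}\bigl(i\,\textrm{ad}_{G^{(N)}+G_{N+1}}\bigr)$ and groups multinomial terms by the number of occurrences of $\textrm{ad}_{G_{N+1}}$; both devices rest on the same weighted-order estimates (\ref{orde general}) and are interchangeable.
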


      \begin{proof}
            We can write $P$ in the form
                     $$P= P(x,\xi, \varepsilon,h)= P_{0}+ h P_1 , $$  with $P_1= P_1(x,\xi,\varepsilon,h)$ holomorphic in $(x,\xi)$ and $C^ \infty$ in $(h, \varepsilon)$ close to $0$.
            We will show the property (\ref{normal Bikh}) by induction on $N$.

            For $N= 1$, we take $G^{(1)}= P_1^{(1)} = 0$, $R_0= P_1 (x,\xi, \varepsilon,h) $ and the property (\ref{normal Bikh}) is valid.

            Assume that it is valid for $N$ with a found function $G^{(N)}$.
            We now seek a function $G_{N+1}  \in \mathcal{ D}(N) $ such that $G^{(N+1)}= G^{(N)}+
            G_{N+1}$ satisfies the equation (\ref{normal Bikh}). By developing the exponential and using
            (\ref{orde}) with the attention that $P=  \mathcal O(1)$ and $ G^{(N)} \in \mathcal O(0)$, we can write
            \begin{equation}  \label{exp}
                    \textrm{exp} \big(i \textrm{ad}_{G^{(N+1)}} \big) P=
                     \textrm{exp} \big(i \textrm{ad}_{G^{(N)}} \big) P + i  \textrm{ad}_{G_{N+1}} P + h \mathcal O(N).
            \end{equation}

            Indeed, denoting $A_j:= i  \textrm{ad}_{G_j} $ and $A^{(N)}=  \sum_{j=2}^N A_j$, we have the expansion
            \begin{eqnarray}
                   \textrm{exp} \big(i \textrm{ad}_{G^{(N+1)}} \big) P
                   &  =  &   \textrm{exp} \big( A^{(N)}+ A_{N+1}  \big) P
                                                                          \nonumber \\
                   & = & \sum_{k \geq 0 } \frac{1}{k!} \big(    A^{(N)}+ A_{N+1}  \big)^k P
                                                                           \nonumber \\
                  &  = &  \sum_{k \geq 0 } \frac{1}{k!} \sum_{l=0}^k (A^{(N)})^{k-l} \ast (A_{N+1})^l   P
                                                                            \nonumber \\
                  &  = &  \sum_{k \geq 0 } \frac{1}{k!}  (A^{(N)})^{k} P
                      + \sum_{k \geq 1 }  \underbrace{ \frac{1}{k!}
                                 \sum_{l=1}^k (A^{(N)})^{k-l} \ast (A_{N+1})^l   P }_{B_k}
                                                                            \nonumber \\
                  & = & \textrm{exp} \big( A^{(N)} \big) P + \sum_{k \geq 1 } {B_k}.
            \end{eqnarray}
            In the above formulas, we used the symbol $"\ast" $ which means that $A^m \ast B^n$ is the sum of all compositions containing $m$ times the operator $A$ and $n$ times the operator $B$.

            Particularly, for $k=1$ we have
                $$B_1= A_{N+1} P= i  \textrm{ad}_{G_{N+1}}P= i  \textrm{ad}_{G_{(N+1)}} P .$$
            For all $k \geq 2 $, by using the formula (\ref{orde general}) with remarks
             $P= \mathcal O(1), G^{(N)}=\mathcal O(0), G_{(N+1)}=\mathcal O(N-1) $, we obtain that
             $$B_k = \mathcal O ( N+k ) = h \mathcal O ( N+k -2) $$
             because for $l=1, \ldots, k$ all the terms
             $A^{(N)})^{k-l} \ast (A_{N+1})^l   P = \mathcal O \big( (k-l) \times 0 + l (N-1)+ 1 + k \big)
             =  \mathcal O \big(  l (N-1)+ 1 + k \big)  = \mathcal O ( h ) $ and the inequality $  l (N-1)+ 1 + k \geq N+ k$ is always true.
             Hence the formula (\ref{exp}) is shown.

             From (\ref{exp}), the induction hypothesis (\ref{normal Bikh}) and the formulas (\ref{orde}) one can write
             $$P= \langle a, \xi \rangle + \mathcal{ O}(\xi^ 2)+ i \varepsilon q(\xi) + \mathcal{ O}(h)
             = \langle a, \xi \rangle + \mathcal{ O}(2) ,$$
             we therefore have:
              \begin{eqnarray}
                   \textrm{exp} \big(i \textrm{ad}_{G^{(N+1)}} \big) P
                   &  =  &   P_{0} + h P_1^{(N)}+ h R_{N-1} + i  \textrm{ad}_{G_{N+1}} P + h \mathcal O(N)
                                                                          \nonumber \\
                   & = & P_{0} + h P_1^{(N)}+ h R_{N-1}+ i [G_{N+1},\langle a, \xi \rangle + \mathcal{ O}(2) ]
                                                                           \nonumber \\
                  &  = & P_{0} + h P_1^{(N)}+ h R_{N-1}+ i [G_{N+1},\langle a, \xi \rangle ]
                       + i [G_{N+1}, \mathcal{ O}(2) ]
                                                                            \nonumber \\
                  &  = & P_{0} + h P_1^{(N)}+ h R_{N-1}+ h \{ G_{N+1},\langle a, \xi \rangle \} + h \mathcal{ O}(N-1+ 1)
                        \nonumber \\
                  &   &   + h \mathcal{ O}(N-1+ 2-1)
                                                                             \nonumber \\
                  & = & P_{0}+ h \{ G_{N+1},\langle a, \xi \rangle \}+ h R_{N-1}+ h P_1^{(N)}+ h \mathcal{ O}(N)
            \end{eqnarray}
            Then, the equation for $G_{N+1}$ becomes
                    $$ P_{0}+ h \{ G_{N+1},\langle a, \xi \rangle \}+ h R_{N-1}+ h P_1^{(N)}+ h \mathcal{ O}(N)
                    = P_{0} + h P_1^{(N+1)}+ h \mathcal{ O}(N) $$
            and it is equivalent to the following cohomological equation
            \begin{equation}
                     \{ G_{N+1},\langle a, \xi \rangle \}+  R_{N} = K_{N}+  \mathcal{ O}(N),
            \end{equation}
            where $K_{N}: =  P_1^{(N+1)}- P_1^{(N)} $ should not depend on $ x $.
            Now, this equation is well solvable. \\
            Indeed: in the above equation, as the rest is of order $N$, we can replace $R_{N-1}$ by its homogenous part of order $N-1$, denoted by $\overline{R}_{N-1}$ and we will solve the equation:
            \begin{equation}   \label{pt cohomo}
                     \{ G_{N+1},\langle a, \xi \rangle \}+ \overline{R}_{N-1} = K_{N}.
            \end{equation}

            Develop $G_{N+1}$ and $\overline{R}_{N-1}$ in Fourier series of $x \in \mathbb T^n$
            \begin{eqnarray}
                    G_{N+1} &  =  & \sum_{k \in \mathbb Z^n} \widehat{ G}_{N+1}(k) e^{i k\cdot x}
                        \nonumber \\
                  \overline{R}_{N-1}  & = & \sum_{k \in \mathbb Z^n} \widehat{ \overline{R}}_{N-1} (k) e^{i k\cdot x},
                        \nonumber
            \end{eqnarray}
            where $ \widehat{ G}_{N+1}(k), \widehat{ \overline{R}}_{N-1}(k) $ are polynomials in $\mathbb R [\xi, \varepsilon, h]$.

            Note that the bracket $\{ G_{N+1},\langle a, \xi \rangle \}= - (a \cdot \partial x )   G_{N+1}$, we can write the equation (\ref{pt cohomo}) in the form
            $$
                    - \sum_{k \in \mathbb Z^n} i (a \cdot k) \widehat{ G}_{N+1}(k) e^{i k\cdot x}
                    + \sum_{k \in \mathbb Z^n} \widehat{ \overline{R}}_{N-1} (k) e^{i k\cdot x}=  K_{N}
                $$
            or
                $$
                    \widehat{ \overline{R}}_{N-1} (0)
                    +  \sum_{k \in \mathbb Z^n  \backslash \{ 0 \}}
                     \big ( \widehat{ \overline{R}}_{N-1} (k) - i (a \cdot k) \widehat{ G}_{N+1}(k) \big )
                        e^{i k\cdot x}
                    =  K_{N} .
                $$
            This equation is solved by posing
                    $$ K_{N}= \widehat{ \overline{R}}_{N-1} (0) $$
            (this is also equal to $x$-average $\langle \overline{R}_{N-1} \rangle$ )
            and for $k \in \mathbb Z^n \backslash \{ 0 \}$,
                $$\widehat{ G}_{N+1}(k)= -i \frac{\widehat{ \overline{R}}_{N-1} (k) }{(a \cdot k)},$$
            (here $(a \cdot k) \neq 0$ by the Diophantine condition on $a$).

            In addition, by the Diophantine condition on $a$ (see (\ref{dn alpha-d dioph}) ), there exist two constants $C_0, N_0 > 0$ such that for all $k \neq 0$ we have the estimate:
            \begin{equation} \label{serie convergente}
               \mid \widehat{ G}_{N+1}(k) \mid
                = \frac{ \mid \widehat{ \overline{R}}_{N-1} (k) \mid }{\mid (a \cdot k) \mid}
                \leq C_0 \mid k \mid ^{N_0} \mid \widehat{ \overline{R}}_{N-1} (k) \mid
            \end{equation}
            that ensures convergence and analyticity of $ G_{N+1}$ in $x$ because $\overline{R}_{N-1}$ is.
      \end{proof}

      \begin{rema}

           \begin{enumerate}
                \item In the above theorem, by taking $N$ converge to infinity and by posing $[A]:= G_1+ G_2+
                    \cdots$, then$[A]$ is the desired formal series discussed in the last section and the Birkhoff normal form of $P= P(x,\xi, \varepsilon,h)$ is the limit in $\mathcal{E} $ of
                    $P_{0} + h P_1^{(N)}$ as $N \rightarrow  \infty$. On the other hand, there exits a $C^
                    \infty$ function, denoted often by $P^{(\infty)}$ which admits this limit as its asymptotic expansion.
                 \item  We see an important thing that in the case of theorem, the first term (or yet the $h$- principal term) along the procedure of Birkhoff normal form of $P$ is always
                            $$P_0= p(\xi) + i \varepsilon q(\xi).$$
                \end{enumerate}

       \end{rema}

       \begin{prop}  \label{D.A inverse}
            Let $\widehat{P}= \widehat{P}(\xi; X)$ a complex-valued smooth function of $\xi$
            near $0 \in \mathbb R^2$ and $X$ near $0 \in \mathbb R^n$.
            Assume that $\widehat{P}$ admits an asymptotic expansion in $X$ near $0$ of the form
                $$\widehat{P}(\xi; X) \sim  \sum_\alpha C_\alpha(\xi) X^\alpha$$
            with $C_\alpha(\xi)$ are smooth functions and $C_0(\xi):=\widehat{P}_0(\xi) $ is local diffeomorphism near $\xi=0$.

            Then, for $\mid X \mid $ small enough, $\widehat{P}$ is also a local diffeomorphism near $\xi=0$ and its inverse admits an asymptotic expansion in $X$ near $0 $ whose the first term is
            $(\widehat{P}_0)^{-1}$.
         \end{prop}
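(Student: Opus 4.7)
The plan is to combine an inverse function theorem with parameters with a formal inversion of the asymptotic series. I will first establish that $\widehat{P}(\cdot;X)$ is a local diffeomorphism near $\xi=0$ for $|X|$ small, and then construct the asymptotic expansion of its inverse by recursion on the order in $X$.

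For the invertibility step, since $\widehat{P}_0=C_0$ is a local diffeomorphism at $\xi=0$, the Jacobian $\partial_\xi \widehat{P}_0(0)$ is invertible. The first-order truncation of the asymptotic expansion gives $\widehat{P}(\xi;X)=\widehat{P}_0(\xi)+\mathcal{O}(|X|)$ together with $\partial_\xi\widehat{P}(\xi;X)=\partial_\xi\widehat{P}_0(\xi)+\mathcal{O}(|X|)$, uniformly on a neighborhood of $\xi=0$. Hence $\partial_\xi\widehat{P}(0;X)$ stays invertible for $|X|$ sufficiently small, and the smooth-parameter inverse function theorem yields a smooth local inverse $g=g(\lambda;X)$ defined on a neighborhood of $(\widehat{P}_0(0),0)$.

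For the expansion step, I seek a formal Ansatz $g(\lambda;X)\sim\sum_\alpha D_\alpha(\lambda) X^\alpha$ satisfying $\widehat{P}(g(\lambda;X);X)=\lambda$. Expanding by Taylor/Faà di Bruno and matching powers of $X^\alpha$, the zeroth order equation is $\widehat{P}_0(D_0(\lambda))=\lambda$, so $D_0=\widehat{P}_0^{-1}$; for $|\alpha|\geq 1$ one obtains a linear equation of the shape
\begin{equation*}
\partial_\xi\widehat{P}_0\bigl(D_0(\lambda)\bigr)\cdot D_\alpha(\lambda)
= F_\alpha\bigl(\lambda,(D_\beta)_{|\beta|<|\alpha|}\bigr),
\end{equation*}
where $F_\alpha$ is a polynomial expression in the derivatives of the $C_\gamma$ and the previously found $D_\beta$. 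Since $\partial_\xi\widehat{P}_0(D_0(\lambda))$ is invertible near the range of $D_0$, each $D_\alpha$ is uniquely and smoothly determined.

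It remains to identify the formal series $(D_\alpha)$ with the genuine asymptotic expansion of the smooth inverse $g$ produced by the first step. The idea is to truncate: replacing $\widehat{P}$ by $\widehat{P}^{(N)}(\xi;X):=\sum_{|\alpha|\leq N}C_\alpha(\xi)X^\alpha$ produces, by the same inverse function theorem, a smooth local inverse $g^{(N)}$ whose $X$-Taylor coefficients at $0$ satisfy exactly the recursion for $(D_\alpha)_{|\alpha|\leq N}$ by uniqueness. Since $\widehat{P}-\widehat{P}^{(N)}=\mathcal{O}(|X|^{N+1})$ together with $\xi$-derivatives, one concludes $g-g^{(N)}=\mathcal{O}(|X|^{N+1})$ by differentiating the identity $\widehat{P}(g;X)=\widehat{P}^{(N)}(g^{(N)};X)$ and using the invertibility of $\partial_\xi\widehat{P}_0(D_0)$. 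Letting $N\to\infty$ shows that $g$ admits $\sum_\alpha D_\alpha(\lambda) X^\alpha$ as asymptotic expansion, with leading term $\widehat{P}_0^{-1}$, as claimed. The main obstacle is precisely this last bookkeeping of remainders when composing an asymptotic series with a function whose inverse is only known to be smooth; shrinking the $\xi$-neighborhood uniformly in $X$ (via Lipschitz control of $\partial_\xi\widehat{P}(\cdot;X)^{-1}$) is what makes the truncation argument go through.
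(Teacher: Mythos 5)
Your proposal follows the same two-step strategy as the paper: first observe that the Jacobian $\partial_\xi\widehat{P}(0;X)=\partial_\xi\widehat{P}_0(0)+\mathcal O(|X|)$ stays invertible for $|X|$ small, so $\widehat{P}(\cdot;X)$ is a local diffeomorphism, and then argue inductively in powers of $X$ that the inverse has the claimed asymptotic expansion with leading term $\widehat{P}_0^{-1}$. The paper dispatches the second step with a bare ``by induction''; your truncation argument (comparing $g$ with the inverse $g^{(N)}$ of the polynomial truncation $\widehat{P}^{(N)}$ and using the uniformly invertible Jacobian to propagate the $\mathcal O(|X|^{N+1})$ error) is a correct and more complete way to carry out that induction, so the two proofs are in substance the same.
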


          \begin{proof}
            One can write $\widehat{P}$ in the form
            $$\widehat{P}(\xi; X)= \widehat{P}_0(\xi)+ \mathcal O(X). $$
            The determinant
                    $$\mid det( \frac{\partial \widehat{P}}{\partial \xi} (0)) \mid
                  =  \mid det( \frac{\partial \widehat{P}_0}{\partial \xi} (0))  + \mathcal O (\mid X \mid ) \mid $$
            is nonzero for $ \mid X \mid$ small enough and it ensures that $\widehat{P}$ is a local diffeomorphism near $\xi=0$.

            Then, by induction, we can show that $ \widehat {P} ^ {-1} $ admits an asymptotic expansion in
            $X$ near $0 \in \mathbb R^n$.
         \end{proof}

\subsection{Operator $P_\varepsilon= P+ i \varepsilon Q$, the case $ \{p,q\}= 0 $}  \label{sec op comm}

    In this section, we will work on a particular case of the operator $ P_ \varepsilon $ considered in the previous section when the principal symbols $p, q$ commute.
    We now assume that $P_\varepsilon$ if of the form
                \begin{equation}  \label{opérateur}
                        P_\varepsilon= P+ i \varepsilon Q,
                \end{equation}
    with $P,Q$ two $h$-pseudo-differential operators and $P= P_{\varepsilon =0}$ is selfadjoint
                 ($Q$ is not necessarily selfadjoint).

    Suppose that $p,q$ are associated principal symbols of $P, Q$. Note that
                 $p$ is real-valued and we will assume also that $q$ is real-valued.
    Then the principal symbol of $P_\varepsilon$ is
                        $$p_\varepsilon=p+i \varepsilon q. $$
    We assume further that $p,q$ commute i.e. $ \{p,q\}= H_p(q)= 0 $ with respect to the Poisson bracket on $T^*M$
    and that $dp,dq$ are linearly independent almost everywhere.
    \begin{rema}
                    On the operator level, in this case, $P, Q$ are not necessarily in involution, however their commutant is power of order $2$ of $h$,
                              $$[P,Q]= \mathcal O(h^2).$$
    \end{rema}

    \subsubsection{Asymptotic spectrum of $P_\varepsilon= P+ i \varepsilon Q$, the case $ \{p,q\}= 0 $}

            We know that by the commutativity, $ q $ is invariant under the flow of $ p $, the function $\langle q \rangle _T$ (for all $ T> 0 $) in (\ref{t-average}) is hence still $q$
            and by the action-angle theorem \ref{A-A}, $q$ is constant on any invariant torus $\Lambda_a$, the average of $q$ on
                 $\Lambda_a$ (definition in (\ref{moyenne de q})) is still $q$.
            Consequently, we can replace $\langle q \rangle _T$ and $\langle q \rangle $ in all definitions, assumptions and assertions of the last section by $q$.

            Particulary, for each $a \in J$, the compact interval of $Q_\infty(a)$ defined in (\ref{Q vo cung}) becomes a single point:
                            \begin{equation}
                                    Q_\infty(a)=  \{  q _{ | \Lambda_a}  \}
                            \end{equation}
            and locations of the spectrum of $P_\varepsilon$ given in (\ref{loca
                 spectre}), ( \ref{loca. spectre 2}) become:

            \begin{eqnarray}  \label{new loca spectre}
                            Im(\sigma(P_\varepsilon) \cap \{z \in \mathbb C: |Re z| \leq \delta \})
                             \subset &
                                    \varepsilon \big [ \inf_{p^{-1}(0)}
                                    q  -o(1),  \sup_{p^{-1}(0)}
                                    q  + o(1)  \big ]
                             \nonumber \\
                              \subset &
                            \varepsilon  \big [ \inf
                            \bigcup_{a \in J}
                              q _{ | \Lambda_a} -o(1),
                              \sup
                              \bigcup_{a \in J}
                               q _{ | \Lambda_a}  + o(1) \big ],
            \end{eqnarray}
            as $\varepsilon, h, \delta \rightarrow 0$.

            In action-angle coordinates $(x,\xi)$ near $\Lambda_a$ (see the theorem \ref{A-A}) such that
                  $\Lambda_a \simeq \{ \xi=0\}$, we have $p= p(\xi)$ and the formula (\ref{moyenne2}) becomes
                  $\langle q \rangle (\xi)= q(\xi)$. Then microlocally, the principal symbol becomes
                  \begin{equation}
                        p_\varepsilon =p(\xi) +i \varepsilon q(\xi).
                  \end{equation}

        As an important application of section \ref{forme normale de Bir} (the theorem \ref{Theorem FNB}),
        the microlocal construction of Birkhoff quantum normal form of $P_\varepsilon$ in neighborhood of a Diophantine torus does not change
        the principal symbol. I.e in the coordinates $(x,\xi)$ near the section $\xi=0$ in $T^* \mathbb T^2$ such that $P_\varepsilon$ has the normal form,
        its $h$-principal symbol is still of the form $p(\xi) +i \varepsilon q(\xi)$.

        So in this case, concerning the theorem (\ref{theorem quasi-spectre}), the eigenvalues of $P_\varepsilon$ in the rectangle $R(\varepsilon,h)$ (\ref{cua so}) modulo $\mathcal O(h^\infty)$
         are given by the asymptotic expansion of a smooth function $P_j^{(\infty)}$, $j=1, \ldots, L $ in $(\varepsilon,h)$ and in
                   $$\xi = h(k-\frac{k_j}{4})-\frac{S_j}{2 \pi}, k \in \mathbb Z^2 $$
        whose the first term (the principal symbol (\ref{prin normal})
        in the case of the theorem (\ref{theorem quasi-spectre})) is
        \begin{equation}
                           p_{j,0}(\xi,\varepsilon)= p_j(\xi)+ i \varepsilon q_j (\xi),
        \end{equation}
        where $p_j, q_j$ are expressions of $p,q$ in action-angle variables near $\Lambda_j$ of $p,q$. \\
        In reduced form, we can write
                   \begin{eqnarray} \label{new spectre}
                        \sigma(P_\varepsilon)  \cap R(\varepsilon,h) \ni \lambda
                        & = & P_j^{(\infty)} \Big ( h(k-\frac{k_j}{4} )-\frac{S_j}{2 \pi}, \varepsilon;h \Big )
                        + \mathcal O(h^\infty)
                        \nonumber \\
                        & = &
                            p_j\Big (h(k-\frac{k_j}{4} )-\frac{S_j}{2 \pi} \Big )+
                             i \varepsilon q_j \Big (h(k-\frac{k_j}{4})-\frac{S_j}{2 \pi} \Big )
                        \nonumber \\
                               & & + \quad \mathcal O(h), \quad k \in \mathbb Z^2,
                      \end{eqnarray}
        uniformly for $h, \varepsilon$ small.

    \begin{rema}
                        We just give the asymptotic expansion of eigenvalues of
                        $P_\varepsilon$ in a good rectangle in the neighborhood of $0 \in \mathbb C$.

                        However, if we assume the same assumptions on the energy space $p^{-1}(E) \cap
                        T^*M$ ($E \in \mathbb R$) as on $p^{-1}(0) \cap T^*M$
                        (note that if this assumption is satisfied for $E_0$, then it is also satisfied for $E$ in a neighborhood of $E_0$)
                        and by introducing the same definition of set of good values, we can build the same result
                        for the eigenvalues of $P_\varepsilon$ in any good rectangle of center of form $E+ i \varepsilon F_0$.
    \end{rema}

  \subsubsection{The detailed spectral formula}

        In this paragraph, we consider the operator $P_\varepsilon$  as in the preceding paragraph and moreover we will assume all the same assumptions
        on the energy space $p^{-1}(E) \cap T^*M$ as on $p^{-1}(0) \cap T^*M$ and introduce the definition of good values (depending on $E$), similarly to the definition (\ref{dn bonnes valeurs}).
        Here we take $E$ in a bounded interval of $\mathbb R$ because we want uniform estimates with respect to $E$.

        As we said in the previous remark and with the help of the action-angle coordinates, we will explicitly give the asymptotic expansion of eigenvalues
        of $P_\varepsilon$ in an arbitrary good rectangle of size $\mathcal{O}(h^\delta) \times \mathcal{O}(\varepsilon h^\delta)$ with a good center $E+ i \varepsilon F$.
        Moreover, it is interesting that one can construct a such expansion whose principal symbol is globally well defined for all good rectangle near a regular value of $ (p, \varepsilon q) $. \\
        Indeed, for simplicity, we assume that the momentum map $\Phi:=(p,q)$ is proper and has connected fibre.
        \textbf{In this case $L=1$}.

        Denote by $U_r$ the set of regular values of $\Phi=(p,q) $ and let a point $c \in U_r $. \\
        We recall that by the action-angle theorem \ref{A-A}, we have action-angle coordinates in a neighborhood of torus $\Lambda_c:= \Phi^{-1}(c)$ in $M$: there exists $r>0$,
        a neighborhood $\Omega:=\Phi^{-1}(B(c,r)) $ of $\Lambda_c$, an small open $D  \subset \mathbb  R^2$ of center $0$,
        a symplectomorphism $\kappa : \Omega \rightarrow \mathbb T^2 \times D$ and a diffeomorphism $\varphi: D \rightarrow \varphi(D)= B(c,r)$
        such that: $\kappa(\Lambda_c)= \{\xi= 0 \}$, $\Phi \circ \kappa ^{-1}(x, \xi)= \varphi (\xi)$, for all $ x \in \mathbb T^2 , \xi \in D $ and $\varphi(0)=c$.

        We introduce the function
                             \begin{equation}
                                     \chi :  \mathbb R^2 \ni u= (u_1,u_2) \mapsto \chi_u = (u_1, \varepsilon u_2)
                                                     \cong u_1+i \varepsilon u_2
                             \end{equation}
       and denote $$B(\chi_u,r, \varepsilon ):=  \chi(B(u,r)) $$
       for a certain ball $B(u,r)$ ($r>0$), $U_r(\varepsilon):=  \chi(U_r)$.

       For any point $\chi_a \in B(\chi_c,r, \varepsilon )$ such that $F:= a_2$ is a good value, $E:=a_1$ and $p^{-1}(E) \cap T^*M$ satisfies the same assumptions as $p^{-1}(0) \cap T^*M$ in the section \ref{hypothese}.
       We will construct the asymptotic expansion of eigenvalues of $P_\varepsilon $ in a "good rectangle" of "good center" $\chi_a $ (which is just a translation of rectangle $R(\varepsilon,h)$ in (\ref{cua so})):

                     \begin{equation}   \label{new cua so}
                                 R(\chi_a,\varepsilon,h)= \chi_a + R(0, \varepsilon,h),
                     \end{equation}
                     where
                     \begin{equation}
                        R(0, \varepsilon,h)
                                    = \Big[-\frac{h^\delta}{\mathcal{O}(1)},\frac{h^\delta}{\mathcal{O}(1)} \Big]
                                    +i \varepsilon \Big [ -\frac{h^\delta}{\mathcal{O}(1)}, + \frac{h^\delta}{\mathcal{O}(1)} \Big ]
                        = R(\varepsilon,h)-i \varepsilon a_2.
                     \end{equation}
        Let $\Lambda_1= \Phi^{-1}(a)$, this is an invariant torus of type $(\alpha,d)$-Diophantine,
        defined in (\ref{dn alpha-d dioph}) and suppose that its action-angle coordinates are $\xi_a$, i.e. $\{ \xi= \xi_a \} = \kappa (\Lambda_1)$ in $T^* \mathbb T^2$ or $\xi_a= \varphi^{-1}(a) \in D$.

        Then, let $\widetilde{P}_\varepsilon := P_\varepsilon - \chi_a$ to reduce the spectrum of $P_\varepsilon$ near $\chi_a$ to the spectrum of $\widetilde{P}_\varepsilon$ near $0$ by noting that:
            \begin{equation} \label{spec translation}
                \sigma(P_\varepsilon)= \sigma(\widetilde{P}_\varepsilon) +\chi_a .
            \end{equation}

        The principal symbol of $\widetilde{P}_\varepsilon$ is $\widetilde{p}+i  \varepsilon \widetilde{q}$, with $\widetilde{p}:= p-a_1, \widetilde{q}:= q-a_2$.
        Note that we have still an integrable system $( \widetilde{p},\widetilde{q})$ and if we let
        $\widetilde{\xi}= \xi-\xi_a$, then $\widetilde{\xi}$ is the new action variable for this system in which $\Lambda_a \simeq \{\widetilde{\xi}=0 \}$,
        as the standard case of theorem \ref{theorem quasi-spectre}. \\
        The principal symbol of $\widetilde{P}_\varepsilon$ is microlocally reduced to
                $$ \varphi_1(\xi) + i \varepsilon \varphi_2(\xi)-\chi_a=
                \varphi_1(\xi_a + \widetilde{\xi} ) + i \varepsilon \varphi_2(\xi_a + \widetilde{\xi})-\chi_a  .$$
        Applying the theorem (\ref{theorem quasi-spectre}) in the case of last section for $\widetilde{P}_\varepsilon$ and from the formula (\ref{spec translation}),
        we have: all the eigenvalues of $P_\varepsilon$ in the good rectangle $R(\chi_a,\varepsilon,h)$, defined by (\ref{new cua so}) modulo $\mathcal O(h^\infty)$
        are given by the asymptotic expansion of a smooth function $P(\xi, \varepsilon;h)$ in $(\varepsilon,h)$ and in $\xi$ in a neighborhood of $\xi_a$
        such that in reduced form (it's the same as (\ref{new spectre})),
        \begin{eqnarray} \label{spectre TQ}
                        \sigma(P_\varepsilon) \cap R(\chi_a,\varepsilon,h) \ni \lambda
                        & = & P \Big (\xi_a+ h(k-\frac{k_1}{4} )-\frac{S_1}{2 \pi}, \varepsilon;h \Big )
                        + \mathcal O(h^\infty)
                        \nonumber \\
                        & = &
                            \varphi_1\Big (\xi_a+ h(k-\frac{k_1}{4} )-\frac{S_1}{2 \pi} \Big )+
                             i \varepsilon \varphi_2 \Big ( \xi_a+ h(k-\frac{k_1}{4})-\frac{S_1}{2 \pi} \Big )
                        \nonumber \\
                               & & + \mathcal O(h), \quad k \in \mathbb Z^2,
        \end{eqnarray}
        uniformly for $h, \varepsilon $ small, where $S_1 \in \mathbb R^2$ is the action and $k_1 \in \mathbb Z^2$ is the Maslov index of fundamental cycles of $\Lambda_1$.

        With the below remark (\ref{action}), there exists a function $\tau_c \in \mathbb R^2$, locally constant in $c \in U_r$ (depending on the choice of local action-angle coordinates near $c \in U$) such that $ \frac{S_1}{2 \pi}= \xi_a+ \tau_c $.
        So the formula for $\lambda $ becomes:
                      \begin{equation}
                            \lambda =
                            \varphi_1\Big (- \tau_c+ h(k-\frac{k_1}{4} )\Big )+
                             i \varepsilon \varphi_2 \Big ( -\tau_c+ h(k-\frac{k_1}{4})\Big )
                          +\mathcal O(h).
                      \end{equation}
        There is a bijective correspondence between $\lambda \in \sigma(P_\varepsilon) \cap R(\chi_a,\varepsilon,h) $
        and $hk$ in a part of $h \mathbb Z^2 $ (by the proposition \ref{diffeo}).
        Moreover, as in (\ref{hk}) and (\ref{axa spectre}), there exists a smooth local diffeomorphism
                       $f= f(\lambda, \varepsilon;h)$
        which sends $R(\chi_a,\varepsilon,h)$ on its image, denoted by $E(a, \varepsilon,h)$ which is close to $\frac{S_1}{2 \pi}$ such that it sends
                            $\sigma(P_\varepsilon) \cap R(\chi_a,\varepsilon,h) $
                            on $h \mathbb Z^2 $ modulo $\mathcal O(h^\infty)$:
                        $$f= f(\lambda, \varepsilon;h)= \tau_c + h \frac{k_1}{4}+ P^{-1}(\lambda).$$
        Let $\widetilde{f}= f \circ \chi$,
                        $$\widetilde{f}=  \tau_c + h \frac{k_1}{4}+ P^{-1} \circ \chi . $$

    \begin{rema}
          Let $\widehat{P}:= \chi^{-1} \circ P$. Because $P$ admits an asymptotic expansion in $(\xi,
          \varepsilon,h)$, so it is obvious that $\widehat{P}$ admits an asymptotic expansion in
          $(\xi, \varepsilon,\frac{h}{\varepsilon})$ (here $h \ll \varepsilon $):
                \begin{eqnarray}
                        \widehat{P}(\xi, \varepsilon,h) & = &
                            \sum_{\alpha,\beta,\gamma} C_{\alpha\beta\gamma}
                             \xi^\alpha \varepsilon^\beta (\frac{h}{\varepsilon})^\gamma
                       \nonumber \\
                            & = & \widehat{P}_0(\xi) + \mathcal O(\varepsilon, \frac{h}{\varepsilon}),
                \end{eqnarray}
          with $$\widehat{P}_0(\xi)=  \varphi_1(\xi)+ i  \varphi_2(\xi) $$ is a local diffeomorphism.

          Moreover, by looking at the Birkhoff normal form in section \ref{sec FNB}, we can rewrite it as form:
                $$\widehat{P}(\xi, \varepsilon,h) =
                 \widehat{P}_0(\xi) + \mathcal O(\frac{h}{\varepsilon}).$$
          According to the proposition (\ref{D.A inverse}), $(\widehat{P})^{-1}= P^{-1} \circ \chi $ also admits an asymptotic expansion in $(\varepsilon,\frac{h}{\varepsilon})$  whose first term is
                $$(\widehat{P}_0)^{-1}= ( \varphi )^{-1} . $$
    \end{rema}

    Consequently, $\widetilde{f}$ admits an asymptotic expansion in $(\varepsilon, \frac{h}{\varepsilon})$ and
    it can moreover be written as
                 \begin{equation}  \label{DA f still}
                     \widetilde{f}= \widetilde{f}_0 + \mathcal O(\frac{h}{\varepsilon})
                \end{equation}
    whose first term is
            \begin{equation}  \label{pre terme f still}
                 \widetilde{f}_0= \tau_c+ ( \varphi)^{-1}.
            \end{equation}
    We have an important remark that the first term $\widetilde{f}_0$ is well defined globally on $B(c,r)$ in the sense that it does not depends on selected good rectangle $R(\chi_a,\varepsilon,h)$.

        In summary, for any regular value $c \in U_r$, there is a small domain $B(\chi_c,r, \varepsilon)= \chi (B(c,r))$ and for any good value $a \in B(c,r)$ (which is outside a set of small measure), we have a good rectangle $R(\chi_ a,\varepsilon,h)$ of good center $\chi_a$ and a smooth local diffeomorphism $f$ which sends
        $R(\chi_ a,\varepsilon,h)$ on its image, denoted by $E(a,\varepsilon,h)$ of the form:
                        \begin{eqnarray}  \label{new hk}
                                             f :  R(\chi_a,\varepsilon,h) & \rightarrow & E(a,\varepsilon,h)
                                              \nonumber \\
                                       \sigma(P_\varepsilon) \cap R(\chi_a,\varepsilon,h) \ni  \lambda & \mapsto &
                                               f(\lambda,\varepsilon; h) \in h \mathbb Z^2 +\mathcal O(h^\infty).
                        \end{eqnarray}
        such that $\widetilde{f}= f \circ \chi$ admits an asymptotic expansion in
                      $(\varepsilon, \frac{h}{\varepsilon})$ of the form (\ref{DA f still})
       with the first term (\ref{pre terme f still}) is a diffeomorphism, globally defined on $B(c,r)$.

                \begin{figure}[!h]
                \begin{center}
                           \includegraphics[width=0.9 \textwidth]{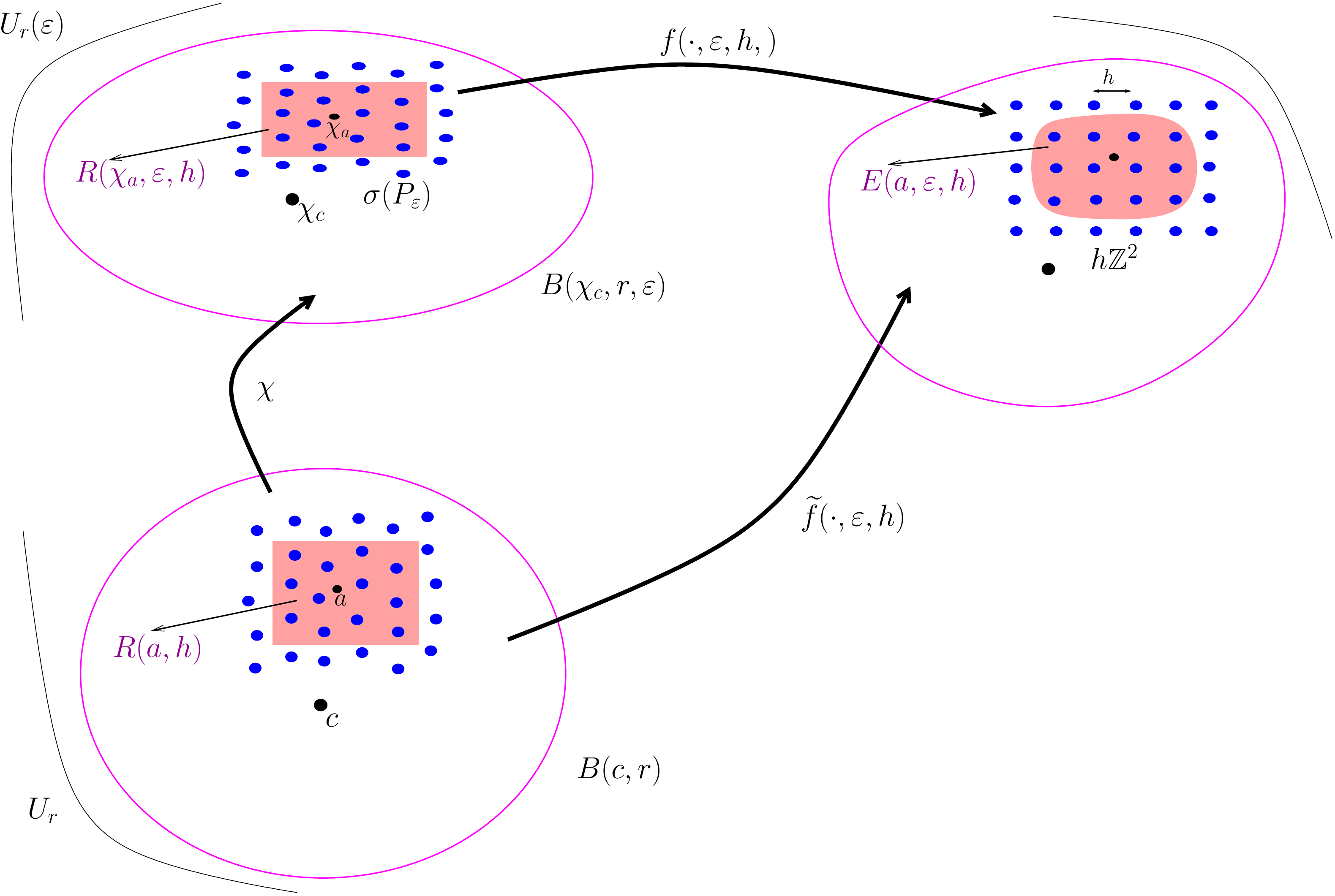} \\
                             \caption{Pseudo-lattice spectrum of $P_\varepsilon$}
                \end{center}
                \end{figure}

     \begin{rema}  \label{symbole f0}
            We don't know if $f$ admits an asymptotic expansion in $(\varepsilon, \frac{h}{\varepsilon})$ but we can express $f$ in the form:
                        $$f(\lambda,\varepsilon,h) =
                        \sum_{\alpha, \beta} C_{\alpha \beta}(\lambda_1, \frac{\lambda_2}{\varepsilon})
                                                      \varepsilon^ \alpha  (\frac{h}{\varepsilon})^\beta  $$
            with
                                  \begin{equation}   \label{f0}
                                      C_{0 0}(\lambda_1, \frac{\lambda_2}{\varepsilon})
                                      = \tau_c +   \varphi^{-1} \circ \chi^{-1}(\lambda):= f_0
                                 \end{equation}
            which is well defined for all  $\lambda \in B(\chi_c,r,\varepsilon)$.\\
            Moreover, we can also write $f=f_0 + \mathcal O(\varepsilon, \frac{h}{\varepsilon}) $.

     \end{rema}

     \begin{rema}
               In terms of the definition \ref{pseu-cart}, we say that couples $\big ( f(\varepsilon;h), R(\chi_a,\varepsilon,h) \big )$ as above form a pseudo-chart $\sigma(P_\varepsilon)$ on the domain  $U_r(\varepsilon)$.
     \end{rema}

     \begin{rema}[Action integral]  \label{action}
        Let $c \in U_r$ a regular value of $\Phi$ and $(x,\xi)$ a set of action-angle coordinates, given by $\kappa$ as before. There is a Liouville $1$- form $\alpha$
        on $\Omega:=\Phi^{-1}(B(c,r)) \subset M $ such that $d \alpha= \omega$.

        Let $\widetilde{\omega}$ the canonical symplectic form and $\widetilde{\alpha}= \sum \xi_i dx_i$ a canonical form on  $T^* \mathbb T^2$: $d\widetilde{\alpha}= \widetilde{\omega} $. \\
       As $\kappa$ is symplectic, we have  $\kappa ^*\omega = \widetilde{\omega}$.
       This is equivalent to $d (\kappa ^* \alpha- \widetilde{\alpha})= 0 $ and there is a $ 1 $-closed form $\beta$ on $T^* \mathbb T^2$ such that
                                  $$\kappa ^* \alpha= \widetilde{\alpha}+  \beta. $$
       For any invariant torus $\Lambda_a \subset \Omega$, let $(\gamma_1,\gamma_2)$ two fundamental cycles on $\Lambda_a$ that are sent on the sides of the torus:
                                   $ \kappa (\Lambda_a)= \{\xi=  \xi_a\}$ by $\kappa$ in $T^* \mathbb T^2$.
        So the action of $(\gamma_1,\gamma_2)$ is $S_1= (S_{1,1}, S_{1,2})$, calculated by: for $j=1,2,$
                                 \begin{eqnarray}
                                          S_{1,j} & = & \int_{\gamma_j}
                                              \alpha=
                                          \int_{\kappa(\gamma_j)}
                                             \kappa^* \alpha
                                          =
                                          \int_{\kappa(\gamma_j)}
                                             (\widetilde{\alpha}+  \beta)
                                    \nonumber \\
                                          & = &  \int_{\{x \in \mathbb T^2: x_j=0 \} }
                                             (\sum \xi_i dx_i+  \beta)
                                          = 2  \pi ( \xi_a + \tau_{c,j}),  \nonumber
                                  \end{eqnarray}
    where  $$\tau_{c,j}:=  \int_{ \{x \in \mathbb T^2: x_j=0 \}} \beta $$
    is a constant, independent of $\Lambda_a \subset \Omega$ (independent of $c \in U_r $) by the closedness of $\beta$. \\
    Then, there exists a function $\tau_c \in \mathbb R^2$, locally constant in $c \in U_r$ such that
        \begin{equation}  \label{pt action}
                                    \frac{S_1}{2 \pi} = \xi_a + \tau_c.
        \end{equation}
    \end{rema}

\subsection{Construction of the monodromy of asymptotic pseudo-lattice}
    The spectrum of the operator $ P_ \varepsilon $ considered in the previous section is a model of a more general lattice that we define and discuss below.

    Let $U$ a subset of $\mathbb R^2$  with compact closure and denote $U(\varepsilon)= \chi(U)$ where $\chi$ is the function defined as in previous section.
    Let $\Sigma(\varepsilon, h)$ (which depends on small $h$ and $\varepsilon$) a discrete set of $U(\varepsilon)$.

     \begin{defi} \label{pseu-cart}
         For $h, \varepsilon$ small enough and in the regime $ h \ll \varepsilon$,
         we say that $(\Sigma(\varepsilon, h), U(\varepsilon))$ is an asymptotic pseudo-lattice if:
         for any small parameter $\alpha >0$, there exists a set of "good values" in $\mathbb R^2$, denoted by $BV$
         whose complement is of small measure in the sense
             $$\mid {}^C BV  \cap I  \mid \leq C \alpha \mid I \mid$$
         for any domain $I \subset \mathbb R^2$ and $C>0$ is a constant.  \\
        For all $c \in U$, there exists a ball $B(c,r) \subset U $ around $c$ ($r>0$) such that for every "good value"
        $a= (a_1,a_2) \in  B(c,r) \cap  BV$,
        there is a good rectangle $R(\chi_a,\varepsilon,h) \subset \chi(B(c,r))$ of good center $\chi_a$:
                        $$R(\chi_a,\varepsilon,h)= \chi (R(a,h))$$
        where $ R(a,h)$ is a rectangle of size $\mathcal{O}(h^\delta) \times \mathcal{O}( h^\delta) $, $0< \delta <1$
        and a smooth local diffeomorphism (in $\chi_a$) $f= f(\cdot; \varepsilon,h)$
        which sends $R(\chi_ a,\varepsilon,h)$ on its image, denoted by $E(a,\varepsilon,h)$ satisfying
        \begin{eqnarray}  \label{semi-cart}
                                             f :  R(\chi_ a,\varepsilon,h) & \rightarrow & E(a,\varepsilon,h)
                                                        \nonumber \\
                                       \Sigma( \varepsilon, h) \cap R(\chi_a,\varepsilon,h) \ni  \lambda & \mapsto &
                                               f(\lambda;\varepsilon, h) \in h \mathbb Z^2 +\mathcal O(h^\infty)
                        \end{eqnarray}
        such that $\widetilde{f}:= f \circ \chi$ admits an asymptotic expansion in $(\varepsilon,\frac{h}{\varepsilon})$ for the $C ^\infty$ topology for the variable $u$ in a neighborhood of $a$ and in the reduced form,                     \begin{equation}  \label{pt khai trien}
                             \widetilde{f}(u)= \widetilde{f}_0(u) + \mathcal O(\varepsilon, \frac{h}{\varepsilon}),
        \end{equation}
        where the first term $\widetilde{f}_0$ is a diffeomorphism, independent of $\alpha$, globally defined on $B(c,r)$ and independent of the chosen good value $a \in B(c,r)$.

        We also say that the family of $(f(\cdot; \varepsilon,h), R(\chi_ a,\varepsilon,h) )$ is a local pseudo-chart on $B(\chi_c, r, \varepsilon):= \chi(B(c,r)$ and that a couple $(f(\cdot; \varepsilon,h), R(\chi_ a,\varepsilon,h) )$ is a micro-chart of $(\Sigma(\varepsilon, h), U(\varepsilon))$. \\
 \end{defi}

                 \begin{figure}[!h]
                 \begin{center}
                           \includegraphics[width=0.9  \textwidth]{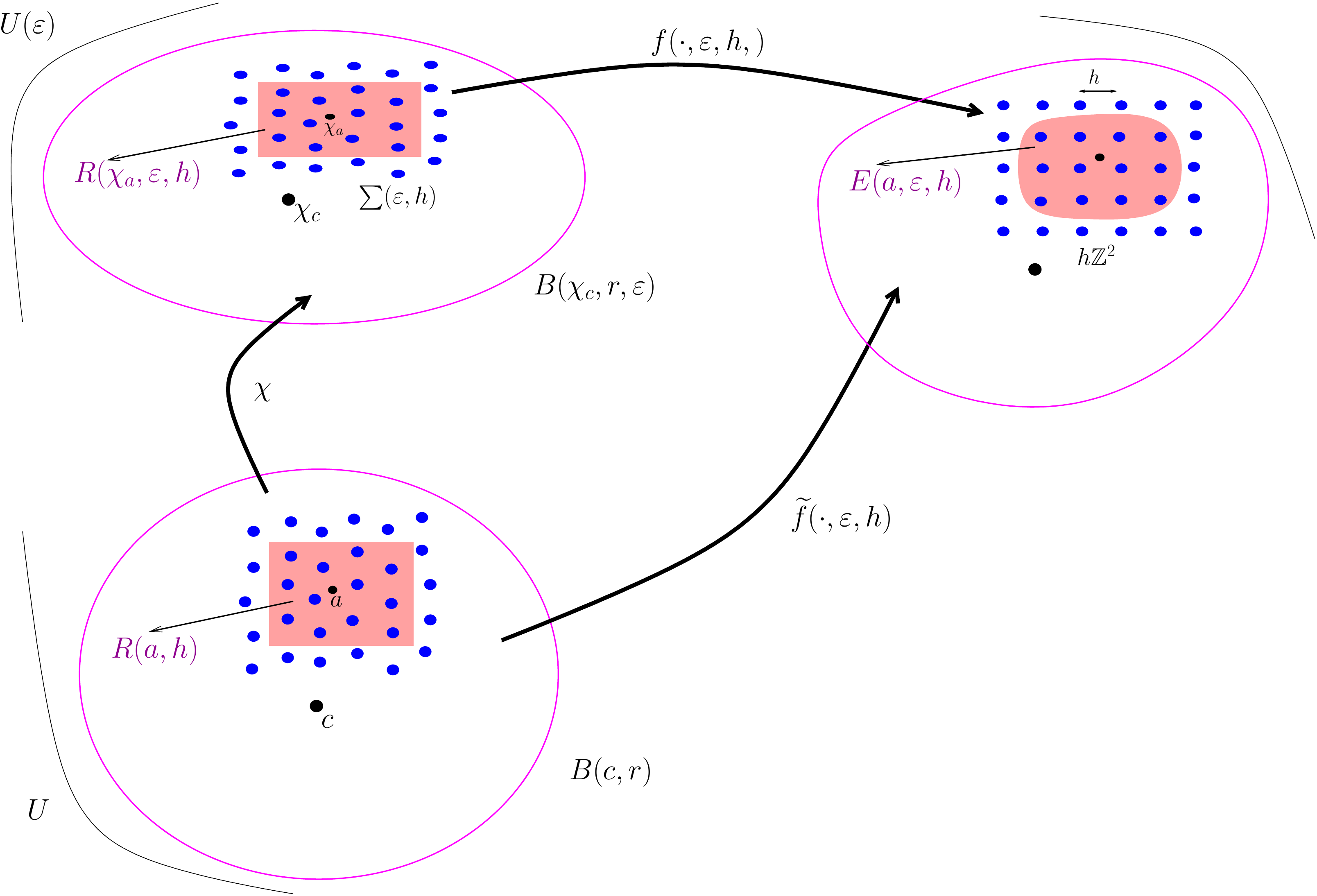} \\
                             \caption{Asymptotic pseudo-lattice}
                 \end{center}
                \end{figure}

    \begin{rema}
          It is clear that the spectrum of an operator $P_\varepsilon = P + i\varepsilon Q$ considered in the previous section is a good example of this definition.
          In this case, $ \widetilde {f} _0 $ is equal to actions coordinate.

          We want to define a combinatorial invariant (spectral monodromy) of $\Sigma( \varepsilon, h)= \sigma(P_\varepsilon)$.
          As we know, in this case $ P, Q $ does not necessarily commute, so it can not have any joint spectrum as the integrable case that we discussed.
          Therefore, it is not clear if one can define the monodromy for the spectrum of  $P_\varepsilon$.

          On the other hand, we are careful that the map $ f $ in (\ref{new hk}) is not an affine chart of
          $U (\varepsilon) $ defined in previous section because it is only defined on a domain depending on $ h $ which will be reduced to a single point when $h \rightarrow 0 $.
          Therefore, we can not apply the construction of the quantum monodromy for an affine asymptotic lattice as the article \cite{Vu-Ngoc99}.
          However, we can successfully build this invariant for the discrete spectrum of $ P_ \varepsilon $ due to the fact that the first term $ \widetilde {f} _0 $ is globally defined on a small ball $ B (c, r) $.

    \end{rema}

     \begin{lemm}  \label{ton tai suite}
        Let $(\Sigma(\varepsilon, h), U(\varepsilon))$ an asymptotic pseudo-lattice as in the definition \ref{pseu-cart} and a point $\chi_a \in  B(\chi_c,r, \varepsilon ) $ with $a$ a good value.
        Then, there is a family $  \lambda(\varepsilon,h)  \in  \Sigma(\varepsilon, h) \cap R(\chi_a,\varepsilon,h)$ such that
                            \begin{eqnarray}
                                  |\lambda_1(\varepsilon,h) -a_1|= \mathcal O (h)
                                                \\
                                 |\lambda_2 (\varepsilon,h) - \varepsilon a_2|= \mathcal O ( \varepsilon .h),
                           \end{eqnarray}
        uniformly for $h, \varepsilon \rightarrow 0$.
     \end{lemm}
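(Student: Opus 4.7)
The plan is to exploit the micro-chart $f : R(\chi_a,\varepsilon,h) \to E(a,\varepsilon,h)$ to reduce the problem of finding a spectral point near $\chi_a$ to an elementary nearest-neighbor problem on the honest integer lattice $h\mathbb Z^2$. First I would evaluate $\widetilde f(a) = f(\chi_a)$ and choose $k_0 \in \mathbb Z^2$ minimizing the distance $|hk_0 - f(\chi_a)|$, so that $|hk_0 - f(\chi_a)| \leq h/\sqrt{2}$. Since $R(\chi_a,\varepsilon,h)$ contains a neighborhood of $\chi_a$ of diameter of order $h^\delta \gg h$, the point $hk_0$ lies inside $E(a,\varepsilon,h) = f(R(\chi_a,\varepsilon,h))$ as soon as $h$ is sufficiently small, and by the pseudo-lattice structure (\ref{semi-cart}) we obtain an element $\lambda = \lambda(\varepsilon,h) \in \Sigma(\varepsilon,h) \cap R(\chi_a,\varepsilon,h)$ satisfying $f(\lambda) = hk_0 + \mathcal O(h^\infty)$. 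In particular $|f(\lambda) - f(\chi_a)| = \mathcal O(h)$.

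The second step is to transfer this bound via $f$ to a bound on $\chi^{-1}(\lambda) - a$. This is where the asymptotic expansion (\ref{pt khai trien}), $\widetilde f(u) = \widetilde f_0(u) + \mathcal O(\varepsilon, h/\varepsilon)$ in the $C^\infty$ topology near $u = a$, is used decisively. Differentiating gives $D\widetilde f(u) = D\widetilde f_0(u) + \mathcal O(\varepsilon, h/\varepsilon)$, and since $\widetilde f_0$ is a genuine (parameter-independent) diffeomorphism on $B(c,r)$ with $D\widetilde f_0(a)$ boundedly invertible, the same is true of $D\widetilde f(u)$ uniformly for $u$ in a neighborhood of $a$ and for $h \ll \varepsilon$ both small. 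Applying the inverse function theorem (or equivalently Taylor's formula to $\widetilde f^{-1}$ at $\widetilde f(a)$) together with $\widetilde f(\chi^{-1}(\lambda)) - \widetilde f(a) = f(\lambda) - f(\chi_a) = \mathcal O(h)$ yields componentwise $\chi^{-1}(\lambda) - a = \mathcal O(h)$.

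To conclude, I would write $\chi^{-1}(\lambda) - a = (u_1, u_2)$ with $|u_1|, |u_2| = \mathcal O(h)$. By definition of $\chi$,
\begin{equation*}
  \lambda - \chi_a \;=\; \chi(\chi^{-1}(\lambda)) - \chi(a) \;=\; (u_1,\, \varepsilon u_2),
\end{equation*}
which immediately delivers the two asymmetric estimates $|\lambda_1 - a_1| = \mathcal O(h)$ and $|\lambda_2 - \varepsilon a_2| = \mathcal O(\varepsilon h)$, uniformly as $h, \varepsilon \to 0$ in the regime $h \ll \varepsilon$.

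The main technical subtlety I expect lies in the asymmetry of these bounds. A naive application of the inverse of $f$ directly on the image $E(a,\varepsilon,h)$ would only produce the isotropic estimate $\mathcal O(h)$ in both coordinates; the improved factor of $\varepsilon$ in the imaginary direction can be recovered only by conjugating through $\chi$ and invoking that the principal part $\widetilde f_0$ in the expansion of $\widetilde f = f \circ \chi$ is genuinely $\varepsilon$-independent. In other words, the sharpness of the imaginary bound is really an artefact of the structural information encoded in (\ref{pt khai trien}), and checking that this information survives the differentiation step in the uniform regime $h \ll \varepsilon$ is the only point where care is required.
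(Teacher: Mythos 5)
Your proposal is correct and follows essentially the same route as the paper's proof: choose the point of $h\mathbb Z^2$ nearest to $\widetilde f(a)=f(\chi_a)$, pull it back through the micro-chart to a point of $\Sigma(\varepsilon,h)\cap R(\chi_a,\varepsilon,h)$, and use that $\widetilde f$ (equivalently its local inverse $u=\widetilde f^{-1}$, via Proposition~\ref{D.A inverse}) has a differential that is uniformly boundedly invertible in the regime $h\ll\varepsilon$ — thanks to the expansion $\widetilde f=\widetilde f_0+\mathcal O(\varepsilon,\tfrac h\varepsilon)$ — to turn an $\mathcal O(h)$ displacement in the image into an $\mathcal O(h)$ displacement in the $u$-variable, finally conjugating by $\chi$ to split the estimate into $\mathcal O(h)$ and $\mathcal O(\varepsilon h)$. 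The only cosmetic difference is that the paper states the key bound as a Lipschitz estimate on $u=\widetilde f^{-1}$ rather than a bounded-invertibility statement for $D\widetilde f$, and both proofs implicitly rely on the (intended) bijectivity of the pseudo-chart onto $h\mathbb Z^2\cap E(a,\varepsilon,h)$ modulo $\mathcal O(h^\infty)$.
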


     \begin{proof}

        By the proposition \ref{D.A inverse}, $u:= \widetilde{f}^{-1}$ is also a local diffeomorphism in a fixed point $\xi_a$ admitting an asymptotic expansion in $(\varepsilon, \frac{h}{\varepsilon})$
        and by its main part, if $\xi$ is near $\xi_a$ we have:
                         \begin{equation}
                             u(\xi)= \widetilde{f}^{-1} (\xi; \varepsilon, h)
                             = (\widetilde{f}_0)^{-1}(\xi) + \mathcal O(\varepsilon, \frac{h}{\varepsilon}).
                        \end{equation}
       By calculating the differential of $u$ in $\xi_a$ (this is the same as the proof of proposition \ref{D.A
                        inverse}), we can assert that:
       if $\xi^{(1)},\xi ^{(2)}$ near $\xi_a$ such that $ |\xi^{(1)}-\xi^{(2)}|= \mathcal O (h) $, then
                                         $$|u(\xi^{(1)})-u(\xi^{(2)}) |= \mathcal O (h)  $$
       uniformly for $h, \varepsilon \rightarrow 0$.    \\
       Let
                         $$\xi^{(1)}:= \widetilde{f}(a, \varepsilon; h ) \in E(a,\varepsilon,h ). $$
       On the other hand, one can find
                        $$ \xi^{(2)}:= hk= h .k(\varepsilon, h) \in h \mathbb Z^2 \cap E(a,\varepsilon,h )$$
       such that $ |\xi^{(1)}-\xi^{(2)}| \leq h$ by setting $k= k(\varepsilon, h)$ the integer part of
                         $\frac{\xi^{(1)} }{h}$.
       In the end, with the remark $\chi (u(\xi^{(1)}))= \chi_a$ and there is
                         $  \lambda(\varepsilon,h)  \in  \Sigma(\varepsilon, h) \cap R(\chi_a,\varepsilon,h) $
       such that $  \lambda(\varepsilon,h) = \chi(u(\xi^{(2)}))  + \mathcal O (h^\infty)$,
       we get the result of the lemma.
     \end{proof}

     \begin{rema}  \label{k=k(h)}
             The result of the lemma is still valid for the spectrum of an operator in general case of the theorem  (\ref{theorem quasi-spectre}).

             In the previous proof, we can choose $k$ as the integer part of  $\frac{ \widetilde{f}_0(a)}{h}$ and then $k= k(h)$.
     \end{rema}

\subsubsection{Transition function} \label{sec fonc de transition}

    Let $(\Sigma(\varepsilon, h), U(\varepsilon))$ be an asymptotic pseudo-lattice.

    Suppose that $B _\alpha:= B(c,r)$ and $B_\beta:= B(c', r')$ are two small balls in $U$ with nonempty intersection
    $B _{\alpha \beta}:=  B _\alpha \cap B_\beta \neq \emptyset $ such that there are two local pseudo-charts on
    $B _{\alpha}(\varepsilon):= B(\chi_c, r, \varepsilon)$ and $B _{\beta}(\varepsilon)= B(\chi_c', r', \varepsilon)$ of $(\Sigma(\varepsilon, h), U(\varepsilon))$.
    Denote $B _{\alpha \beta}(\varepsilon) = B _{\alpha}(\varepsilon) \cap B _{\beta}(\varepsilon) $. \\
    Because the good values in each $B _{\alpha}$, $B _{\beta}$ are outside the set of small measure $\mathcal O(\alpha)$, then the complement of good values in $B _{\alpha \beta}$
    has still small measure $\mathcal O(\alpha)$.

    Let $a \in B _{\alpha \beta} $ be a good value. Therefore, there is an associated good rectangle
    $ R(\chi_a,\varepsilon,h) \subset  B _{\alpha \beta}(\varepsilon)$ on which we have two micro-charts of $\Sigma(\varepsilon, h)$ in $\chi_a$ as in (\ref{semi-cart}) of the previous definition:
            \begin{eqnarray}
                         f_\alpha (\varepsilon;h): R(\chi_a,\varepsilon,h)\rightarrow E_\alpha(a,\varepsilon,h)
                            \nonumber \\
                        f_\beta (\varepsilon;h): R(\chi_a,\varepsilon,h) \rightarrow E_\beta(a,\varepsilon,h).
            \end{eqnarray}

     \begin{theo}  \label{df trans en bon valeur}
            There exists a unique constant matrix $M_{\alpha \beta} \in GL(2, \mathbb Z)$ such that
                $$d\widetilde{f}_{\alpha, 0}(a)= M_{\alpha \beta} d\widetilde{f}_{\beta, 0}(a)$$
            at all good values $a \in B _{\alpha \beta}$.
     \end{theo}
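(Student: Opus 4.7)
The plan is to investigate the transition diffeomorphism $g_h := \widetilde{f}_\alpha \circ \widetilde{f}_\beta^{-1}$, pulled back to the $u$-side, on a neighborhood of $\widetilde{f}_{\beta,0}(a)$ in $\mathbb{R}^2$. By the defining expansion (\ref{pt khai trien}) of an asymptotic pseudo-lattice, this map admits the form $g_h = g_0 + \mathcal{O}(\varepsilon, h/\varepsilon)$ with $g_0 = \widetilde{f}_{\alpha,0} \circ \widetilde{f}_{\beta,0}^{-1}$ independent of $h$ and $\varepsilon$. Consequently the candidate for $M_{\alpha\beta}$ is $dg_0(\widetilde{f}_{\beta,0}(a)) = d\widetilde{f}_{\alpha,0}(a) \cdot (d\widetilde{f}_{\beta,0}(a))^{-1}$, and the task reduces to proving this expression is constant in $a$ and takes values in $GL(2,\mathbb{Z})$.

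The key observation is that $g_h$ must carry one spectral lattice onto the other: if $\lambda \in \Sigma(\varepsilon,h) \cap R(\chi_a,\varepsilon,h)$, then $\widetilde{f}_\beta(\chi^{-1}\lambda) \in h\mathbb{Z}^2 + \mathcal{O}(h^\infty)$ is sent by $g_h$ to $\widetilde{f}_\alpha(\chi^{-1}\lambda) \in h\mathbb{Z}^2 + \mathcal{O}(h^\infty)$. Using Lemma \ref{ton tai suite} at a fixed good value $a$, together with the density of good values (any ball of radius comparable to $h$ contains good values once $\alpha$ is small), I would select a second good value $a'$ at distance $\mathcal{O}(h)$ from $a$ so that the associated spectral values $\lambda_1, \lambda_2$ satisfy $\widetilde{f}_\beta(\chi^{-1}\lambda_j) = hk_j + \mathcal{O}(h^\infty)$ with $k_2-k_1 = e_i$ the $i$-th standard basis vector. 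A Taylor expansion then gives $g_h(hk_1 + he_i) - g_h(hk_1) = h\cdot dg_h(hk_1)\cdot e_i + \mathcal{O}(h^2)$, whose left-hand side lies in $h\mathbb{Z}^2 + \mathcal{O}(h^\infty)$; dividing by $h$ yields $dg_h(hk_1)\cdot e_i \in \mathbb{Z}^2 + \mathcal{O}(h)$. Letting $(h,\varepsilon)\to 0$ in the regime $h\ll\varepsilon$, and noting $hk_1 \to \widetilde{f}_{\beta,0}(a)$ while $dg_h \to dg_0$, the integrality passes to the limit: $dg_0(\widetilde{f}_{\beta,0}(a))\cdot e_i \in \mathbb{Z}^2$ for $i=1,2$. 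Exchanging the roles of $\alpha,\beta$ gives integrality of the inverse, so $M(a) := d\widetilde{f}_{\alpha,0}(a)\cdot (d\widetilde{f}_{\beta,0}(a))^{-1} \in GL(2,\mathbb{Z})$.

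To conclude, the map $a \mapsto M(a)$ is smooth on all of $B_{\alpha\beta}$ (as a composition of differentials of diffeomorphisms), and takes values in the discrete group $GL(2,\mathbb{Z})$ on the dense subset of good values; by continuity the same holds on $B_{\alpha\beta}$, so $M(\cdot)$ is locally constant. Shrinking $B_\alpha, B_\beta$ so that $B_{\alpha\beta}$ is connected forces $M \equiv M_{\alpha\beta}$, a single matrix, and uniqueness is immediate from the defining formula. I expect the main obstacle to be the production of adjacent lattice pairs $k_2 - k_1 = e_i$ in the image of $\widetilde{f}_\beta$: in the concrete operator setting this follows from the explicit spectral parametrization (\ref{spectre TQ}), but in the abstract pseudo-lattice framework it must be extracted from the density estimate for good values combined with the fact that $\widetilde{f}_{\beta,0}$ is a diffeomorphism near $a$, which ensures a nearby good value $a'$ whose associated spectral lattice point realizes the desired neighbor of $hk_1$.
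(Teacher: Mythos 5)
Your proposal follows the same strategy as the paper's proof: use the fact that the transition map carries one spectral pseudo-lattice onto the other, Taylor-expand around a good value $a$, extract that the leading-order differential ratio sends $\mathbb{Z}^2$ into $\mathbb{Z}^2$ (and is invertible by symmetry), and conclude constancy from uniform continuity together with the discreteness of $GL(2,\mathbb{Z})$ and the $\mathcal{O}(\alpha)$ bound on the measure of bad values. The packaging through the composite $g_h := \widetilde{f}_\alpha \circ \widetilde{f}_\beta^{-1}$ is a cosmetic reorganization of the paper's two separate expansions of $\widetilde{f}_\alpha$ and $\widetilde{f}_\beta$ applied to the same pair $\lambda, \lambda'$; the underlying computation is identical.

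The one place where you genuinely diverge from the paper — and where a gap opens — is in producing the second eigenvalue. The paper stays entirely inside the single micro-chart at the fixed good value $a$: it picks an arbitrary $k_0\in\mathbb{Z}^2$, notes that $hk(h)-hk_0$ still lies in $h\mathbb{Z}^2\cap E_\alpha(a,\varepsilon,h)$ for $h$ small (since $E_\alpha$ has size $\mathcal{O}(h^\delta/\varepsilon) \gg h$), and asserts the existence of $\lambda' \in \Sigma(\varepsilon,h)\cap R(\chi_a,\varepsilon,h)$ with $f_\alpha(\lambda')=hk'(h)+\mathcal{O}(h^\infty)$, a tacit ``onto near the center'' property of the pseudo-lattice that the definition does not literally contain but is implicit in Lemma \ref{ton tai suite}. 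You instead try to manufacture $\lambda'$ via Lemma \ref{ton tai suite} applied at a second good value $a'$ at distance $\mathcal{O}(h)$ from $a$. This introduces a chart-consistency problem: Lemma \ref{ton tai suite} at $a'$ certifies that $\lambda_2$ maps to a lattice point under the micro-chart $f^{(a')}_\beta$, whereas the Taylor step needs $f^{(a)}_\beta(\chi^{-1}\lambda_2)$. The two micro-charts share only the $h,\varepsilon$-independent first term $\widetilde{f}_{\beta,0}$; the $\mathcal{O}(\varepsilon, h/\varepsilon)$ corrections may differ with $a$, and those corrections are not small compared to $h$ in the regime $h\ll\varepsilon$, so nothing in the definition lets you pin $k_2-k_1$ down to exactly $e_i$. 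You would only get $k_2-k_1\in\mathbb{Z}^2$ with $|k_2-k_1|=\mathcal{O}(1)$, which is not enough to conclude $dg_0\cdot e_i\in\mathbb{Z}^2$. The paper's route — keeping everything in a single micro-chart and varying $k_0$ — avoids this entirely, at the cost of a simple existence assertion that you can (and should) recognize as part of the structure of the pseudo-lattice rather than try to derive from the $\mathcal{O}(\alpha)$ measure estimate on good values.
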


     \begin{proof}
        In this proof, we keep the same notation as lemma \ref{ton tai suite}.
        By this lemma \ref{ton tai suite} and the remark \ref{k=k(h)}, let
             \begin{equation} \label{pt 1}
                \lambda(\varepsilon,h) =f_\alpha^{-1}( hk(h)) + \mathcal O (h^\infty)
            \end{equation}
        (with $ h k(h) \in h \mathbb Z^2 \cap E_\alpha(a,\varepsilon,h) $) is a family in
                            $ \Sigma(\varepsilon, h) \cap R(\chi_a,\varepsilon,h)$
        such that  $$|\lambda_1(\varepsilon,h) -a_1|= \mathcal O (h) $$ and
                   $$|\lambda_2 (\varepsilon,h) - \varepsilon a_2|= \mathcal O ( \varepsilon .h) ,$$
        uniformly for $h, \varepsilon \rightarrow 0$.

        Let $k_0 \in  \mathbb Z^2$ be arbitrarily given. For $h$ small enough, we can define
                                    \begin{equation} \label{pt hk}
                                             hk'(h):= hk(h)- hk_0 \in h \mathbb Z^2 \cap
                                             E_\alpha(a,\varepsilon,h) .
                                     \end{equation}
        Then there exists a family
                            $\lambda'(\varepsilon,h) \in   \Sigma(\varepsilon, h) \cap R(\chi_a,\varepsilon,h)$
        such that            \begin{equation}  \label{pt 2}
                                hk'(h)= f_\alpha ( \lambda'(\varepsilon,h))+ \mathcal O (h^\infty),
                            \end{equation}
        uniformly for $\varepsilon, h$ small. \\
        We have also
                             \begin{equation} \label{pt 3}
                                hk(h)= f_\alpha ( \lambda(\varepsilon,h))+ \mathcal O (h^\infty),
                            \end{equation}
        uniformly for $\varepsilon, h$ small.
        By replacing (\ref{pt 2}) and (\ref{pt 3}) in (\ref{pt hk}), we have:
                            \begin{equation} \label{pt 4}
                                f_\alpha ( \lambda(\varepsilon,h)) - f_\alpha ( \lambda'(\varepsilon,h))
                                = hk_0 +   \mathcal O (h^\infty),
                            \end{equation}
                            uniformly for $\varepsilon, h$ small. \\
        Note also that $ |hk'(h)- hk(h) | = hk_0 = \mathcal O (h)$, then
                             $ |hk'(h)- \xi^{(1)} |=  \mathcal O (h)$
        and so the family $\lambda'(\varepsilon,h)$ has the same property as that of $\lambda(\varepsilon,h)$.
        That is
                                     $$|\lambda'_1(\varepsilon,h) -a_1|= \mathcal O (h) $$
               and
                                    $$|\lambda'_2 (\varepsilon,h) - \varepsilon a_2|=
                                    \mathcal O ( \varepsilon .h) ,$$
                            uniformly for $h, \varepsilon \rightarrow 0$.

        We recall the function
                             \begin{eqnarray}
                                     \chi :B _{\alpha \beta} \rightarrow B _{\alpha \beta}(\varepsilon)
                                 \nonumber \\
                                        u= (u_1,u_2) \mapsto \chi_u= (u_1, \varepsilon u_2)
                             \end{eqnarray}
       Let $u (\varepsilon,h)= \chi^{-1}(\lambda(\varepsilon,h)) $
                             (i.e. $u_1= \lambda_1, u_2= \frac{\lambda_2}{\varepsilon}$)
       and in the same way $u' (\varepsilon,h)= \chi^{-1}(\lambda'(\varepsilon,h)) $. We have                             \begin{eqnarray}
                                         |u_i (\varepsilon,h) - a_i| = \mathcal O (h),
                                 \nonumber \\
                                        |u'_i (\varepsilon,h) - a_i| = \mathcal O (h),
       \end{eqnarray}
                        for $i=1,2$.
       The equation (\ref{pt 4}) so becomes
                            \begin{equation*}
                                \widetilde{f}_\alpha ( u(\varepsilon,h)) - \widetilde{f}_\alpha ( u'(\varepsilon,h))
                                = hk_0 +   \mathcal O (h^\infty),
                            \end{equation*}
        or
                            \begin{equation} \label{pt 5}
                               \frac { \widetilde{f}_\alpha ( u(\varepsilon,h)) - \widetilde{f}_\alpha ( u'(\varepsilon,h))}
                               {h}
                                = k_0 +   \mathcal O (h^\infty),
                            \end{equation}
        uniformly for $\varepsilon, h$ small. \\
        Because we can express
                            \begin{equation*}
                                 \widetilde{f}_\alpha ( u(\varepsilon,h))= \widetilde{f}_{\alpha, 0} ( u(\varepsilon,h))
                                       + \mathcal O(\varepsilon, \frac{h}{\varepsilon})
                              \end{equation*}
        and by writing the Taylor expansion of $ \widetilde{f}_{\alpha, 0}( u(\varepsilon,h))$ in $a$ with the integral remainder,
        by doing the same work for $  \widetilde{f}_\alpha ( u'(\varepsilon,h)) $ and using that if $R(u,\varepsilon,h)=\mathcal O(\varepsilon, \frac{h}{\varepsilon})$, then
                $$|R( u(\varepsilon,h),\varepsilon,h)-R(u'(\varepsilon,h),\varepsilon,h) | =
                                         \mathcal O (h) \times \mathcal O(\varepsilon, \frac{h}{\varepsilon}),$$
         uniformly for $ h, \varepsilon $ small and $h \ll \varepsilon $
         as $ |u(\varepsilon,h) - u'(\varepsilon,h) |= \mathcal O (h)  $,
         equation (\ref{pt 5}) gives us:
         \begin{equation*}
                                ( d \widetilde{f}_{\alpha, 0})(a)
                               \frac { u(\varepsilon,h) -  u'(\varepsilon,h)}{h}
                                = k_0 +  \mathcal O(\varepsilon, \frac{h}{\varepsilon})
         \end{equation*}
         Consequently, we have
                            \begin{equation} \label{pt 6}
                               \frac { u(\varepsilon,h) -  u'(\varepsilon,h)}{h}
                                = \big ( ( d \widetilde{f}_{\alpha, 0})(a) \big ) ^{-1}( k_0) +
                                  \big ( ( d \widetilde{f}_{\alpha, 0})(a) \big ) ^{-1}
                                  \big(  \mathcal O(\varepsilon, \frac{h}{\varepsilon}) \big ).
                            \end{equation}

         On the other hand, as the norm of differential $ \big ( ( d \widetilde{f}_{\alpha, 0})(a) \big ) ^{-1}$ is a constant independent of $\varepsilon,h$, the equation (\ref{pt 6})  allows us to write
                             \begin{equation} \label{pt 7}
                               \frac { u(\varepsilon,h) -  u'(\varepsilon,h)}{h}
                                = \big ( ( d \widetilde{f}_{\alpha, 0})(a) \big ) ^{-1}( k_0)
                                   +  \mathcal O(\varepsilon, \frac{h}{\varepsilon}),
                            \end{equation}
         uniformly for $ h, \varepsilon $ small and $h \ll \varepsilon $.

         Now we will work with $f_\beta (\varepsilon;h)$.
         Let also $\widetilde{f}_{\beta}= f_\beta \circ \chi$. \\
         Because $ \lambda(\varepsilon,h),  \lambda'(\varepsilon,h) $ is in $\sigma(P_\varepsilon) \cap R(\chi_a,\varepsilon,h)$, then there exists a family, denoted by $k'(\varepsilon, h) \in \mathbb Z^2$ such that \begin{equation*}
            \frac { \widetilde{f}_\beta ( u(\varepsilon,h)) - \widetilde{f}_\beta ( u'(\varepsilon,h))}{h}
                                =   k'(\varepsilon, h) + \mathcal O (h^\infty),
         \end{equation*}
         uniformly for $\varepsilon, h$ small. \\
         In the same way as before, we also get
                                \begin{equation} \label{pt 8}
                                   \frac { u(\varepsilon,h) -  u'(\varepsilon,h)}{h}
                                   = \big ( ( d \widetilde{f}_{\beta, 0})(a) \big ) ^{-1}( k'(\varepsilon, h) )
                                      + \mathcal O(\varepsilon, \frac{h}{\varepsilon}),
                                \end{equation}
          uniformly for $\varepsilon, h$ small and $h \ll \varepsilon $. \\
          Then the equation (\ref{pt 7}) and the equation (\ref{pt 8}) give us
                                \begin{equation*}
                                        \big ( ( d \widetilde{f}_{\alpha, 0})(a) \big ) ^{-1}( k_0)
                                        = \big ( ( d \widetilde{f}_{\beta, 0})(a) \big ) ^{-1}( k'(\varepsilon, h) )
                                          +   \mathcal O(\varepsilon, \frac{h}{\varepsilon}),
                                \end{equation*}
          uniformly for $\varepsilon, h$ small, $h \ll \varepsilon $ and therefore
                                 \begin{equation}
                                        ( d \widetilde{f}_{\beta, 0})(a)
                                        \big ( ( d \widetilde{f}_{\alpha, 0})(a) \big ) ^{-1}( k_0)
                                        =  k'(\varepsilon, h)
                                          +  \mathcal O(\varepsilon, \frac{h}{\varepsilon}),
                                \end{equation}
           uniformly for $\varepsilon, h$ small and $h \ll \varepsilon $. \\
           As the left part of the last equation is a constant, $\mathcal O(\varepsilon, \frac{h}{\varepsilon})$ is small for $\varepsilon, h$ small, $h \ll \varepsilon $ and $k'(\varepsilon, h) \in \mathbb Z^2$,
           it is necessary that $k'(\varepsilon, h) \in \mathbb Z^2$ is a constant $k' \in \mathbb Z^2$
          and that we have
                                  \begin{equation}
                                        ( d \widetilde{f}_{\beta, 0})(a)
                                        \big ( ( d \widetilde{f}_{\alpha, 0})(a) \big ) ^{-1}( k_0)
                                        =  k' \in  \mathbb Z^2
                                \end{equation}
         for all $k_0 \in \mathbb Z^2$ given. \\
         This means that  $( d \widetilde{f}_{\beta, 0})(a)
                                        \big ( ( d \widetilde{f}_{\alpha, 0})(a) \big ) ^{-1} \in GL(2, \mathbb Z) $.
         ~~\\
         On the other hand, $ d \widetilde{f}_{\beta, 0} \circ \big (  d \widetilde{f}_{\alpha, 0} \big ) ^{-1}$  is uniformly continuous on $B _{\alpha \beta}$
         and the measure of complementary of good values in $B _{\alpha \beta}$ is $\mathcal O(\alpha)$. By taking $\alpha$ small enough and
         with the fact that the group $GL(2, \mathbb Z)$ is discrete, the uniform continuity implies that there is a constant matrix $M_{\beta \alpha } \in GL(2, \mathbb Z)$ such that
                                $$ ( d \widetilde{f}_{\beta, 0})(a) =
                                M_{\beta \alpha } ( d \widetilde{f}_{\alpha, 0})(a) ,$$
         for all good values $a \in B _{\alpha \beta}$.
     \end{proof}

     On intersections, the function $f_{\alpha, 0} \circ (f_{\beta, 0})^{-1}$ is well defined, smooth. Thus, outside a set of measure $\mathcal O(\alpha)$, the value of  $ d( f_{\alpha, 0} \circ (f_{\beta, 0})^{-1})$ is a constant matrix $M_{\alpha \beta} \in GL(2, \mathbb Z)$.
     Note that $ d( f_{\alpha, 0} \circ (f_{\beta, 0})^{-1})$ is independent of $\alpha$.
     By taking $\alpha \rightarrow 0$, we get that $ d( f_{\alpha, 0} \circ (f_{\beta, 0})^{-1})$ is equal to the constant $M_{\alpha \beta}$ outside a set of measure $0$.
     By continuity, we must have $ d( f_{\alpha, 0} \circ (f_{\beta, 0})^{-1})= M_{\alpha \beta}$ everywhere.
     Then we have:

     \begin{theo} \label{df transition}
            There exists a unique constant matrix $M_{\alpha \beta} \in GL(2, \mathbb Z)$ such that
                                $$d\widetilde{f}_{\alpha, 0}= M_{\alpha \beta} d\widetilde{f}_{\beta, 0}$$
            on $B _{\alpha \beta} $.
     \end{theo}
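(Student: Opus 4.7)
The plan is to upgrade the pointwise identity supplied by Theorem \ref{df trans en bon valeur} from the set of good values to the entire overlap $B_{\alpha\beta}$ by a density/continuity argument, exploiting the fact that the leading charts $\widetilde{f}_{\alpha,0}$ and $\widetilde{f}_{\beta,0}$ are globally defined smooth diffeomorphisms on $B_\alpha$ and $B_\beta$ (by the last clause of definition \ref{pseu-cart}).

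First, I would set $\Phi := \widetilde{f}_{\alpha,0} \circ (\widetilde{f}_{\beta,0})^{-1}$ on $\widetilde{f}_{\beta,0}(B_{\alpha\beta})$. Since both $\widetilde{f}_{\alpha,0}$ and $\widetilde{f}_{\beta,0}$ are smooth diffeomorphisms, $\Phi$ is smooth and $d\Phi$ is a continuous $GL(2,\mathbb{R})$-valued function of the base point. Theorem \ref{df trans en bon valeur} then says exactly that, for each fixed auxiliary parameter $\alpha > 0$, the composition $d\widetilde{f}_{\alpha,0}(a)\cdot(d\widetilde{f}_{\beta,0}(a))^{-1}$ equals an integer matrix $M_{\alpha\beta} \in GL(2,\mathbb{Z})$ at every good value $a \in B_{\alpha\beta}$, i.e.\ outside a set of measure $\mathcal{O}(\alpha)$.

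The key observation I would highlight next is that the matrix $M_{\alpha\beta}$ is in fact independent of the threshold $\alpha$. This is because the leading symbols $\widetilde{f}_{\alpha,0}$ and $\widetilde{f}_{\beta,0}$ themselves are required by definition \ref{pseu-cart} to be $\alpha$-independent, so the constant $d\widetilde{f}_{\alpha,0}(a_0)\cdot(d\widetilde{f}_{\beta,0}(a_0))^{-1}$ evaluated at any single good value $a_0$ is a genuine geometric quantity. Letting $\alpha \downarrow 0$, the complement of the good-value set in $B_{\alpha\beta}$ shrinks to a set of Lebesgue measure zero, and hence the equality $d\Phi \equiv M_{\alpha\beta}$ holds on a dense subset of $B_{\alpha\beta}$. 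By continuity of $d\Phi$, the identity extends to all of $B_{\alpha\beta}$, yielding the desired relation $d\widetilde{f}_{\alpha,0} = M_{\alpha\beta}\, d\widetilde{f}_{\beta,0}$ everywhere. Uniqueness of $M_{\alpha\beta}$ is automatic since $d\widetilde{f}_{\beta,0}$ is invertible.

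The only genuine subtlety lies in the $\alpha$-independence step: one must be certain that no hidden $\alpha$-dependence sneaks into the matrix produced by Theorem \ref{df trans en bon valeur}. I would dispatch this by tracing through the asymptotic expansion (\ref{pt khai trien}) in the pseudo-lattice definition, where the leading term $\widetilde{f}_0$ is explicitly required to be globally defined on $B(c,r)$ and independent of the chosen good value. A complementary point worth emphasizing is the discreteness of $GL(2,\mathbb{Z})$ inside $GL(2,\mathbb{R})$: together with uniform continuity of $d\Phi$ on the compact-closure region, this discreteness rules out a locally constant (but globally varying) integer matrix on the disconnected good-value set, and forces one single $M_{\alpha\beta}$ to work throughout.
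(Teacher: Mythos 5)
Your argument is essentially identical to the paper's: both form the smooth map $\widetilde{f}_{\alpha,0}\circ(\widetilde{f}_{\beta,0})^{-1}$ on the overlap, observe from Theorem~\ref{df trans en bon valeur} that its differential is a fixed matrix in $GL(2,\mathbb{Z})$ outside a set of measure $\mathcal{O}(\alpha)$, note that this differential (and hence the matrix) is $\alpha$-independent since the leading terms $\widetilde{f}_{\cdot,0}$ are, send $\alpha\to 0$, and close by continuity. Your remark on the discreteness of $GL(2,\mathbb{Z})$ merely recaps a step already used in the proof of Theorem~\ref{df trans en bon valeur}, so there is no substantive divergence from the paper's route.
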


     \begin{rema} \label{avec a-a}
            For the spectrum of the operator $P_\varepsilon = P + i\varepsilon Q$ considered in the previous section,
            we can see that the result of the theorem can be found independently from classical results.
            Indeed: \\
            We remember from (\ref{pre terme f still}) that the leading terms are
                             \begin{eqnarray}
                                  \widetilde{f}_{\alpha, 0}= \tau_c+ \varphi_\alpha^{-1},
                                      \nonumber \\
                                  \widetilde{f}_{\beta, 0}= \tau_{c'}+ \varphi_\beta^{-1} .
                             \end{eqnarray}

       On the other hand, as an application of action-angle theorem
       (see the next section and the equation (\ref{ctruc aff})), on nonempty $B _{\alpha \beta} $, there exists an affine map $A _{\alpha \beta} \in GA_{\mathbb R}(2, \mathbb Z) $ of the form
                $$A _{\alpha \beta}(\cdot):= M _{\alpha \beta}^{cl}(\cdot)+ C_{\alpha \beta} ,$$
       with a matrix $M _{\alpha \beta}^{cl} \in GL(2, \mathbb Z), C_{\alpha \beta} \in \mathbb R^2$ such that
        \begin{equation} \varphi_\alpha^{-1} \circ \varphi_\beta  = A _{\alpha \beta}.
        \end{equation}
       Therefore the action coordinates $\xi_a$ and $\xi_a'$ of $\Lambda_a$ in two action-angle coordinates, associated with $\varphi_\alpha, \varphi_\beta$ satisfy
                                $$\xi_a=  A _{\alpha \beta}\xi_a' .$$
       We have also the corresponding integral actions on $\Lambda_a$ satisfying the relation
                            $$ S_1= M _{\alpha \beta}^{cl} S_1'. $$
       Two last equations and the equation (\ref{pt action}) give us
                                $$\tau_c= M _{\alpha \beta}^{cl} \tau_c' - C_{\alpha \beta}.$$
       Consequently, we obtain
                         \begin{equation} \widetilde{f}_{\alpha, 0}= (M _{\alpha \beta}^{cl}) \widetilde{f}_{\beta, 0}.
                         \end{equation}
       It means that we recover the result of theorem with help of action-angle theorem. \\
       Moreover, if we denote $M_{\alpha \beta}^{sp}$ the matrix $M_{\alpha \beta}$ defined by the theorem \ref{df transition}, we have:
                             $$M_{\alpha \beta}^{sp}= (M _{\alpha \beta}^{cl}) . $$

       \end{rema}

         \begin{rema}
             For an asymptotic pseudo-lattice, we could try to show a stronger result: there exists a unique constant matrix $M_{\alpha \beta} \in GL(2, \mathbb Z)$ such that
                                $$ \widetilde{f}_{\alpha, 0}= M_{\alpha \beta} \widetilde{f}_{\beta, 0}$$
             on $B _{\alpha \beta} $.

             By the remark \ref{avec a-a}, we have just seen that this result is true for spectrum of the discussed operator $P_\varepsilon$.
             However, we have not succeeded to show it in the case of a general asymptotic pseudo-lattice.

         \end{rema}

         \subsubsection{Definition of the monodromy of an asymptotic pseudo-lattice}
                Let $(\Sigma(\varepsilon, h), U(\varepsilon))$ be an asymptotic pseudo-lattice as in \ref{pseu-cart}.
                Assume that $U$ is covered by a locally finite covering $ \{ B_{\alpha} \} $
                and that $(\Sigma(\varepsilon, h), U(\varepsilon))$ is covered by associated local pseudo-charts on $B_{\alpha} (\varepsilon):= \chi(B_{\alpha})$:
                              $$ \{ (f(\cdot; \varepsilon,h), R(\chi_ a,\varepsilon,h) )
                                \quad a \in  B_{\alpha}  \quad \textrm{is a good value } \} .$$
                We can see $\{(\widetilde{f}_{\alpha,0}( \varepsilon, h),B_\alpha ) \} $ as the charts of $U$ whose transition functions are in the linear group $GL(2, \mathbb Z)$. \\
                Indeed, on each nonempty intersection
                     $ B_{\alpha} \cap B_{\beta} \neq \emptyset$, the theorem \ref{df transition} states that
                there exists a unique linear map $M_{\alpha \beta} \in GL(2, \mathbb Z)$ satisfying
                    \begin{equation}
                    d(\widetilde{f}_{\alpha, 0}) = M_{\alpha \beta} d(\widetilde{f}_{\beta, 0}) \qquad \textrm{ or} \qquad
                    d \big ( \widetilde{f}_{\alpha, 0} \circ (\widetilde{f}_{\beta, 0})^{-1} \big )= M_{\alpha \beta}.
                    \end{equation}

                 The family $\{M_{\alpha \beta}\} $ defines a $1$-cocycle, denoted by $\mathcal M$, in the \v{C}ech cohomology with value in the linear group $GL(2, \mathbb Z)$.

                 \begin{defi} \label{monodromie pseu}
                            The class $[\mathcal M] \in \check{H}^1(U(\varepsilon),GL(2, \mathbb Z) )$ is called the (linear) monodromy of the asymptotic pseudo-lattice
                            $(\Sigma(\varepsilon, h), U(\varepsilon))$.
                 \end{defi}

                 As with quantum monodromy, we can of course associate this cocycle class with an isomorphism class of bundles (bundle up to isomorphism) on $U(\varepsilon)$ with structure of the group $GL(2, \mathbb Z)$ and fiber $\mathbb Z^2$. The transition functions between two adjacent trivializations of the bundle are $\{M_{\alpha \beta}\} $.

                 Let $\mathcal M$  is some cocycle associated with trivialization of the bundle and let $\gamma(\varepsilon)$ be a closed loop, contained in $U(\varepsilon)$.
                 We define
                                $$\mu(\gamma(\varepsilon))=
                                    M_{1,N} \circ M_{N,N-1} \circ \cdots M_{3,2} \circ  M_{2,1},$$
                 where $M_{i,j}$ denotes the corresponding transition function to each pair of nonempty intersection $(B_i(\varepsilon), B_j(\varepsilon)) $, defined as in the theorem (\ref{df transition}),
                 here $(B_1(\varepsilon), \cdots, B_N(\varepsilon))$ is a numbered finite covering of $\gamma(\varepsilon)$ in $U(\varepsilon)$ with $B_i \cap B_{i+1} \neq \emptyset $.

                 The non triviality of $[\mathcal M ]$ is equivalent to that of a group morphism, denoted also by $\mu$, defined as follows
                 \begin{eqnarray}
                                \mu : \pi_1(U(\varepsilon)) & \rightarrow &  GL(2, \mathbb Z)/ \{ \sim\}
                             \nonumber \\
                                   \gamma (\varepsilon)  & \mapsto &  \mu(\gamma(\varepsilon)) ,
                    \end{eqnarray}
                 where $ \{ \sim\}$ denote the modulo conjugation. \\
                 We call $\mu$ the representation of the monodromy $[\mathcal M]$.

   \subsection{Linear Spectral Monodromy}
        We consider again the operator  $P_\varepsilon= P+ i \varepsilon Q$, in the case $ \{p,q\}= 0 $ that we discussed in section \ref{sec op comm}.

        Let $U$ be a subset of regular values $U_r$ of the map $(p,q)$ with compact closure and denote $U(\varepsilon)= \chi(U)$. \\
        We further assume that the spectrum of $P_\varepsilon$ is discrete in $U(\varepsilon)$. Then it is clear that $(\sigma(P_\varepsilon), U(\varepsilon) )$ is an asymptotic pseudo-lattice, adapted to the definition \ref{pseu-cart}.
        We can therefore define the monodromy of $P_\varepsilon$ as the monodromy of the asymptotic pseudo-lattice $(\sigma(P_\varepsilon), U(\varepsilon) )$ according to the definition  \ref{monodromie pseu}.
        We call it the (linear) spectral monodromy, denoted by
                $[\mathcal M _{sp}] \in \check{H}^1(U(\varepsilon),GL(2, \mathbb Z))$.

        \begin{defi} \label{monodromie spectrale}
               For $\varepsilon, h > 0$ small such that $h \ll \varepsilon \leq h^\delta, 0< \delta< 1$,
               the class $[\mathcal M _{sp}] \in \check{H}^1(U(\varepsilon),GL(2, \mathbb Z) )$ is called the spectral monodromy of the operator $P_\varepsilon$ on the domain $U(\varepsilon)$.
        \end{defi}

   \section{Relationship between the Spectral Monodromy and the Classical Monodromy }

   We will also make the link with classical monodromy that illuminates the existence of linear spectral monodromy.
   First, we recall the classical monodromy that is given by J. Duistermaat in the article \cite{Duis80}.

   \subsection{Classical Monodromy}
        Let $ (M, \omega) $ be a symplectic manifold of dimension $ 2n $.
        \begin{defi}
         A completely integrable system on $M$ is given $n$ real-valued functions $f_1, \dots, f_n$
         in $ C^\infty(M)$ in involution whose differentials are almost everywhere linearly independent.  \\
         In this case, the map $ F=(f_1, \dots, f_n ): M \rightarrow \mathbb R^n $ is  called \textbf{ \textit{momentum map}}.
        \end{defi}

        A point $m \in M $ is regular for the momentum map $F$ if its differential $dF(m)$ has maximal rank $n$.
        For $c \in \mathbb R ^n$, we say that $\Lambda_c $ is a leaf of $F$ if it is a connected component of $F^{-1}(c)$.
        And moreover this leaf is regular if all its points are regular point for $ F $.

        The "action-angle theorem" of Liouville, Mineur and Arnold
          (see \cite{Mineur48}, \cite{Hofer95}, \cite{Audin01}, \cite{Cushman97}) says that if $\Lambda_c $ is compact,
          then there exists local "action-angle coordinates" in a neighborhood of $\Lambda_c $. However, we have maybe no global existence of these action-angle coordinates.
          A obstruction of that global existence is a geometrical invariant, called monodromy, given the first time in 1980 by J. Duistermaat in the article \cite{Duis80}.
          For more on this monodromy, we can also see \cite{Vu-Ngoc01}, \cite{Vu-Ngoc06}.

  \begin{theo}[Action-Angle theorem]  \label{A-A}
    Let $F=( f_1,\dots , f_n )$ be completely integrable system and $\Lambda_c$ be a compact regular sheet of $F$. Then there exists a neighborhood $\Omega$ of $\Lambda_c$ in $M$, a small open disk $D$  with center $c$ in $ \mathbb R^n$ and a symplectomorphism $\Psi: \mathbb T ^n \times D \rightarrow \Omega$ such that:
        \begin{enumerate}
            \item $\Psi (\mathbb T ^n \times \{c \})= \Lambda_c $.
            \item $\Omega$ is saturated, i.e. all leaves that pass through a point of $\Omega$ are tori,
                included in $\Omega$.
            \item We have $$F\circ \Psi(x, \xi)= \varphi(\xi)$$ for all $x=(x_1, \dots, x_n) \in \mathbb T ^n $, all $\xi= ( \xi _1, \dots \xi_ n ) \in D$ and here $\varphi: D \rightarrow \varphi(D)$ in a (local) diffeomorphism with $\varphi(c)= c$. ~~\\
                In particular, the flow of all $f_i$ in $\Omega$ is complete. ~~\\
                On $\Omega$, $( x, \xi)= \Psi^{-1}$ and we say that $x \in \mathbb T ^n $ are (local) angle variables and $\xi \in D \subset \mathbb R^n $ are (local) action variables.
        \end{enumerate}
  \end{theo}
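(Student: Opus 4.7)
The plan is to prove the Action–Angle theorem in three stages: first identify $\Lambda_c$ as a torus together with a trivialization of the nearby foliation, then construct the action coordinates $\xi$, and finally the angle coordinates $x$. The starting point is that by the involutivity hypothesis $\{f_i,f_j\}=0$ the Hamiltonian vector fields $H_{f_1},\dots,H_{f_n}$ commute, are tangent to all level sets of $F$, and are linearly independent on $\Lambda_c$ by regularity. Their joint flow defines a local $\mathbb R^n$-action on $M$, which (using compactness of $\Lambda_c$ and a flow-box argument) extends to a genuine $\mathbb R^n$-action on a saturated neighborhood $\Omega$ of $\Lambda_c$.

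First I would show that on $\Lambda_c$ the orbit through any point is open (by linear independence of the $H_{f_i}$) and closed (by compactness), hence equal to $\Lambda_c$ by connectedness. Thus $\Lambda_c \cong \mathbb R^n/\Gamma$ for its isotropy subgroup $\Gamma$, and compactness forces $\Gamma$ to be a lattice of rank $n$, giving $\Lambda_c \cong \mathbb T^n$. A smooth choice of basis of $\Gamma(c')$ for $c'$ near $c$, together with a local transversal section, produces an initial diffeomorphism $\Psi_0:\mathbb T^n\times D\to\Omega$ satisfying item (2) and $F\circ\Psi_0(x,\xi')=F\circ\Psi_0(c,\xi')$, but without the symplectic normalization.

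Next I would construct the action coordinates. Fix $n$ smooth families of fundamental cycles $\gamma_1(\xi'),\dots,\gamma_n(\xi')$ on the tori, choose locally a primitive $\alpha$ of $\omega$ on $\Omega$ (which exists after possibly shrinking, since $\Omega$ retracts onto $\mathbb T^n$ and obstructions are handled by $\gamma_j$-integrals themselves), and set
\[
\xi_j := \frac{1}{2\pi}\oint_{\gamma_j(\xi')}\alpha, \qquad j=1,\dots,n.
\]
The key computation is that because the tori are Lagrangian, these integrals depend only on $F$, not on the choice of representative cycle in its homology class. Differentiating under the integral sign and applying Stokes' theorem on the cylinder swept out by deforming $\gamma_j$ along the flow of $H_{f_i}$ yields both that $\xi=\varphi^{-1}\circ F$ for a local diffeomorphism $\varphi$ (so $\xi$ are new coordinates transverse to the tori), and — more importantly — that the Hamiltonian flow of $\xi_j$ is periodic of period $2\pi$ and has closed orbit homologous to $\gamma_j$. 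Finally the angle coordinates are defined by choosing a Lagrangian section $\sigma$ of the torus bundle (its existence uses that $\alpha$ can be averaged over the torus action to become invariant) and declaring $x_j$ to be the time parameter, mod $2\pi$, along the flow of $H_{\xi_j}$ starting at $\sigma$. Symplecticity $\Psi^*\omega=\sum dx_j\wedge d\xi_j$ then follows: Lagrangianness of the fibers gives $\omega(\partial_{x_j},\partial_{x_k})=0$, the duality $H_{\xi_j}=\partial_{x_j}$ gives the cross terms $\omega(\partial_{x_j},\partial_{\xi_k})=\delta_{jk}$, and Lagrangianness of $\sigma$ gives $\omega(\partial_{\xi_j},\partial_{\xi_k})=0$.

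The main obstacle is the action construction in the middle step: one must show simultaneously that the period integrals $\xi_j$ are smooth functions of $F$, that they are functionally independent, and — the delicate point — that each $H_{\xi_j}$ has period exactly $2\pi$. The first two points reduce to the Lagrangian nature of the fibers via Stokes; the periodicity is the heart of the matter, as it is what lets the $\xi_j$ serve as generators of a $\mathbb T^n$-action and makes $\Psi$ a well-defined diffeomorphism from $\mathbb T^n\times D$. Once periodicity is in hand, the remaining identifications and the verification of the symplectic normal form are standard.
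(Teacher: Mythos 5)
The paper does not prove Theorem~\ref{A-A}: it is stated as the classical Liouville--Mineur--Arnold theorem and referred to the standard references (\cite{Mineur48}, \cite{Hofer95}, \cite{Audin01}, \cite{Cushman97}) without proof. Your outline is the canonical argument found in those sources, and the logical skeleton is sound: commuting $H_{f_i}$'s give a local $\mathbb R^n$-action, compactness and regularity identify $\Lambda_c$ with $\mathbb R^n/\Gamma\cong\mathbb T^n$, period integrals $\xi_j=\frac{1}{2\pi}\oint_{\gamma_j}\alpha$ furnish action coordinates, and the flow of $H_{\xi_j}$ defines the conjugate angles.

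Two places deserve slightly more care than your sketch gives them. First, in the action step you need the period lattice $\Gamma(c')\subset\mathbb R^n$ to vary smoothly with $c'$ before you can even pose the period integrals as smooth functions; this follows from the implicit function theorem applied to the equation $\varphi^\tau_F(m)=m$ along a local section, but it is a genuine step, not a freebie. Once you have it, the Mineur formula $d\xi_j=\frac{1}{2\pi}\sum_i T^i_j(F)\,df_i$ (your cylinder/Stokes computation) simultaneously gives smoothness, functional independence (since $(T^i_j)$ is invertible), and the period-$2\pi$ property of $H_{\xi_j}$; it is worth stating this as one identity rather than three separate consequences, since they are really the same computation. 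Second, the existence of a Lagrangian section $\sigma$ is not quite "averaging $\alpha$": averaging gives a $\mathbb T^n$-invariant primitive $\alpha$, and a Lagrangian section is then obtained by a further correction (e.g.\ by modifying an arbitrary section using a generating function, or by a Moser-type argument). You flag periodicity as the heart of the matter, which is right; the Lagrangian-section step is the other place where a blind write-up tends to wave its hands. Filling those two gaps would turn your sketch into a complete proof matching what \cite{Audin01} or \cite{Cushman97} present.
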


  ~~\\
   Recall that $c \in \mathbb R^n$ is a regular value of $F$ if all points of the fiber $F^{-1}(c)$ are regular points for $F$.
   Let $U_r \subseteq \mathbb R^n$ be the set of regular values of $F$.
   We assume moreover that $ F $ is proper (so that the fibers are compact), $B_r $ and fibers $ F ^ {-1} (c) $ are connected.
   In this case, we can apply the action-angle theorem at each point of the set of regular leaves $F^{-1}(B_r) \subseteq M $.
   Moreover,  there exists in general a integer structure on the space of regular leaves (see \cite{Duis80}, \cite{Vu-Ngoc06}):
   let $U_ \alpha$,$U_\beta$ be any two small enough open sets in $U_r$ with nonempty intersections such that there are action-angle coordinates on $F^{-1}(U_\alpha) \subset M $
   and $F^{-1}(U_\beta) \subset M $. With notation as in the previous theorem, then on $D_\alpha \cap D_\beta \neq \emptyset $, the function $\varphi_{\alpha \beta}:= \varphi_\alpha ^ {-1} \circ \varphi_\beta$ is a affine map: for $\xi \in D_\alpha \cap D_\beta $,
     \begin{equation}     \label{ctruc aff}
      \varphi_{\alpha \beta}(\xi)= A_{\alpha \beta} \cdot \xi= M_{\alpha \beta} \cdot \xi+ C_{\alpha \beta},
     \end{equation}
    where $M_{\alpha \beta}$ is a integer constant matrix of $GL(n, \mathbb Z)$ and $ C_{\alpha \beta} \in \mathbb R^n$ is a constant.

    On the other hand, for all $c \in U_r$, on $\Lambda_c$, as the flow of each $f_i$  is complete , then the joint flow of $F$, denoted by  $\varphi^t_F$ defines a locally free group action $(\mathbb R^n, +)$, $$ \varphi^t_F: \mathbb R^n \times \Lambda_c \rightarrow \Lambda_c  .$$
   We define the stabiliser of torus $\Lambda_c$, denoted by $ S_c$, which is defined independently of the choice of $m$ in $\Lambda_c$:
                                    \begin{equation} \label{stabiliseur}
                                            S_c = \{ \tau \in \mathbb R^n: \varphi^ \tau_F(m)= m \}.
                                    \end{equation}
   ~~\\
   It is know that $S_c$ is a discrete subgroup of rank $n$ of $\mathbb R^n$ (so isomorphic to $\mathbb Z^n$) and the set of all these stabilizers forms a bundle, called the period bundle over $U_r$,  $ \mathscr F: S_c \rightarrow c \in U_r$. It is locally trivial by the action-angle theorem, but can be globally nontrivial. \\
   Moreover, with notation as in action-angle theorem, for all $b \in \varphi(D) \subseteq B_r$ near $c$, a basic of $S_{b}$, denoted by $ (\tau)(b) $ is given by the formula:
        $$(\tau)(b)= (\tau^{(1)}(b), \ldots,\tau^{(n)}(b) )= [ d (\varphi^{-1})(b)]  .$$
   Then, from previous relation and by noting the equation (\ref{ctruc aff}), we obtain that the transition function between two trivialization of the bundle $\mathscr F $ are $ \{ {}^tM_{\alpha \beta}^{-1} \} $ in $GL(n, \mathbb Z)$- the integer linear group.

   On the other hand, by noting that a $n-$tuple $\tau=( \tau_1,\ldots, \tau_n) \in \mathbb R^n$ is such that the Hamiltonian $ \tau_1 f_1 + \cdots + \tau_n  f_n$
   admits on the torus $\Lambda_c$ a periodic flow of period $1$ if and only if $\tau \in S_c$, then we get so an isomorphism between the stabiliser $S_c$ and the homology group $H_1(\Lambda_c, \mathbb Z)$:
                                        $$ S_c \simeq H_1(\Lambda_c, \mathbb Z) .$$
   Then the bundle $\mathscr F $ can be identified with the homology bundle
    $H_1(\Lambda_c, \mathbb Z) \rightarrow c \in B_r$.
    The possible nontriviality of this bundle is called (linear classical) monodromy of completely integrable system $F$.
    The troviality of this monodromy is equivalent to the global existence of action variables on the space of regular sheets.

\subsection{Relationship} \label{relation avec classique}

        We recall that the (linear) classical monodromy is given by J.Duistermaat
                          \cite{Duis80} (presented in the previous section)
        is defined as a bundle $H_1(\Lambda_c, \mathbb Z) \rightarrow c \in U $, associated with a cocycle, denoted by $[\mathcal M _{cl} ]$ in $\check{H}^1(U,GL(2, \mathbb Z) ) $ of transition functions:
                         $$\{ {}^t ( M_{\alpha \beta}^{cl})^{-1}
                         = {}^t (d(\varphi_\alpha^{-1} \circ \varphi_\beta ) )^{-1} \}.$$

        The remark (\ref{avec a-a}) gives us thus the following relationship between two monodromy types.

        \begin{theo} \label{dinh ly relation}
                The linear spectral monodromy is the adjoint of the linear classical monodromy
                                    $$ [\mathcal M _{sp} ]= {}^t[\mathcal M _{cl} ]^{-1} . $$
                In other words,if the corresponding representations of monodromy of
                            $ [\mathcal M _{sp}]$ and $[\mathcal M _{cl}]$ are
                                \begin{eqnarray}
                                    \mu^{sp} : \pi_1(U(\varepsilon)) & \rightarrow &  GL(2, \mathbb Z)/ \{ \sim\}
                                                 \nonumber \\
                                    \mu^{cl} : \pi_1(U) & \rightarrow &  GL(2, \mathbb Z)/ \{ \sim\},
                                \end{eqnarray}
                then $\mu^{sp}= {}^t(\mu^{cl})^{-1} $.
        \end{theo}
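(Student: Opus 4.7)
The plan is to exploit Remark \ref{avec a-a} in full, which already identifies the cocycle representing the spectral monodromy with the one coming from the action-angle coordinates, and then compare with the definition of the classical monodromy bundle. Since both monodromies are defined as classes in $\check{H}^1(U,GL(2,\mathbb Z))$ represented by transition data, the proof reduces to a direct cocycle-level identification, and the geometric content has essentially already been established in the construction of the pseudo-chart $\widetilde{f}$.

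First I would fix two overlapping open balls $B_\alpha, B_\beta \subset U_r$ on which action-angle charts $\varphi_\alpha, \varphi_\beta$ are defined, and recall from (\ref{pre terme f still}) that on each such ball the leading term of the local pseudo-chart reads
\begin{equation*}
\widetilde{f}_{\alpha,0} = \tau_{c} + \varphi_\alpha^{-1}, \qquad \widetilde{f}_{\beta,0} = \tau_{c'} + \varphi_\beta^{-1}.
\end{equation*}
The action-angle theorem \ref{A-A} together with (\ref{ctruc aff}) then gives $\varphi_\alpha^{-1}\circ\varphi_\beta = A_{\alpha\beta} = M_{\alpha\beta}^{cl}(\cdot) + C_{\alpha\beta}$ with $M_{\alpha\beta}^{cl}\in GL(2,\mathbb Z)$. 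Differentiating and recalling that $\tau_c,\tau_{c'}$ are locally constant, I obtain $d\widetilde{f}_{\alpha,0} = M_{\alpha\beta}^{cl}\, d\widetilde{f}_{\beta,0}$ on $B_{\alpha\beta}$. By the uniqueness assertion of Theorem \ref{df transition}, this forces $M_{\alpha\beta}^{sp} = M_{\alpha\beta}^{cl}$ as elements of $GL(2,\mathbb Z)$, so that as $1$-cocycles
\begin{equation*}
[\mathcal M_{sp}] = \bigl[\{M_{\alpha\beta}^{cl}\}\bigr] \in \check{H}^1(U(\varepsilon), GL(2,\mathbb Z)).
\end{equation*}

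Next I would match this against the classical monodromy. As recalled in the preceding section, the period bundle $\mathscr{F}$ has local frames given by columns of $d(\varphi_\alpha^{-1})$, so that its transition cocycle is $\{{}^t(M_{\alpha\beta}^{cl})^{-1}\}$; hence $[\mathcal M_{cl}] = [\{{}^t(M_{\alpha\beta}^{cl})^{-1}\}]$. Combining the two identifications yields termwise
\begin{equation*}
M_{\alpha\beta}^{sp} = M_{\alpha\beta}^{cl} = {}^t\bigl({}^t(M_{\alpha\beta}^{cl})^{-1}\bigr)^{-1},
\end{equation*}
i.e.\ $[\mathcal M_{sp}] = {}^t[\mathcal M_{cl}]^{-1}$. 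Passing to the holonomy representations $\mu^{sp},\mu^{cl}:\pi_1(U(\varepsilon))\to GL(2,\mathbb Z)/\!\!\sim$ (note $\pi_1(U(\varepsilon))\cong \pi_1(U)$ via $\chi$, since $\chi$ is a diffeomorphism onto its image), this is precisely $\mu^{sp} = {}^t(\mu^{cl})^{-1}$.

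The only genuine subtlety (and the most likely place to lose a reader) is the careful bookkeeping of how the period bundle's transition matrices come with a transpose-inverse relative to the action chart transitions $M_{\alpha\beta}^{cl}$; this is standard but must be stated cleanly to avoid a sign/transpose error. A minor auxiliary point is verifying that the base-point shift by $\chi$ does not alter the cohomology class, which follows because $\chi$ is a global diffeomorphism $U\to U(\varepsilon)$ sending the cover $\{B_\alpha\}$ to $\{B_\alpha(\varepsilon)\}$ and identifies the cocycles tautologically. Once these points are spelled out, the theorem follows directly from Remark \ref{avec a-a} and the definitions.
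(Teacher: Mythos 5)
Your proof is correct and follows essentially the same route as the paper: the paper's argument is precisely Remark \ref{avec a-a} (giving $M_{\alpha\beta}^{sp}=M_{\alpha\beta}^{cl}$ via $\widetilde{f}_{\alpha,0}=\tau_c+\varphi_\alpha^{-1}$, the affine transition (\ref{ctruc aff}), and uniqueness in Theorem \ref{df transition}) combined with the stated period-bundle cocycle $\{{}^t(M_{\alpha\beta}^{cl})^{-1}\}$ for $[\mathcal M_{cl}]$. The only thing you add is the explicit note that $\chi$ identifies $\pi_1(U(\varepsilon))\cong\pi_1(U)$ and preserves the covers, a small point the paper leaves implicit.
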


        \begin{rema}
                A particular case happens when $[P,Q]=0$, this implies $\{p,q\}= 0$. We can have two spectral monodromy types for $P_\varepsilon$: the affine spectral monodromy, defined in section 2 and the linear spectral monodromy, defined in this section. It is obvious that the linear spectral monodromy is the linear part of the affine spectral monodromy.
                ~~\\
                Note also that in this case, by definition, the affine spectral monodromy is equal to the quantum monodromy and it is known from a result of S. Vu Ngoc \cite{Vu-Ngoc99}
                that the latter monodromy is the adjoint of classical monodromy. That gives once again the result of previous theorem in the integrable quantum case.
        \end{rema}

\bigskip

~\\

{\sc Q. S. Phan}\\
Universit\'e d'Aix-Marseille\\
CPT-CNRS, Luminy, Case 907\\
13288 Marseille, France\\
E-mail: quang-sang.phan@univ-amu.fr\\


\begin{thebibliography} {[10]}
\frenchspacing \baselineskip=12 pt plus 1pt minus 1pt




\bibitem{Ali85} {\sc M. K. Ali}, {\em The quantum normal form and its equivalents}, J. Math. Phys. {\bf 26}
    (1985), no.10, p. 2565-2572.


\bibitem{Arnold67} {\sc V. I. Arnold}, {\em On a characteristic class entering into conditions of
              quantization}, Funkcional. Anal. i Prilo\v zen. {\bf 1} (1967), p. 1-14.



\bibitem {Audin01}  {\sc  M. Audin}, {\em Les syst\`{e}mes hamiltoniens et leur int\'{e}grabilit\'{e}}, Cours
    Specialis\'{e}s  [Specialized  Courses], vol. 8, Soci\'{e}t\'{e} Math\'{e}matique de France, Paris, 2001.



\bibitem {Bambusi99}  {\sc D. Bambusi, S. Graffi and T. Paul}, {\em  Normal forms and quantization
    formula}, Comm. Math. Phys. {\bf 207} (1999), no. 1, p. 173-195.

\bibitem {Havl08}  {\sc  J. Blank, P. Exner and M. Havl{\'{\i}}{\v{c}}ek}, {\em Hilbert space operators
    in quantum physics}, second ed., Theoretical and Mathematical Physics, Springer, New York, 2008.


\bibitem {Bost86} {\sc J. B. Bost}, {\em Tores invariants des systemes dynamiques hamiltoniens}
    (d'apr\`{e}s Kolmogorov, Arnold, Moser, Russmann, Zehnder, Herman,
    Poschel, : : :), Asterisque (1986), no. 133-134, p. 113-157, Seminar Bourbaki, Vol. 1984/85.

\bibitem {Broer91} {\sc H. W. Broer and G. B. Huitema}, {\em  A proof of the isoenergetic KAM-theorem
    from the "ordinary" one}, J. Differential Equations {\bf 90} (1991), no. 1, p. 52-60.

\bibitem {Broer10} {\sc H. Boer}, {\em Do Diophantine vectors form a Cantor bouquet?}, J. Difference Equ.
    Appl. {\bf 16} (2010), no. 5-6, p. 433-434.

\bibitem {Broer07} {\sc H. Broer, R. Cushman, F. Fass\`{o} and F. Takens}, {\em Geometry of KAM tori for
    nearly integrable Hamiltonian systemsh}, Ergodic Theory Dynam. Systems  {\bf 27} (2007), no. 3, p. 725-741.



\bibitem {Burns01} {\sc D. Burns and R. Hind}, {\em Symplectic geometry and the uniqueness of
    Grauert tubes}, Geom. Funct. Anal. {\bf 11} (2001), no. 1, p. 1-10.

\bibitem {Cappel94} {\sc S. E. Cappell, R. Lee and E. Y. Mimmer}, {\em On the
    Maslov index},  Comm. Pure Appl. Math. {\bf 47} (1994), no. 2, p. 121-186.


\bibitem {Charbonnel88} {\sc A. M. Charbonnel}, {\em Comportement semi-classique du spectre
    conjoint  d'op\'{e}rateurs pseudo diff\'{e}rentiels qui commutent},  Asymptotic Anal. {\bf 1} (1988), no. 3, p. 227-261.

\bibitem {Charles08} {\sc L. Charles and S. V{\~u} Ng{\d{o}}c}, {\em Spectral asymptotics
    via the semiclassical Birkhoff normal form}, Duke Math. J. {\bf 143} (2008), no. 3, p. 463-511.

\bibitem {Cushman97} {\sc R. H. Cushman nad  L. M. Bates}, {\em Global aspects of classical integrable
    systems},  Birkhäuser Verlag, Basel, 1997.



\bibitem {Davies95} {\sc E. B. Davies}, {\em Spectral theory and differential operators}, Cambridge
    Studies in Advanced Mathematics, vol. 42, Cambridge University Press, Cambridge, 1995.

\bibitem {Delshams96} {\sc A. Delshams and P. Guti\'{e}rrez }, {\em “Effective stability and KAM
    theory}, J. Differential Equations {\bf 128} (1996), no. 2, p. 415-490.


\bibitem {Dimas99} {\sc M. Dimassi and J. Sj\"{o}strand}, {\em  Spectral asymptotics in the
    semi-classical  limit},  London Mathematical Society Lecture Note Series, vol. 268, Cambridge University Press, Cambridge, 1999.


\bibitem {Duis80} {\sc  J. J. Duitermaat}, {\em  On global action-angle coordinates},  Comm. Pure
    Appl. Math. {\bf 33} (1980), no. 6, p. 687-706.



\bibitem {Eckhardt86}{\sc B. Eckhardt}, {\em Birkhoff-Gustavson normal form in classical and
    quantum mechanics}, J. Phys. A  {\bf  19} (1986), no. 15, p. 2961-2972.


\bibitem {Egorov69} {\sc J. V. Egorov}, {\em The canonical transformations of
    pseudodifferential operators},  Uspehi Mat. Nauk {\bf 24} (1969), no. 5 (149), p. 235-236.

\bibitem {Fedosov96} {\sc B. Fedosov}, {\em Deformation quantization and index theory},  Mathematical
    Topics, vol. 9, Akademie Verlag, Berlin, 1996.



\bibitem {Hitrik06} {\sc M. Hitrik}, {\em Lagrangian tori and spectra for non-selfadjoint operators},
    Séminaire: Équations aux Dérivées Partielles. 2005–2006, Sémin. Équ. Dériv. Partielles, École Polytech., Palaiseau, 2006,
    Based on joint works with J. Sjöstrand and S. V˜u Ngo. c, p. Exp. No. XXIV, 16.


\bibitem {Hitrik04} {\sc M. Hitrik and J. Sj{\"o}strand}, {\em Non-selfadjoint perturbations of
    selfadjoint operators in 2 dimensions. I},  Ann. Henri Poincaré {\bf 5} (2004), no. 1, p. 1-73.


\bibitem {Hitrik05} {\sc M. Hitrik and J. Sj{\"o}strand}, {\em Nonselfadjoint perturbations of selfadjoint
    operators in two dimensions. II. Vanishing averages}, Comm. Partial Differential Equations {\bf 30} (2005), no. 7-9, p.
    1065-1106.


\bibitem {Hitrik08} {\sc M. Hitrik and J. Sj{\"o}strand}, {\em Non-selfadjoint perturbations of selfadjoint
    operators in two dimensions. IIIa. One branching point}, Canad. J. Math. {\bf 60} (2008), no. 3, p. 572-657.


\bibitem {Hitrik07} {\sc M. Hitrik and J. Sj{\"o}strand and S. V{\~u} Ng{\d{o}}c}, {\em  Diophantine tori
    and spectral  asymptotics for nonselfadjoint operators}, Amer. J. Math. {\bf 129} (2007), no. 1, p. 105-182.



\bibitem {Hofer95} {\sc H. Hoffer and E. Zehnder}, {\em Symplectic invariants and Hamiltonian
    dynamics}, The Floer memorial volume, Progr. Math., vol. {\bf 133}, Birkhäuser, Basel, 1995, p. 525-544.



\bibitem {Hor} {\sc L. H\"{o}rmander}, {\em  The analysis of linear partial differential operators.
    I-IV}, Classics in Mathematics, Springer- Verlag, Berlin, 1983-1990, Fourier integral operators, Reprint of the 1994 edition.


\bibitem {Kan06} {\sc S. J. Kan}, {\em  Erratum to the paper: “On rigidity of Grauert tubes over
    homogeneous Riemannian manifolds}, [J. Reine Angew. Math. 577 (2004), 213-233; mr2108219]”, J. Reine Angew. Math. {\bf 596} (2006), p. 235.




\bibitem {Kato95} {\sc T. Kato}, {\em  Perturbation theory for linear operators}, Classics in Mathematics,
    Springer-Verlag, Berlin, 1995, Reprint of the 1980 edition.


\bibitem {Mineur48} {\sc H. Mineur}, {\em  Quelques propri\'{e}t\'{e}s g\'{e}n\'{e}rales des \'{e}quations de la
    m\'{e}canique}, Bull. Astr. (2) {\bf 13} (1948), p. 309-328.


\bibitem {Moyal49} {\sc J. E. Moyal}, {\em  Quantum mechanics as a statistical theory},
    Proc. Cambridge Philos. Soc. {\bf 45} (1949), p. 99-124.

\bibitem {Oliveira09} {\sc C. R. De Oliveira}, {\em  Intermediate spectral
    theory and quantum dynamics}, Progress in Mathematical Physics, vol. 54, Birkhäuser Verlag, Basel,
    2009.

\bibitem {Popov00} {\sc G. Popov}, {\em  Invariant tori, effective stability, and quasimodes with
    exponentially small error terms. I-II. Birkhoff normal forms},  Ann. Henri Poincaré {\bf 1} (2000), no. 2, p. 223-279.


\bibitem {Poschel82} {\sc J. P\"{o}schel}, {\em  Integrability of Hamiltonian systems on Cantor sets},
    Comm. Pure  Appl. Math. {\bf 35} (1982), no. 5, p. 653-696.


\bibitem {RS78} {\sc  M. Reed and B. Simon}, {\em  Methods of modern mathematical physics. IV.
    Analysis of operators},  Academic Press [Harcourt Brace Jovanovich Publishers], New York, 1978.


\bibitem {Robert87} {\sc  D. Robert}, {\em Autour de l’approximation semi-classique},  Progress in
    Mathematics, vol. 68, Birkhäuser Boston Inc., Boston, MA, 1987.


\bibitem {Shubin01} {\sc  M. A. Shubin}, {\em  Pseudodifferential operators and spectral theory},
    second éd., Springer-Verlag, Berlin, 2001, Translated from the 1978 Russian original by Stig I. Andersson.

\bibitem {Sj00.1} {\sc  J.  Sj\"{o}strans}, {\em  Asymptotic distribution of eigenfrequencies  for
    damped wave equations},  Publ. Res. Inst.  Math. Sci. {\bf 36} (2000), no. 5, p. 573-611.


\bibitem {Sj00.2} {\sc  J.  Sj\"{o}strans}, {\em  Asymptotic distribution of eigenfrequencies for damped wave
    equations}, Journées “Équations aux Dérivées Partielles” (La Chapelle sur Erdre, 2000), Univ. Nantes, Nantes, 2000, p. Exp.
    No. XVI, 8.

\bibitem {Sj09} {\sc J.  Sj\"{o}strans }, {\em  Eigenvalue distribution for non-self-adjoint operators with small
    multiplicative  random perturbations}, Ann. Fac. Sci. Toulouse Math. (6) {\bf 18} (2009), no. 4, p. 739-795.

\bibitem {Colin80} {\sc Y. Colin de Veri\`{e}re}, {\em  Spectre conjoint d'op\'{e}rateurs
    pseudo-diff\'{e}rentiels qui commutent. II. Le cas int\'{e}grable},  Math. Z. {\bf 171} (1980), no. 1, p. 51-73.



\bibitem {lectureColin} {\sc Y. Colin de Veri\`{e}re}, {\em  M\'{e}thodes semi-classiques et th\'{e}orie
    spectrale}, Cours de DEA, 2006.


\bibitem {Vu-Ngoc99}{\sc S. V\~{u} Ng\d{o}c}, {\em  Quantum monodromy in integrable systems},
    Comm.  Math. Phys. {\bf 203} (1999), no. 2, p. 465-479.



\bibitem {Vu-Ngoc01} {\sc S. V\~{u} Ng\d{o}c}, {\em  Invariants symplectiques et semi-classiques des
    syst\`{e}mes int\'{e}grables  avec singularit\'{e}s}, S\'{e}minaire: \'{E}quations aux D\'{e}riv\'{e}es Partielles, 2000–2001, S\'{e}min. \'{E}qu. D\'{e}riv. Partielles, \'{E}cole Polytech.,
    Palaiseau, 2001, p. Exp. No. XII, 16.




\bibitem {Vu-Ngoc.S01} {\sc S. V\~{u} Ng\d{o}c}, {\em  Quantum monodromy and Bohr-Sommerfeld
    rules},  Lett. Math. Phys. {\bf 55} (2001), no. 3, p. 205-217, Topological and geometrical methods (Dijon, 2000).





\bibitem {Vu-Ngoc06} {\sc S. V\~{u} Ng\d{o}c}, {\em Syst\'{e}mes int\'{e}grables semi-classiques: du local au
    global, Panoramas et Synth\`{e}ses}, [Panoramas and Syntheses], vol. 22, Soci\'{e}t\'{e} Math\'{e}matique de France, Paris, 2006.


\bibitem {Vu-Ngoc09}{\sc S. V\~{u} Ng\d{o}c}, {\em  Quantum Birkhoff normal forms and semiclassical
    analysis}, Noncommutativity and singularities, Adv. Stud. Pure Math., vol. 55, Math. Soc. Japan, Tokyo, 2009, p. 99-
    116.




\bibitem {Weyl50} {\sc H. Weyl}, {\em  The theory of groups and quantum mechanics}, Library of
    Theoretical Physics, Dover, 1950, Translated from the german edition.




\end{thebibliography}
\end{document}